\definecolor{ForestGreen}{rgb}{0.1333,0.5451,0.1333}
\definecolor{DarkRed}{rgb}{0.8,0,0}
\definecolor{Red}{rgb}{0.9,0,0}
\def\thmt@refnamewithcomma #1#2#3,#4,#5\@nil{%
  \@xa\def\csname\thmt@envname #1utorefname\endcsname{#3}%
  \ifcsname #2refname\endcsname
    \csname #2refname\expandafter\endcsname\expandafter{\thmt@envname}{#3}{#4}%
  \fi
}
\declaretheorem[numberwithin=section,refname={Theorem,Theorems},Refname={Theorem,Theorems}]{theorem}
\declaretheorem[numberlike=theorem,refname={Fact,Facts},Refname={Fact,Facts}]{fact}
\declaretheorem[numberlike=theorem,refname={Lemma,Lemmas},Refname={Lemma,Lemmas}]{lemma}
\declaretheorem[numberlike=theorem,refname={Corollary,Corollaries},Refname={Corollary,Corollaries}]{corollary}
\declaretheorem[name=Theorem,numberlike=theorem,refname={Theorem,Theorems},Refname={Theorem,Theorems}]{proposition} % I decide to avoid using Propositions in this paper
\declaretheorem[numberlike=theorem,refname={Observation,Observations},Refname={Observation,Observations}]{observation}
\declaretheorem[numberlike=theorem,refname={Claim,Claims},Refname={Claim,Claims}]{claim}
\declaretheorem[numberlike=theorem]{definition}
\declaretheorem[numberwithin=section,style=problemstyle,name=Problem,refname={Problem,Problems},Refname={Problem,Problems}]{problem}
\newcommand{\poly}{\operatorname{poly}}
\newcommand{\sssp}{{\sf SSSP}\xspace}
\newcommand{\mssp}{{\sf MSSP}\xspace}
\newcommand{\apsp}{{\sf APSP}\xspace}
\newcommand{\mst}{{\sf MST}\xspace}
\newcommand{\mincut}{{\sf mincut}\xspace}
\newcommand{\ksp}{{\sf SP}\xspace}
\newcommand{\asp}{{\sf ASP}\xspace}
\newcommand{\congest}{\ensuremath{{\sf CONGEST}}\xspace}
\newcommand{\congestb}{\ensuremath{{\sf CONGEST}(B)}\xspace}
\newcommand{\local}{\ensuremath{{\sf LOCAL}}\xspace}
\newcommand{\dist}{\operatorname{\sf dist}}%{\mbox{\sf dist}}
\newcommand{\ecc}{\ensuremath{{\sf D}}\xspace}
\newcommand{\diam}{\ensuremath{{\sf D}}\xspace}
\newcommand{\spdiam}{\ensuremath{{\sf SPDiam}}\xspace}
\newcommand{\spdist}{\ensuremath{{\sf spdist}}\xspace}
\newcommand{\cP}{\mathcal{P}}
\newcommand{\cW}{\mathcal{W}}
\newcommand{\cS}{\mathcal{S}}
\newcommand{\cA}{\mathcal{A}}
\renewcommand{\paragraph}[1]{\medskip\noindent{\bf #1.}\xspace}
\newcommand{\mset}{{\mathcal M}}
\newcommand{\disj}{{\sf disj}\xspace}
\newcommand{\squishlist}{
 \begin{list}{$\bullet$}
  { \setlength{\itemsep}{0pt}
     \setlength{\parsep}{2pt}
     \setlength{\topsep}{2pt}
     \setlength{\partopsep}{0pt}
     \setlength{\leftmargin}{1.5em}
     \setlength{\labelwidth}{1em}
     \setlength{\labelsep}{0.5em} } }
\newcommand{\squishend}{
  \end{list}  }
\def\danupon#1{\marginpar{$\leftarrow$\fbox{D}}\footnote{$\Rightarrow$~{\sf #1 --Danupon}}}
\def\selfnote#1{\marginpar{$\leftarrow$\fbox{D}}\footnote{$\Rightarrow$~{\sf #1 --Selfnote}}}
\def\danupon#1{}
\def\selfnote#1{}
\newcommand{\footlabel}[2]{%
    \addtocounter{footnote}{1}%
    \footnotetext[\thefootnote]{%
        \addtocounter{footnote}{-1}%
        \refstepcounter{footnote}\label{#1}%
        #2%
    }%
    $^{\ref{#1}}$%
}
\newcommand{\footref}[1]{%
    $^{\ref{#1}}$%
}
\title{Distributed Approximation Algorithms for Weighted Shortest Paths}
\date{}
\author{Danupon Nanongkai\thanks{ICERM, Brown University, USA. Work partially done while at University of Vienna, Austria and Nanyang Technological University, Singapore, and while supported in part by the following research grants: Nanyang Technological University grant M58110000, Singapore Ministry of Education (MOE) Academic Research Fund (AcRF) Tier 2 grant MOE2010-T2-2-082, and Singapore MOE  AcRF Tier 1 grant MOE2012-T1-001-094.}
}
\begin{document}

\begin{titlepage}
\pagenumbering{roman}
\maketitle
\begin{abstract}

A distributed network is modeled by a graph having $n$ nodes (processors) and  diameter $D$. We study the time complexity of approximating {\em weighted} (undirected) shortest paths on distributed networks with a $O(\log n)$ {\em bandwidth restriction} on edges (the standard synchronous \congest model). The question whether approximation algorithms help speed up the shortest paths (more precisely distance computation) was raised since at least 2004 by Elkin (SIGACT News 2004). The unweighted case of this problem is well-understood while its weighted counterpart is  fundamental problem in the area of distributed approximation algorithms and remains widely open. We present new algorithms for computing both single-source shortest paths (\sssp) and all-pairs shortest paths (\apsp) in the weighted case. 

Our main result is an algorithm for \sssp. Previous results are the classic $O(n)$-time Bellman-Ford algorithm and an $\tilde O(n^{1/2+1/2k}+D)$-time $(8k\lceil \log (k+1) \rceil -1)$-approximation algorithm, for any integer $k\geq 1$, which follows from the result of Lenzen and Patt-Shamir (STOC 2013). (Note that Lenzen and Patt-Shamir in fact solve a harder problem, and we use $\tilde O(\cdot)$ to hide the $O(\poly\log n)$ term.) We present an $\tilde O(n^{1/2}D^{1/4}+D)$-time $(1+o(1))$-approximation algorithm for \sssp. This algorithm is {\em sublinear-time} as long as $D$ is sublinear, thus yielding a sublinear-time algorithm with almost optimal solution. When $D$ is small, our running time matches the lower bound of $\tilde \Omega(n^{1/2}+D)$ by Das Sarma et al. (SICOMP 2012), which holds even when $D=\Theta(\log n)$, up to a $\poly\log n$ factor. 

%In particular, in the typical case of low-diameter networks (i.e. $D=\tilde O(1)$), which is sometimes of a special interest, our running time is tight.  

As a by-product of our technique, we obtain a simple $\tilde O(n)$-time $(1+o(1))$-approximation algorithm for \apsp, improving the previous $\tilde O(n)$-time $O(1)$-approximation algorithm following from the results of Lenzen and Patt-Shamir. We also prove a matching lower bound. Our techniques also yield an $\tilde O(n^{1/2})$ time algorithm on fully-connected networks, which guarantees an exact solution for \sssp and a $(2+o(1))$-approximate solution for \apsp. All our algorithms rely on two new simple tools: {\em light-weight algorithm for bounded-hop \sssp} and {\em shortest-path diameter reduction via shortcuts}. These tools might be of an independent interest and useful in designing other distributed algorithms. 
\end{abstract}

\newpage
\tableofcontents
%\newpage
%\listoftheorems
\end{titlepage}

\newpage
\pagenumbering{arabic}

\section{Introduction}

It is a fundamental issue to understand the possibilities and limitations of distributed/decentralized computation, i.e., to what degree local information is sufficient to solve global tasks. Many tasks can be solved entirely via local communication, for instance, how many friends of friends one has. Research in the last 30 years has shown that  some classic combinatorial optimization problems such as matching, coloring, dominating set, or approximations thereof can be solved   using small (i.e., polylogarithmic)  local communication. 
%
%For example, a maximal independent set can be computed in time $O(\log n)$ \cite{Luby86}, but not in time $\Omega(\sqrt{\log n / \log\log n})$ \cite{KuhnMW-jacm} ($n$ is the  network size). This lower bound even holds if message sizes are unbounded.
%
However, many important optimization problems  are ``global'' problems from the distributed computation point of view. To count the total number of nodes, to determining the diameter of the system, or to compute a spanning tree, information necessarily must travel to the farthest nodes in a system. If exchanging a  message over a single edge costs one time unit, one needs $\Omega(\diam)$ time units to compute the result, where $\diam$ is the network's diameter. In a model where message size could be unbounded (often known as the \local model), one can simply collect all the information in $O(\diam)$ time (ignoring time for the local computation), and then compute the result. A more realistic model, however, has to take into account the congestion issue and limits the size of a message allowed to be sent in a single communication round to some $B$ bits, where $B$ is typically set to $\log n$. This model is often called synchronous \congest (or $\congest(B)$ if $B\neq \log n$). 
%
%In this model, often called the \congest model, the above algorithmic approach will clearly require time much higher than $\diam$. It is therefore interesting to ask whether the network's diameter can still be computed in time O(D) in the CONGEST model.
%
%Hence, in order to arrive at a realistic problem, we need to introduce {\em communication limits}, i.e., each node can exchange messages with each of its neighbors in each step of a synchronous system, but each message can have at most $B$ bits; typically $B$ is small, say $O(\log n)$. 
%
Time complexity in this model is one of the major studies in distributed computing \cite{peleg}.

Many previous works in this model, including several previous FOCS/STOC papers (e.g. \cite{GarayKP98,PelegR00,Elkin06,DasSarmaHKKNPPW11,LenzenP13}), concern {\em graph problems}. Here, we want to learn some topological properties of a network, such as minimum spanning tree (\mst), minimum cut (\mincut), and distances. 
These problems can be trivially solved in $O(m)$ rounds, where $m$ is the number of edges, by aggregating the whole network into one node. Of course, this is neither interesting nor satisfactory. The holy grail in the area of distributed graph algorithms is to beat this bound and, in many case, obtain a {\em sublinear-time} algorithm whose running time is in the form $\tilde O(n^{1-\epsilon}+\diam)$ for some constant $\epsilon>0$, where $n$ is the number of nodes and $\diam$ is the network's diameter.
For example, through decades of extensive research, we now have an algorithm that can find an \mst in  $\tilde O(n^{1/2}+\diam)$ time \cite{GarayKP98,KuttenP98}, and we know that this running time is tight \cite{PelegR00}. This algorithm serves as a building block for several other sublinear-time algorithms (e.g. \cite{Thurimella97,PritchardT11,GhaffariK13,Nanongkai14_podc,Su14}). 

It is also natural to ask whether we can further improve the running time of existing graph algorithms by mean of {\em approximation}, e.g., if we allow an algorithm to output a spanning tree that is almost, but not, minimum. This question has generated a research in the direction of {\em distributed approximation algorithms} which has become fruitful in the recent years.
% (see Elkin's survey \cite{Elkin04} and recent papers such as \cite{DasSarmaHKKNPPW11,HolzerW12,PelegRT12,GhaffariK13}). 
%
On the negative side, Das Sarma et al. \cite{DasSarmaHKKNPPW11} (building on \cite{PelegR00,Elkin06,KorKP13}) show that \mst and a dozen other problems, including \mincut and computing the distance between two nodes, cannot be computed faster than $\tilde O(n^{1/2}+\diam)$ in the synchronous \congest model even when we allow a large approximation ratio. On the positive side, we start to be able to solve some problems in sublinear time by sacrifying a small approximation factor; e.g., we can $(1+\epsilon)$-approximate \mincut in $\tilde O(n^{1/2}+\diam)$ time \cite{Nanongkai14_podc,Su14} and $(3/2)$-approximate the network's diameter in $O(n^{3/4}+\diam)$ time \cite{HolzerW12,PelegRT12}. 
%time $O(\min n^{1/2}\diam, n/\diam+\diam) = O(n^{3/4}+\diam)$ \cite{HolzerW12,PelegRT12}. 
%
The question whether distributed approximation algorithms can help improving the time complexity of computing shortest paths was raised a decade ago by Elkin \cite{Elkin04}. 
It is surprising that, despite so much progress on other problems in the last decade, the problem of computing shortest paths is still widely open, especially when we want a small approximation guarantee. Prior to our work, sublinear-time algorithms for computing single-source shortest path (\sssp) and linear-time algorithms for computing all-pairs shortest paths (\apsp) have to pay a high approximation factor \cite{LenzenP13}. This paper fills this gap with algorithms having small approximation guarantees.

%However, to compute a spanning tree, even single-bit messages are enough, as one can simply  breadth-first-search the graph in time $O(D)$ and this is optimal \cite{peleg}.

\subsection{The Model}  \label{sec:model}

Consider a network of processors  modeled by an undirected unweighted  $n$-node $m$-edge 
graph $G$, where nodes model the processors and  edges model the {\em bounded-bandwidth} links between the processors. Let $V(G)$ and $E(G)$ denote the set of nodes and edges of $G$, respectively. 
%
%The processors  (henceforth, nodes) communicate  by exchanging messages via the links (henceforth, edges).  The nodes  have limited global knowledge, in particular, each of them has its own local perspective of the network (a.k.a graph), which is confined to its immediate neighborhood.  The nodes may have to compute (cooperatively) some global function of the graph, such as a minimum spanning tree (\mst) or a minimum cut (\mincut), via communicating with each other and running a distributed algorithm designed for the task at hand.
%
%There are several measures to analyze the performance of such algorithms,
%a fundamental one being the running time, defined as the worst-case number of {\em rounds} of distributed communication. This measure naturally gives rise to a  complexity measure of problems, called the {\em time complexity}.
%On each round at most $B$ bits can be sent through each edge in each direction, where $B$ is the bandwidth parameter of the network. The design of efficient algorithms for this model (henceforth, the $B$ model), as well as establishing lower bounds on the time complexity of various fundamental graph problems, has been the subject of an active area of research called (locality-sensitive)  {\em distributed computing} (see \cite{peleg} and references therein.)
%
The processors  (henceforth, nodes) are assumed to have unique IDs in the range of $\{1, \ldots, \poly(n)\}$ and infinite computational power. Each node has limited topological knowledge; in particular, it only knows the IDs of its neighbors and knows {\em no} other topological information (e.g., whether its neighbors are linked by an edge or not). Nodes may also accept some additional inputs as specified by the problem at hand.

For the case of graph problems, the additional input is {\em edge weights}.  Let $w:E(G)\rightarrow \{1, 2, \ldots, \poly(n)\}$ be the edge weight assignment. We refer to network $G$ with weight assignment $w$ as the {\em weighted network}, denoted by $(G, w)$. The weight $w(uv)$ of each edge $uv$ is known only to $u$ and $v$. As commonly done in the literature (e.g., \cite{KhanP08,LotkerPR09,KuttenP98,GarayKP98,GhaffariK13}), we will assume that the maximum weight is $\poly(n)$; so, each edge weight can be sent through an edge (link) in one round.\footnote{We note that, besides needing this assumption to ensure that weights can be encoded by $O(\log n)$ bits, we also need it in the analysis of the running time of our algorithms: most running times of our algorithms are logarithmic of the largest edge weight. This is in the same spirit as, e.g., \cite{LotkerPR09,GhaffariK13,KhanP08}.}

%The processors  (henceforth, nodes) communicate  by exchanging messages via the links (henceforth, edges).  The nodes  have limited global knowledge, in particular, each of them has its own local perspective of the network (a.k.a graph), which is confined to its immediate neighborhood.  The nodes may have to compute (cooperatively) some global function of the graph, such as a minimum spanning tree (\mst) or a minimum cut (\mincut), via communicating with each other and running a distributed algorithm designed for the task at hand. 
%
%There are several measures to analyze the performance of such algorithms, a fundamental one being the running time, defined as the worst-case number of {\em rounds} of distributed communication. This measure naturally gives rise to a  complexity measure of problems, called the {\em time complexity}. On each round at most $B=\log n$ bits can be sent through each edge in each direction. The design of efficient algorithms for this model (henceforth, the $B$ model), as well as establishing lower bounds on the time complexity of various fundamental graph problems, has been the subject of an active area of research called (locality-sensitive)  {\em distributed computing} (see \cite{peleg} and references therein.)

%Despite having only local knowledge, nodes may have to compute (cooperatively) some {\em global} function of the network, such as a minimum spanning tree, minimum cut, and shortest paths, via communicating with each other and running a distributed algorithm designed for the task at hand. 
%%

There are several measures to analyze the performance of such algorithms, a fundamental one being the running time, defined as the worst-case number of {\em rounds} of distributed communication. 
%
%The communication is synchronous, and occurs in discrete pulses, called {\em rounds}. 
%
At the beginning of each round, all nodes wake up simultaneously. Each node $u$ then sends an arbitrary message of $B=\log n$ bits through each edge $uv$, and the message will arrive at node $v$ at the end of the round. 

%(Note that the assumption that the network is synchronous can be easily removed using the standard {\em synchronizer} \cite{peleg}.)

%
%There are several measures to analyze the performance of such algorithms, a fundamental one being the {\em running time}, defined as the worst-case number of rounds of distributed communication. 
%%
%We analyze the running time in terms of $n$ and $m$. Additionally, since global problems that we consider typically require $\Omega(\diam)$ rounds, where $\diam$ is network's diameter, we will consider $\diam$ as another parameter when we analyze the running time. Throughout, we use $\tilde O(\cdot)$ to hide the $O(\poly\log n)$ term. 

We assume that nodes always know the number of the current round. To simplify notations, we will name nodes using their IDs, i.e. we let $V(G)\subseteq \{1, \ldots, \poly(n)\}$. Thus, we use $u\in V(G)$ to represent a node, as well as its ID. The running time is analyzed in terms of number of nodes $n$, number of edge $m$, and $\diam$, the diameter of the network $G$. Since we can compute $n$ and $2$-approximate $\diam$ in $O(\diam)$ time, we will assume that every node knows $n$ and the $2$-approximate value of $\diam$. We say that an event holds {\em with high probability} (w.h.p.) if it holds with probability at least $1-1/n^c$, where $c$ is an arbitrarily large constant.

\subsection{Problems \& Definitions}

%\paragraph{Distance and Diameter}\footnote{For other related terms, such as  girth, center, and peripheral vertices, see \cite{HolzerW12}.}

For any nodes $u$ and $v$, a {\em $u$-$v$ path} $P$ is a path $\langle u=x_0, x_1, \ldots, x_\ell=v\rangle$ where $x_ix_{i+1}\in E(G)$ for all $i$. For any weight assignment $w$, we define the weight or distance of $P$ as $w(P)=\sum_{i=0}^{\ell-1} w(x_ix_{i+1})$. 
Let $\cP_G(u,v)$ denote the set of all $u$-$v$ paths in $G$. We use $\dist_{G, w}(u, v)$ to denote the distance from $u$ to $v$ in $(G,  w)$; i.e., $\dist_{G,  w}(u, v)=\min_{P\in \cP_G(u, v)}  w(P)$. We say that a path $P$ is a {\em shortest $u$-$v$ path} in $(G,  w)$ if $ w(P)=\dist_{G,  w}(u, v)$.
The {\em diameter} of $(G,  w)$ is $D(G,  w)=\max_{u, v} \dist_{G,  w}(u, v)$. 
%The {\em radius} of $(G,  w)$ is $\rad(G,  w)$ $=\min_u$ $\max_v$ $\dist_{G,  w}(u, v)$. 
%
When we want to talk about the properties of the underlying undirected unweighted network $G$, we will drop $ w$ from the notations. Thus, $\dist_G(u, v)$ is the distance between $u$ and $v$ in $G$ and,  $\diam(G)$ is the diameter of $G$. We refer to $\diam(G)$ by ``hop diameter'', or sometimes simply ``diameter'', and $\diam(G, w)$ by ``weighted diameter''. When it is clear from the context, we use $\diam$ to denote $\diam(G)$. We emphasize that, like other papers in the literature, the term $\diam$ which appears in the running time of our algorithms is the diameter of the underlying {\em unweighted} network $G$.

%To avoid confusion, we note the notations we use here. 
%%We use an undirected unweighted graph $G$ to denote the input distributed network. 

%Let us first define the notion of {\em distance}, {\em diameter} and {\em radius} on networks whose weight function could be {\em symmetric} or {\em asymmetric}.
%
\begin{definition}[Single-Source and All-Pairs Shortest Paths (\sssp, \apsp)]\label{def:sssp}\label{def:apsp} 
In the {\em single-source shortest paths problem (\/\sssp)}, we are given a weighted network $(G, w)$ as above and a {\em source} node $s$ (the ID of $s$ is known to every node). We want to find the distance between $s$ and every node $v$ in $(G, w)$, denoted by $\dist_{G, w}(s, v)$. In particular, we want $v$ to know the value of $\dist_{G, w}(s, v)$.
In the all-pairs shortest paths problem (\apsp), we want to find $\dist_{G, w}(u, v)$ for every pair $(u, v)$ of nodes. In particular, we want both $u$ and $v$ to know the value of $\dist_{G, w}(u, v)$.
\end{definition}
%
%\begin{definition}[All-Pairs Shortest Paths (\apsp)]\label{def:apsp} 
%In the all-pairs shortest paths problem (\apsp), we are given a weighted network $(G, w)$ as above and want to, for every pair $(u, v)$ of nodes, find the distance between $u$ and $v$ in $(G, w)$, denoted by $\dist_{G, w}(u, v)$. In particular, we want both $u$ and $v$ to know $\dist_{G, w}(u, v)$.\footnote{We emphasize that we do {\em not} require every node to know the distances between {\em all} pairs of nodes.} 
%%
%\danupon{This was a footnote before: Also note that, while our paper focuses on computing distances between all pairs of nodes, it can be used to compute a routing table in $O(n)$ time as well: Nodes can exchange all distance information with their neighbors in $O(n)$ time. Then, when a node $u$ wants to send a message to node $s$, it simply sends such message to the neighbor $v$ with smallest $\dist_{G, w}(u, v)$.}
%\end{definition}
%%
For any $\alpha$, we say an algorithm $\cA$ is an {\em $\alpha$-approximation} algorithm for \sssp if  it outputs $\widetilde{\dist}_{G, w}$ such that $\dist_{G, w}(s, v) \leq \widetilde{\dist}_{G, w}(s, v)\leq \alpha\dist_{G, w}(s, v)$ for all $v$. 
Similarly, we say that $\cA$ is an $\alpha$-approximation algorithm for \apsp if it outputs $\widetilde{\dist}_{G, w}$ such that $\dist_{G, w}(u, v) \leq \widetilde{\dist}_{G, w}(u, v)\leq \alpha\dist_{G, w}(u, v)$ for all $u$ and $v$. 

\paragraph{Remark} We emphasize that we do {\em not} require every node to know {\em all} distances. Also note that, while our paper focuses on computing distances between nodes, it can be used to find a routing path or compute a routing table as well. For example, after solving \apsp, nodes can exchange all distance information with their neighbors in $O(n)$ time. Then, when a node $u$ wants to send a message to node $v$, it simply sends such message to the neighbor $x$ with smallest $\dist_{G, w}(v, x)$. 
The name shortest paths is inherited from \cite{DasSarmaHKKNPPW11} (see the definition of shortest $s$-$t$ path problem in \cite[Section 2.5]{DasSarmaHKKNPPW11}) and in particular the lower bound in \cite{DasSarmaHKKNPPW11} holds for our problem).

%For any path $P=\langle x_0, x_1, \ldots, x_\ell\rangle$ from node $x_0$ to node $x_\ell$, we let $w(P)$ be the length of this path; i.e., $w(P)=\sum_{i=0}^{\ell-1} w(x_$.

\begin{comment}
\begin{figure}
\small
\begin{tabular}{|c|c|}
  \hline
  % after \\: \hline or \cline{col1-col2} \cline{col3-col4} ...
  $G$ & Distributed network\\
  \small $V(G)$, $n$ & \small Set of nodes in $G$ and $n=|V(G)|$\\
  \small $(G, w)$ & \small Weighted distributed network (Sec~XXX)\\
  \small $\dist_{G, w}(u, v)$ & \small  Distance between $u$ and $v$ in $G$ (Sec~XXX) \\
  \small $D(G, w)$ & \small  Weighted diameter of $(G, w)$ (Sec~XXX)\\
  \small $\spd(G, w)$ & \small Shortest path diameter of $(G, w)$ (Sec~XXX) \\
  \small $\ecc(G, \cW)$ & \small Multi-Weight diameter of $(G, \cW)$ (Sec~XXX)\\
  \hline
  \apsp & \small All-Pairs Shortest Path Problem (Sec~XXX)\\
  \small $K$-\apsp & \small  Bounded-Distance All-Pairs \\
  & \small Shortest Path Problem (Sec~XXX)\\
%  $\cS$-\ksp & $\cS$-Source Shortest Path Problem (Sec~XXX)\\
%  $\cS$-\asp & $\cS$-Source Almost Shortest Path Problem (Sec~XXX)\\
  \small $(\cS, K)$-\ksp & \small Bounded-Distance $\cS$-Source \\
  & \small  Shortest Path Problem (Sec~XXX)\\
  \small $(\cS, K)$-\asp & \small Bounded-Distance $\cS$-Source\\
  & \small Almost Shortest Path Problem (Sec~XXX)\\
  \hline
\end{tabular}
\caption{Notations}\label{fig:notation}
\end{figure}
\end{comment}

\begin{table*}
\centering
\begin{tabular}{ |l|l|l|l|l| }
%\hline
%\multicolumn{3}{ |c| }{Team sheet} \\
\hline
Problems & Topology & References & Time & Approximation \\ 
\hline
\multirow{5}{*}{\sssp} & \multirow{3}{*}{General} & Bellman\&Ford \cite{Bellman58,Ford56} & $\tilde O(n)$ & exact \\
& & Lenzen\&Patt-Shamir \cite{LenzenP13}\footref{christoph}  & $\tilde O(n^{1/2+1/2k}+\diam)$ & $8k\lceil \log (k+1) \rceil -1$ \\
 & & {\bf this paper} & $\tilde O(n^{1/2}\diam^{1/4}+\diam)$ & $1+o(1)$ \\
 & & & ($= \tilde O(n^{3/4}+\diam)$) & \\
% & & this paper & $\tilde O(n^{1/2+1/k}+D)$ & $2k-1+o(1)$ \\
 \cline{2-5}
& \multirow{2}{*}{Fully-Connected} & Baswana\&Sen \cite{BaswanaS07} & $\tilde O(n^{1/k})$ & $2k-1$ \\
%& & Lenzen\&Patt-Shamir \cite{LenzenP13} & $\tilde O(n^{1/2+1/k})$ & $8k\lceil \log (k+1) \rceil -1$ \\
& & {\bf this paper} & $\tilde O(n^{1/2})$ & exact \\
\hline
\multirow{5}{*}{\apsp} & \multirow{3}{*}{General} & Trivial & $O(m)$ & exact \\
%& & Baswana\&Sen \cite{BaswanaS07} & $\tilde O(n^{1+\epsilon})$ & $O(\epsilon^{-1})$ \\
& & Lenzen\&Patt-Shamir \cite{LenzenP13} & $\tilde O(n)$ & $O(1)$ \\
 & & {\bf this paper} & $\tilde O(n)$ & $1+o(1)$ \\
 \cline{2-5}
& \multirow{2}{*}{Fully-Connected} & Baswana\&Sen \cite{BaswanaS07} & $\tilde O(n^{1/k})$ & $2k-1$ \\
%& & Lenzen\&Patt-Shamir \cite{LenzenP13} & $\tilde O(n^{1/2+\epsilon})$ & $O(\epsilon^{-1}\log \epsilon^{-1})$ \\
& & {\bf this paper} & $\tilde O(n^{1/2})$ & $2+o(1)$ \\
\hline
\end{tabular}
\caption{Summary of previous and our results (presented in \Cref{sec:related,sec:results}). Parameter $k\geq 1$ is an integer (note that the time complexities above are sublinear only when $k\geq 2$). Note that this table omits previous running times based on other parameters (such as shortest-path diameter).}\label{table:results summary}
\end{table*}

\subsection{Our Results}\label{sec:results}
Our and previous results are summarized in \Cref{table:results summary} (see \Cref{sec:related} for the details of previous results). As shown in the table, previous algorithms either have large approximation guarantee or large running time. In this paper, we aim at algorithms with both small approximation guarantees and small running time. We consider both \sssp and \apsp and study algorithms on both general networks and fully-connected networks.
Our main result is a sublinear-time $(1+o(1))$-approximation algorithm for \sssp on general graphs:
\begin{theorem}[\sssp on general graph]\label{thm:main sssp}
There is a distributed $(1+o(1))$-approximation algorithm that solves \sssp on any weighted $n$-node network $(G, w)$. It finishes in  $\tilde O(n^{1/2}\diam^{1/4}+\diam)$ time w.h.p.
%, where $\tilde O$ hides the $\poly\log n$ term.
\end{theorem}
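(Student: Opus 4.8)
The plan is to reduce SSSP on $(G,w)$ to a handful of coordinated bounded-hop distance computations together with one SSSP computation on a much smaller ``skeleton'' graph, using exactly the two tools advertised in the abstract. Throughout, fix $\epsilon=\Theta(1/\log n)$, so that a distance estimate spoiled by a factor $(1+\epsilon)$ in each of $O(\log n)$ phases is still a $(1+o(1))$-approximation; since edge weights are at most $\poly(n)$, it also suffices to work within $O(\log n)$ geometric ``distance scales.''

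\emph{Sampling and shortcutting.} Choose a hop parameter $h$ (fixed later) and sample a set $\cS$ of $\tilde O(n/h)$ nodes uniformly at random, adding $s$ to $\cS$. A standard hitting-set argument shows that, w.h.p., on every shortest path of $(G,w)$ every window of $h$ consecutive hops contains a node of $\cS$. Hence adding to $G$ a shortcut edge of the correct weight between every pair of nodes of $\cS$ that lie within $h$ hops of one another preserves all distances and drops the shortest-path diameter to $\tilde O(n/h+h)$: any shortest path reroutes as ``$\le h$ original hops, then $\tilde O(n/h)$ shortcut hops between consecutive sampled nodes, then $\le h$ original hops.'' This is the \emph{shortest-path-diameter reduction via shortcuts}. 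We cannot afford exact shortcut weights, so instead we invoke the \emph{light-weight bounded-hop SSSP} tool from all of $\cS$ simultaneously: after weight scaling, every relevant $h$-hop-bounded distance becomes an integer in an $\tilde O(h)$-size range, so Bellman--Ford updates pipeline and the computation runs in $\tilde O(|\cS|+h)=\tilde O(n/h+h)$ rounds, independently of the edge weights. This yields, at every node $v$, a $(1+\epsilon)$-approximation $\widetilde{\dist}^{\,h}(c,v)$ of the $h$-hop-bounded distance from each $c\in\cS$ to $v$ (and in particular $\widetilde{\dist}^{\,h}(s,v)$). Let $H$ be the graph on $\cS$ whose edge weights are these approximate $h$-hop distances between sampled nodes; then $|V(H)|=\tilde O(n/h)$ and the part of $H$ relevant to distances from $s$ has shortest-path diameter $\tilde O(n/h+h)$.

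\emph{Solving the skeleton and reconstruction.} We compute $\dist_H(s,\cdot)$: if $|V(H)|$ is small enough we upcast all of $H$ along a BFS tree to one node, solve locally, and broadcast back (cost $\tilde O(|E(H)|+D)$); otherwise we recurse, re-applying the sampling/shortcut/light-weight machinery inside $H$ for a bounded number of levels until the residual skeleton is small enough to collect. Either way each node $v$ learns a value $\widetilde{\dist}(s,c)$, a $(1+o(1))$-approximation of $\dist(s,c)$, for every $c\in\cS$, and outputs
\[
\widetilde{\dist}_{G,w}(s,v)\;:=\;\min\Big\{\,\widetilde{\dist}^{\,h}(s,v),\ \min_{c\in\cS}\big(\widetilde{\dist}(s,c)+\widetilde{\dist}^{\,h}(c,v)\big)\Big\}.
\]
The bound $\dist(s,v)\le\widetilde{\dist}_{G,w}(s,v)$ is immediate because every term overestimates a genuine distance and distances obey the triangle inequality. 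For the reverse bound, take a shortest $s$--$v$ path: if it has at most $h$ hops the first term is already a $(1+\epsilon)$-approximation; otherwise let $c$ be the last sampled node on it, so the prefix $s\leadsto c$ is itself a shortest path (approximated by $\widetilde{\dist}(s,c)$) and the suffix $c\leadsto v$ has at most $h$ hops (approximated by $\widetilde{\dist}^{\,h}(c,v)$), and the two estimates sum to the required bound.

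\emph{Choosing $h$; the main obstacle.} Summing up, the running time has the shape $\tilde O\big(n/h+h+(\text{skeleton-SSSP cost})+D\big)$, and the task is to balance $h$ against the cost of handling the skeleton so that the total is $\tilde O(n^{1/2}D^{1/4}+D)$, with the hop diameter $D$ entering precisely through the skeleton step. I expect two parts to be genuinely delicate. First, establishing the light-weight bounded-hop SSSP bound itself: showing that the scaling really does confine all relevant Bellman--Ford values to an $\tilde O(h)$-size integer range so that updates from $|\cS|$ sources pipeline into $\tilde O(|\cS|+h)$ rounds rather than the naive $\tilde O(|\cS|\cdot h)$. Second, and harder, arranging the shortcutting and recursion so that solving SSSP on the skeleton costs only $\tilde O(n^{1/2}D^{1/4}+D)$ rather than the $\tilde O(n^{2/3}+D)$ that a single naive ``collect the dense skeleton'' level would give — getting the interaction between the simulation overhead of a virtual skeleton, its shortest-path diameter, and the final parameter balance exactly right is the heart of the argument.
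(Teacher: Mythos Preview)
Your framework is right up to the point you yourself flag as ``the heart of the argument'': sample $\cS$, run light-weight bounded-hop \sssp from all of $\cS$, build the overlay $H$, solve \sssp on $H$, reconstruct.  The reconstruction and correctness analysis are fine.  But the mechanism you offer for the overlay step --- ``upcast all of $H$ along a BFS tree, or recurse'' --- is a genuine gap.  Upcast fails because $H$ can have $\tilde\Theta((n/h)^2)$ edges; and no recursion that would yield the $D^{1/4}$ factor is actually specified.  You also misidentify the paper's shortcut tool: what you call the ``shortest-path-diameter reduction via shortcuts'' is just the overlay construction itself, not the tool.

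The paper closes the gap with two ideas you do not have.  First, the actual shortcut lemma (\Cref{thm:spd reduction}) is applied \emph{to the overlay} $(G',w')$, not to $G$: it says that if every node adds edges to its $k$ nearest nodes (by weighted distance, not by hops), the shortest-path diameter drops to $O(|V(G')|/k)$.  Operationally, each overlay node broadcasts its $\beta$ lightest overlay edges through a BFS tree of $G$ (cost $\tilde O(|\cS|\cdot\beta+D)$ by pipelining), from which everyone can compute these $\beta$-shortcut edges; the resulting overlay $(G'',w'')$ has $\spdiam(G'',w'')=\tilde O(|\cS|/\beta)$.  Second, \sssp on $(G'',w'')$ is solved not by collecting $G''$ but by \emph{simulating} the light-weight $h'$-hop \sssp algorithm (with $h'=\spdiam(G'',w'')$) round by round: whenever an overlay node would broadcast to its overlay neighbors, it instead broadcasts to all of $G$ through the BFS tree.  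Because the algorithm is light-weight (each overlay node broadcasts only $\tilde O(1)$ times in total over all rounds), the simulation costs $\tilde O(h'\cdot D+|\cS|)$.  With $|\cS|=\alpha=n^{1/2}D^{-1/4}$ and $\beta=D^{1/2}$, the three phases cost $\tilde O(n/\alpha+\alpha)$, $\tilde O(\alpha\beta+D)$, and $\tilde O(D\alpha/\beta+\alpha)$, all of which are $\tilde O(n^{1/2}D^{1/4}+D)$.  The $D^{1/4}$ emerges precisely from balancing the broadcast cost $\alpha\beta$ of setting up the shortcuts against the simulation cost $D\alpha/\beta$ --- neither of which appears in your outline.
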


%Recall that $\diam$ is the diameter of the underlying unweighted network $G$. 

For typical real-world networks (e.g., ad hoc networks and peer-to-peer networks) $\diam$ is {\em small} (usually $\tilde O(1)$). (In some networks, an even stronger property also holds; e.g., a peer-to-peer network is usually assumed to be an expander \cite{AugustinePRU12}.) It is thus of a special interest to develop an algorithm in this setting. For example, \cite{LotkerPP06} studied \mst on constant-diameter networks. Das Sarma et al. \cite{DasSarmaNPT13} developed a $\tilde O((\ell \diam)^{1/2})$-time algorithm for computing a random walk of length $\ell$, which is faster than the trivial $O(\ell)$-time algorithm when $\diam$ is small.\danupon{We may also mention $\tilde O(n^{1/2}D)$-time algorithm for approximating the diameter \cite{PelegRT12}} In the same spirit, our algorithm is faster than previous algorithms. Moreover, in this case our running time matches the lower bound of $\tilde \Omega(n^{1/2}+\diam)$ \cite{DasSarmaHKKNPPW11,ElkinKNP12}, which holds even for any algorithm with $\poly(n)$ approximation ratio; thus, our result settles the status of \sssp for this case. 
%, which also answers an open problem raised by Elkin \cite{Elkin04} a decade ago and by Das Sarma et al. \cite{DasSarmaHKKNPPW11} for low-diameter networks. 
%
Additionally, since the same lower bound also holds in the quantum setting \cite{ElkinKNP12}, our result makes \sssp among a few problems (others are \mst and \mincut) that quantum communication cannot help speeding up distributed algorithms significantly.

Observe that our running time is {\em sublinear} as long as $\diam$ is sublinear in $n$ (since $\tilde O(n^{1/2}\diam^{1/4}+\diam)$ can be written as $\tilde O(n^{3/4}+\diam)$). As shown in \Cref{table:results summary}, previously we can either solve \sssp exactly in $\tilde O(n)$ time using Bellman-Ford algorithm \cite{Bellman58,Ford56} or $(8k\lceil \log (k+1) \rceil -1)$-approximately, for any $k>1$, in  $\tilde O(n^{1/2+1/2k}+\diam)$ time by applying the algorithm of Lenzen and Patt-Shamir \cite{LenzenP13}\footlabel{christoph}{Note that by applying the technique of Lenzen and Patt-Shamir with carefully selected parameters, the approximation ratio can be reduced to $4k-1$. We thank Christoph Lenzen (personal communication) for this information.}. Our algorithm is the first that gives an output very close to the optimal solution in sublinear time.
%
%Our result also answers one central question in the area of distributed approximation algorithms raised by Elkin \cite{Elkin04}  (attributed to Vitaly Rubinovich) and Das Sarma et al. \cite{DasSarmaHKKNPPW11}. 
%
%
Our result also points to an interesting direction in proving a stronger lower bound for \sssp: in contrast to previous lower bound techniques which usually work on low-diameter networks, proving a stronger lower bound for \sssp needs a new technique that must exploit the fact that the network's diameter is fairly large.\danupon{Should we also mention that we can also get $(2k-1)$-approximation?}
% 
%our algorithm implies that quantum communication cannot help speeding up distributed algorithms for \sssp by more than $\diam^{1/4}$. This makes 
%
As a by-product of our techniques, we also obtain a linear-time algorithm for \apsp. 
\begin{theorem}[\apsp on general graphs]\label{thm:linear apsp}
There is a distributed $(1+o(1))$-approximation algorithm that solves \apsp on any weighted $n$-node network $(G, w)$ which finishes in $\tilde O(n)$ time w.h.p.
\end{theorem}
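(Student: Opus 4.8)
The plan is to get \Cref{thm:linear apsp} essentially for free from the paper's first tool --- the light-weight bounded-hop \sssp procedure --- by invoking it once with the source set equal to the whole vertex set, so no genuinely new algorithmic idea is needed here and the work is all inherited. Concretely: a shortest path in a positively weighted graph may be taken simple and hence uses at most $n-1$ edges, so the $(n{-}1)$-hop-bounded distance equals the true distance for every pair. I would therefore run the light-weight bounded-hop \sssp algorithm with source set $\cS = V(G)$, hop bound $h = n-1$, and accuracy parameter $\epsilon = 1/\log n$ (any $\epsilon=o(1)$ whose $\poly(1/\epsilon)$ overhead stays inside $\tilde O(\cdot)$ works). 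Its guarantee then gives, at every node $v$, a value $\widetilde\dist(s,v)$ with $\dist_{G,w}(s,v)\le\widetilde\dist(s,v)\le(1+\epsilon)\dist_{G,w}(s,v)$ for every $s\in V(G)$; that is, $v$ learns a $(1+o(1))$-approximation of its distance to every node.

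It remains to match this to \Cref{def:apsp}. Since $G$ is undirected, $\dist_{G,w}(u,v)=\dist_{G,w}(v,u)$, so ``$v$ knows $\widetilde\dist(\cdot,v)$ for all sources'' is exactly the statement that for every pair $(u,v)$ node $v$ knows a $(1+o(1))$-approximation of $\dist_{G,w}(u,v)$, and symmetrically $u$ does too (each endpoint holds its own valid estimate; one extra exchange round symmetrizes the values if identical numbers are desired). The running time is the cost of the light-weight tool on $|\cS|=n$ sources with hop bound $h=n-1$, which is near-linear in $|\cS|+h$, i.e. $\tilde O(n)$; should the tool's cost grow super-linearly in the hop bound, I would first apply the shortcut-based shortest-path-diameter reduction to lower the effective hop bound below $n$ at the cost of only a $(1+o(1))$ distance perturbation, but for \apsp the crude choice $h=n-1$ already fits the $\tilde O(n)$ budget.

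The main obstacle is not in this reduction but in the tool it leans on, and the reason a textbook ``Bellman--Ford from all $n$ sources'' does not already prove \Cref{thm:linear apsp}: that algorithm would ship up to $n$ distance values across each edge in each of up to $\spdiam(G,w)$ iterations, costing $\tilde\Theta(n\cdot\spdiam)=\tilde\Theta(n^2)$ in the worst case. Bringing this to $\tilde O(n)$ requires (i) rounding each per-source distance estimate so that it changes only $\tilde O(1)$ times, whence each edge transmits only $\tilde O(n)$ messages in all, which a congestion-plus-dilation scheduling argument then delivers in $\tilde O(n)$ rounds; and (ii) arranging that rounding so its multiplicative slack does \emph{not} compound along the (up to $n$) hops of a shortest path --- re-rounding at every relaxation would inflate the stretch to $(1+\epsilon)^n$ --- which is the delicate part of the tool's design. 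Granting the light-weight bounded-hop \sssp tool, however, the proof of \Cref{thm:linear apsp} is just the one-line instantiation above.
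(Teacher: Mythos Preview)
Your proposal is correct and matches the paper's own proof essentially verbatim: the paper also observes that \apsp is the special case of the $h$-hop $k$-source shortest paths problem with $h=k=n$, and invokes \Cref{thm:k-source h-hop sp} to get $\tilde O(k+h+D)=\tilde O(n)$ time. Your extra care about \Cref{def:apsp} (both endpoints knowing the estimate) and the remark on why parallel Bellman--Ford fails are fine elaborations, but the core argument is identical.
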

We also observe that this algorithm is essentially {\em tight}:
% in the sense that {\em no} algorithm can solve \apsp in sublinear time, even in the unweighted case and even when we allow a large approximation ratio.  
%
%\begin{observation}[Lower bound for \apsp]\label{observe:lower bound}
%Any $\poly(n)$-approximation algorithm for \apsp on an $n$-node weighted network $G$ requires $\Omega(\frac{n}{\log n})$ time. This lower bound holds even when the underlying network $G$ has diameter $\diam(G)=2$.
%%
%Moreover, for any $\alpha(n)=O(n)$, any $\alpha(n)$-approximation algorithm on an {\em unweighted} network requires $\Omega(\frac{n}{\alpha(n)\log n})$ time. 
%%
%\end{observation}
%
\begin{restatable}[Lower bound for \apsp]{observation}{lowerbound}\label{observe:lower bound}
Any $\poly(n)$-approximation algorithm for \apsp on an $n$-node weighted network $G$ requires $\Omega(\frac{n}{\log n})$ time. This lower bound holds even when the underlying network $G$ has diameter $\diam(G)=2$.
Moreover, for any $\alpha(n)=O(n)$, any $\alpha(n)$-approximation algorithm on an {\em unweighted} network requires $\Omega(\frac{n}{\alpha(n)\log n})$ time. 
\end{restatable}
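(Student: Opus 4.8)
The plan is to prove both parts of \Cref{observe:lower bound} via reductions from two-party communication complexity, in the standard style of Das Sarma et al.\ \cite{DasSarmaHKKNPPW11}. The key observation is that \apsp forces \emph{every} pair of nodes to learn its distance, so a single distributed protocol must in effect transmit $\Omega(n)$ independent bits of information across a single edge (or across a sparse cut), which takes $\Omega(n/\log n)$ rounds because each round carries only $B=\log n$ bits per edge.

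For the first part, I would construct a family of weighted graphs with hop-diameter $2$ on which solving \apsp is equivalent to solving many disjoint instances of \textsc{Set-Disjointness} (or simply a direct $\Theta(n)$-bit transmission problem) between two players, Alice and Bob. Concretely: take a ``hub'' node $h$ adjacent to everything so that $\diam(G)=2$; partition the remaining nodes into a left block $L=\{a_1,\dots,a_t\}$ controlled by Alice and a right block $R=\{b_1,\dots,b_t\}$ controlled by Bob, with $t=\Theta(n)$. Put a single ``bridge'' edge $e=xy$ of weight $1$ joining Alice's side to Bob's side, and use edge weights incident to the $a_i$'s and $b_j$'s, chosen from $\{1,\dots,\poly(n)\}$, to encode Alice's input bits and Bob's input bits. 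Arrange the gadget (as in the lower bounds of \cite{DasSarmaHKKNPPW11,PelegR00}) so that the value $\dist_{G,w}(a_i,b_j)$ reveals the AND (or XOR) of bit $i$ of Alice and bit $j$ of Bob; then having all nodes learn all their \apsp distances lets the two players solve $\Omega(n)$ bits' worth of disjointness/indexing. Any $T$-round \congest protocol yields a communication protocol of cost $O(T B)$ across the bridge, so $T B=\Omega(n)$, i.e.\ $T=\Omega(n/\log n)$. Because the distances involved can be made large (weights up to $\poly(n)$) while a spurious path through the hub is short, even a $\poly(n)$-approximation cannot distinguish the cases, which is exactly what makes the bound robust to large approximation ratios. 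I would be careful to keep $\diam(G)=2$ throughout — the hub guarantees this — and to ensure that the only ``informative'' route between an $a_i$ and a $b_j$ is the one through the encoded gadget, while the hub route is present only to control the hop-diameter and is made uninformative by its weight.

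For the second (unweighted, parametrized) part, the graph is unweighted, so distances are bounded by $n$ and the approximation ratio $\alpha(n)$ genuinely coarsens the information each distance carries: an $\alpha$-approximate distance in $\{1,\dots,n\}$ conveys only $O(\log_\alpha n)=O((\log n)/(\log \alpha))$ bits, but — more usefully for the reduction — over a range of $\Theta(\alpha)$ consecutive distance values an $\alpha$-approximation collapses them to one value, so one needs a ``resolution'' argument. The cleaner route is to pack $\Theta(n/\alpha)$ independent one-bit subproblems, each realized by a path-length gadget whose two possible lengths differ by a factor more than $\alpha$ (so they are distinguishable), with the gadgets chained so their total size is $O(n)$; this forces $\Omega(n/\alpha)$ bits across a single edge and hence $T=\Omega(n/(\alpha \log n))$. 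I would reuse the same two-player framework, now with $\Theta(n/\alpha)$ encoded bits instead of $\Theta(n)$.

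The main obstacle I anticipate is the gadget design for the approximation-robust version: I must simultaneously (i) keep the hop-diameter at $2$ in the weighted case via a hub without letting the hub create short informative shortcuts, (ii) in the unweighted case realize each encoded bit by a pair of alternative distances separated by a multiplicative gap exceeding $\alpha(n)$ while keeping the whole construction of size $O(n)$ so the bit-count is $\Theta(n/\alpha)$, and (iii) argue that the reduction is tight — that the communication protocol extracted from a distributed algorithm really does solve an $\Omega(n)$- (resp.\ $\Omega(n/\alpha)$-) bit problem, which requires checking that all the relevant distances are ``private'' to the relevant pair and not leaked cheaply through the rest of the graph. Once the gadget is in place, invoking the $\Omega$(input size) communication lower bound for \textsc{Set-Disjointness}/\textsc{Index} and dividing by $B=\log n$ finishes the argument.
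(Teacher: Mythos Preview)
Your overall direction---a two-party communication reduction---is correct, but the specific construction you sketch has a real gap, and the paper's actual argument is much simpler than what you propose.

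\textbf{The gap.} You put a hub $h$ adjacent to \emph{every} node to force $\diam(G)=2$, and you give Bob a block $R=\{b_1,\dots,b_t\}$ with $t=\Theta(n)$. But the simulation argument counts bits across the Alice/Bob cut in the \emph{underlying communication graph}, not in the weighted metric. Whoever simulates $h$, the hub is adjacent to all $t$ nodes on the other side, so the cut has $\Theta(n)$ edges, and a $T$-round \congest algorithm yields a protocol of cost $O(T\cdot n\log n)$ bits---which only gives $T=\Omega(1)$. Making the hub ``uninformative by its weight'' controls distances but does nothing to limit the communication bandwidth across those hub edges. Your obstacle (iii) gestures at leakage of \emph{distance} information; the actual problem is leakage of \emph{bandwidth}.

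\textbf{How the paper does it.} The paper sidesteps all of this by making Bob a single node. Take a star: center $a^*$, leaves $a_1,\dots,a_{n-2}$ and $b$. Alice simulates $a^*$ and all $a_i$; Bob simulates only $b$. The cut is the single edge $a^*b$, and the star already has hop-diameter $2$---no separate hub needed. Alice encodes bit $x_i$ in $w(a_ia^*)\in\{1,\,2\alpha(n)\}$, so $\dist_{G,w}(a_i,b)\in\{2,\,2\alpha(n)+1\}$ and any $\alpha(n)$-approximation distinguishes them. Since \apsp requires $b$ to learn its distance to every $a_i$, Bob must recover all $n-2$ bits of $x$; this is just the one-way message-transmission lower bound (no \textsc{Set-Disjointness} needed), forcing $T\log n=\Omega(n)$. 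For the unweighted statement, the paper does exactly what you suggest in spirit: set $\beta=n/\alpha(n)$ and replace each heavy edge by a path of length $2\alpha(n)$, keeping Bob as a single node so the cut stays size $1$.

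So: keep the reduction framework, but let Bob own exactly one node and let the star's center double as your hub. Then the argument is a two-line information count rather than a gadget design problem.
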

\Cref{observe:lower bound} implies that the running time of our algorithm in \Cref{thm:linear apsp} is {\em tight} up to a $\poly\log n$ factor, unless we allow a prohibitively large approximation factor of $\poly n$. Moreover, even when we restrict ourselves to unweighted networks, we still cannot significantly improve the running time, unless the approximation ratio is fairly large; e.g., any $n^{1-\delta}$-time algorithm must allow an approximation ratio of $\Omega(n^\delta/\log n)$.
We note that a similar result to \Cref{observe:lower bound} has been independently proved by Lenzen and Patt-Shamir \cite{LenzenP13} in the context of name-independent routing scheme.

Other by-products of our techniques are efficient algorithms on fully-connected distributed networks, i.e., when $G$ is a complete graph. As mentioned earlier, it is of an interest to study algorithms on low-diameter networks. The case of fully-connected networks is an extreme case where $\diam=1$. This special setting captures, e.g., overlay and peer-to-peer networks, and has received a considerable attention recently (e.g. \cite{LotkerPPP03,LenzenW11,PattShamirT11,Lenzen13,DolevLP12tri,BernsHP12}). Obviously, this model gives more power to algorithms since every node can directly communicate with all other nodes; for example, MST can be constructed in $O(\log\log n)$ time \cite{LotkerPPP03}, as opposed to the $\tilde \Omega(n^{1/2}+\diam)$ lower bound on general networks. No sublinear-time algorithm for \sssp and \apsp is known even on this model if we want an optimal or near-optimal solution. 
%
%For this special case, Elkin \cite{Elkin04} asked for a non-trivial ($\omega(1)$) lower bound on the complexity of \sssp and \apsp on this model. While we do not directly answer this question (and doing will be a breakthrough \cite{Oshman13}), we show that this model also admits non-trivial {\em upper} bounds. 
%
In this paper, we show such an algorithm. First, note that our $\tilde O(n^{1/2}\diam^{1/4}+\diam)$-time algorithm in \Cref{thm:main sssp} already implies that \sssp can be $(1+o(1))$-approximated in  $\tilde O(n^{1/2})$ time. We show that, as an application of our techniques for proving \Cref{thm:main sssp}, we can get an {\em exact} algorithm within the same running time. More importantly, we show that these techniques, combined with some new ideas, lead to a $(2+o(1))$-approximation $\tilde O(\sqrt{n})$-time algorithm for \apsp. The latter result is in contrast with the general setting where we show that a sublinear running time is impossible even when we allow large approximation ratios (\Cref{observe:lower bound}). 
\begin{theorem}[Sublinear time algorithm on fully-connected networks]
On any fully-connected weighted network, in $\tilde O(n^{1/2})$ time, \sssp can be solved exactly and \apsp can be $(2+o(1))$-approximated w.h.p. 
\end{theorem}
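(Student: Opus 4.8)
The plan starts from \Cref{thm:main sssp}: a fully-connected network has $\diam=1$, so it already yields a $(1+o(1))$-approximate \sssp in $\tilde O(n^{1/2})$ rounds, and the whole task is to squeeze out \emph{exactness} for \sssp and to relax $(1+o(1))$ to $(2+o(1))$ for \apsp while keeping this running time. Two features of the clique drive everything: any node can broadcast one $O(\log n)$-bit word to all others in $O(1)$ rounds, and (by a standard routing argument) any batch of messages in which every node is the source of at most $\tilde O(n)$ words and the destination of at most $\tilde O(n)$ words can be delivered in $\tilde O(1)$ rounds.

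The first step is to build an overlay. I would sample a landmark set $S$ of size $\tilde\Theta(n^{1/2})$ (including the source $s$); a standard hitting-set argument then gives, w.h.p., that between every two nodes there is a shortest path on which every $\sqrt n$ consecutive edges contain a landmark, so that every shortest path decomposes into $O(\sqrt n)$ pieces of fewer than $\sqrt n$ hops whose endpoints lie in $S$ (plus the two terminals). Let the overlay --- equivalently, the shortcut set produced by the shortest-path-diameter reduction --- be the complete graph $H$ on $S$ with $w_H(x,y)=\dist^{\le\sqrt n}_{G,w}(x,y)$. By the decomposition, $\dist_H=\dist_{G,w}$ on $S\times S$, and in the shortcut graph $G\cup H$ the shortest-path diameter drops to $\tilde O(\sqrt n)$. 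The heart of the algorithm is computing the $\tilde\Theta(n)$ overlay weights $w_H$ within the time budget: this is exactly the job of the \emph{light-weight bounded-hop \sssp} tool, which computes an $h$-hop \sssp while keeping the \emph{total} communication small, run here simultaneously from all $\tilde\Theta(\sqrt n)$ landmark sources with hop bound $\sqrt n$; the routing primitive then schedules this traffic in $\tilde O(\sqrt n)$ rounds.

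Given $w_H$, the rest is light. Since $H$ has only $\tilde\Theta(n)$ edges it can be gathered at one node in $\tilde O(1)$ rounds; that node computes \apsp in $H$ and broadcasts all of $\dist_H$ in $\tilde O(1)$ more rounds, so every landmark learns its exact distance to $s$ (and to every other landmark). For \emph{exact \sssp}, I would finish with one $\sqrt n$-hop Bellman--Ford whose sources are the landmarks, each initialised with its true distance to $s$, every node keeping only the pointwise-minimum estimate and broadcasting that single value each round: since every shortest $s$-$v$ path either has fewer than $\sqrt n$ hops (caught by this pass, using $s\in S$ with value $0$) or meets a landmark within its last $\sqrt n$ hops, each node ends up with $\dist_{G,w}(s,v)$ exactly, in $O(\sqrt n)$ rounds. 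For $(2+o(1))$-approximate \apsp, I would instead run that Bellman--Ford from the landmarks keeping one estimate \emph{per landmark} (here light-weightness together with geometric rounding of the estimates is what keeps the per-round traffic, and hence the total, at $\tilde O(\sqrt n)$), so every node learns a $(1+o(1))$-approximation $\widetilde{\dist}_{G,w}(x,\cdot)$ for every landmark $x$; after every node broadcasts its vector of landmark-distances, a pair $(u,v)$ estimates $\dist_{G,w}(u,v)$ by the best landmark detour $\min_{x\in S}\big(\widetilde{\dist}_{G,w}(u,x)+\widetilde{\dist}_{G,w}(x,v)\big)$. The triangle inequality together with the hitting property --- a shortest $u$-$v$ path long enough to contain a landmark $x$ places $x$ within $\dist_{G,w}(u,v)$ of an endpoint --- bounds the stretch by $2+o(1)$.

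The main obstacle is the overlay step: making the computation of the $\tilde\Theta(n)$ bounded-hop distances among landmarks fit into $\tilde O(\sqrt n)$ rounds. A naive multi-source Bellman--Ford from $\tilde\Theta(\sqrt n)$ sources for $\sqrt n$ rounds moves $\tilde\Omega(n^{3/2})$ words per round and costs $\tilde\Omega(n)$ rounds in total, so the light-weight bounded-hop \sssp tool (which bounds the total, not per-round, communication) is essential; and for the \sssp claim these bounded-hop distances are needed \emph{exactly}, not merely approximately, which is a strictly stronger demand than the one underlying the \apsp claim. Reconciling exactness with the communication budget, and carrying the small-hop pairs through the $2+o(1)$ stretch analysis for \apsp, are the two delicate points that the two new tools are meant to resolve.
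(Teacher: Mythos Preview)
Your proposal correctly identifies the two difficulties but does not resolve them, and the paper's resolutions are genuinely different from what you sketch.

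\textbf{Exact \sssp.} You build an overlay on random landmarks with weights $\dist^{\le\sqrt n}_{G,w}(x,y)$ and then observe that these weights must be computed \emph{exactly} for the final answer to be exact. You appeal to the light-weight bounded-hop \sssp tool for this, but that tool (\Cref{thm:light-weight sssp}) only gives $(1+o(1))$-approximate $h$-hop distances; running exact multi-source Bellman--Ford from $\tilde\Theta(\sqrt n)$ sources for $\sqrt n$ rounds costs $\tilde\Theta(n)$ rounds, exactly the obstacle you flag. The paper avoids this bottleneck by not using random landmarks at all for \sssp. Instead, every node broadcasts its $k=\sqrt n$ \emph{smallest-weight incident edges} (a purely local quantity, $O(\sqrt n)$ rounds). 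The key observation (\Cref{thm:shortcuts from smallest edges}) is that these edges suffice for every node to locally compute its $k$ nearest nodes and the exact distances to them --- i.e., the $k$-shortcut graph of \Cref{def:shortcut graph}. By \Cref{thm:spd reduction} this shortcut graph has shortest-path diameter $<4\sqrt n$, so a single-source Bellman--Ford for $4\sqrt n$ rounds (one value broadcast per node per round) finishes exactly. The whole point is that ``$k$ nearest nodes'' can be computed from ``$k$ lightest edges'' without any multi-source distance computation.

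\textbf{$(2+o(1))$-approximate \apsp.} Your estimator $\min_{x\in S}\big(\widetilde{\dist}(u,x)+\widetilde{\dist}(x,v)\big)$ over random landmarks $S$ does not give bounded stretch when the shortest $u$--$v$ path has fewer than $\sqrt n$ hops and hence may miss $S$ entirely: the nearest landmark to $u$ can be arbitrarily far compared to $\dist_{G,w}(u,v)$. The paper handles this by combining the random set $R$ with the exact $k$-nearest structure $S^k(\cdot)$ from the \sssp phase. The case analysis (\Cref{thm:cliqueAPSP}) is: walk along a shortest $u$--$v$ path; if $S^k(u)\cup S^k(v)$ covers it, two shortcut edges give the exact distance. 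Otherwise, the first node outside $S^k(u)$ and the last node outside $S^k(v)$ split the path so that one of them is within $\dist_{G,w}(u,v)/2$ of its endpoint, and since $|S^k(u)|=\sqrt n$, w.h.p.\ some $r\in R$ lies in $S^k(u)$ (hence $\dist_{G,w}(u,r)\le\dist_{G,w}(u,v)/2$), yielding the factor $2+o(1)$ via the triangle inequality. The $k$-nearest structure is what rescues the small-hop pairs; your purely landmark-based detour cannot.
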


%\subsection{OLD}

%Our main results concern two open problems raised by Elkin \cite{Elkin04}. In 2004, Elkin wrote a nice survey which has become one of the landmarks in the development of the area of {\em distributed approximation}. He presented several sublinear-time algorithms obtained by approximation techniques and proposed several directions for future research. Among the proposed directions, two questions related to computing shortest paths were highlighted. 

%\paragraph{Algorithm for SSSP}
%One question (attributed to Vitaly Rubinovich) asked for the complexity of the shortest-path-tree problem (i.e. \sssp) in the message-passing model with small bandwidth parameter $B$, i.e., $B = \log n$ (i.e. the \congest model). At the time, a lower bound of $\Omega(\sqrt{\frac{n}{B}}+\diam(G))$ \cite{PelegR00,Elkin06} was known and no non-trivial upper bound was known. (The trivial algorithm is the distributed version of Bellman-Ford's algorithm, which takes $O(n)$ time.) The lower bound of $\Omega(\sqrt{\frac{n}{B}}+\diam(G))$ has been recently shown to hold even when we allow a large approximation ratio \cite{DasSarmaHKKNPPW11} and quantum communication \cite{ElkinKNP12}. Thus, we cannot hope to have a running time significantly better than $\Omega(\sqrt{n})$. 
%%

%
%
%Their result gives the first sublinear-time algorithm for \sssp, with a rather large approximation guarantee. In this paper, we present the first sublinear-time algorithm with a very small approximation guarantee; its running time almost matches the known lower bound. 
%

\subsection{Related Work}\label{sec:related}

\paragraph{Unweighted Case} \sssp and \apsp are essentially well-understood in the unweighted case. \sssp can be trivially solved in $O(\diam)$ time using a breadth-first search tree \cite{peleg,Lynch-Book}. 
%
%\apsp has been recently resolved \cite{HolzerW12,PelegRT12}: one can solve \apsp exactly in $O(n)$ time and improving this running time is impossible, even when we want to solve the special case of computing the network's diameter \cite{FrischknechtHW12}. 
%
%Very recent sequence of results \cite{FrischknechtHW12,HolzerW12,PelegRT12} have made much progress in understanding distributed algorithms for \apsp in the unweigthed case. 
%For \apsp, 
Frischknecht, Holzer, and Wattenhofer \cite{FrischknechtHW12,HolzerW12} show a (surprising) lower bound of $\Omega(n/\log n)$ for computing the diameter of unweighted networks, which implies a lower bound for solving unweighted \apsp. This lower bound holds even for $(3/2-\epsilon)$-approximation algorithms. This lower bound is matched (up to a $\poly\log n$ factor) by  $O(n)$-time deterministic {\em exact} algorithms for unweighted \apsp found independently by \cite{HolzerW12} and \cite{PelegRT12}.
Another case that has been considered is when nodes can talk to any other node in one time unit. This can be thought of as a special case of \apsp on fully-connected networks where edge weights are either $1$ or $\infty$. In this case, Holzer \cite{Holzer13thesis} shows that \sssp can be solved in $\tilde O(n^{1/2})$ time\footnote{We thank Stephan Holzer for pointing this out.}. 

\paragraph{Name-Dependent Routing Scheme} 
%
%\paragraph{Weighted Case}
For the weighted \sssp and \apsp on general networks, the best known results follow from the recent algorithm for computing tables for {\em name-dependent routing} and {\em distance approximation} by Lenzen and Patt-Shamir \cite{LenzenP13}.
In particular, consider any integer $k>1$. Lenzen and Patt-Shamir \cite[Theorem 4.12]{LenzenP13} showed that in time $\tau=\tilde O(n^{1/2+1/2k}+\diam)$ every node $u$ can compute a {\em label} $\lambda(u)$ of size $\sigma=O(\log (k+1)\log n)$ and a function $dist_u$ that maps label $\lambda(v)$ of any  node $v$ to a distance approximation $dist_u(v)$ such that $\dist_{G, w}(u, v) \leq dist_u(v) \leq \rho\dist_{G, w}(u, v)$ where $\rho=8k\lceil \log (k+1)\rceil -1$.\footnote{We note that Lenzen and Patt-Shamir \cite[Theorem 4.12]{LenzenP13} actually allow $k=1$. However, their algorithm relies on the result of Baswana and Sen (see Theorem 4.7 in their paper) which does not allow $k=1$.}
\danupon{Theorem 4.19 in the arXiv version and 4.12 in the STOC version of \cite{LenzenP13}: Consider any $\alpha=1/2+\epsilon$ for some $1/2\leq \epsilon\geq 1/\log n$ (see the paper for the case where $\epsilon$ is very close to $0$ which will not be considered here). Define $k=\lceil 1/2\epsilon\rceil$. We get stretch $\rho(\alpha)=8k\lceil \log (k+1)\rceil -1\rceil$, label size $O(\log (k+1)\log n)$, time $\tilde O(n^{\alpha}+\diam)$}
%
%\danupon{*** LP \cite{LenzenP13} seems to allow $k\geq 1$. But their algorithm (especially Theorem 4.7 in the STOC version) relies on Baswana and Sen's algorithm which only works when $k>1$ (see Section 1.2 of \cite{BaswanaS07}). ***}
%
We can solve \sssp by running the above algorithm of Lenzen and Patt-Shamir and broadcasting the label $\lambda(s)$ of the source $s$ to all nodes. This takes time $\tau+\sigma = \tilde O(n^{1/2+1/2k}+\diam)$ and has an approximation guarantee of $\rho$. We can solve \apsp by broadcasting $\lambda(v)$ for all $v$, taking time $\tau+n\sigma = \tilde O(n)$.
% on general graph and $\tau+\sigma = \tilde O(n^{1/2+1/2k})$ on complete graphs. 

\paragraph{Sparsification} 
The shortest path problem is one of the main motivations to study distributed algorithms for graph sparsification.  
%
%More recent and more time-efficient algorithms are based on the idea of {\em approximation} via {\em low-distortion sparse embeddings}.
%
These algorithms\footnote{We note that some of these algorithms (e.g., \cite{Elkin05,KhanKMPT12}) can actually solve a more general problem called the {\em $\cS$-shortest path} problem. To avoid confusions, we will focus only on \sssp and \apsp.} have either super-linear time or large approximation guarantees.
For example, 
% --- Elkin is improved by Elkin-Zhang below ----
%Elkin \cite{Elkin05} presents an algorithm for the unweighted case based on a sparse spanner that takes (very roughly) $O(n^{1+\xi})$ time and gives $(1+\epsilon)$-approximate solution, 
Elkin and Zhang \cite{ElkinZ06} present an algorithm for the unweighted case based on a sparse spanner that takes (very roughly) $O(n^{\xi})$ time and gives $(1+\epsilon)$-approximate solution, for small constants $\xi$ and $\epsilon$. The algorithm is also extended to the weighted case but both running time and approximation guarantee are large (linear in terms of the largest edge weight).
The running time could be traded-off with the approximation guarantee using, e.g., a $(2k-1)$-spanner of size $O(kn^{1+1/k})$ \cite{BaswanaS07} where $k$ can vary; e.g., by setting $k=\log n/\log\log n$, we have an $O(n\log n)$-time $O(\log n/\log \log n)$-approximation algorithm (we need $O(k^2)$ to construct a spanner and $O(kn^{1+1/k})=O(n\log n)$ to aggregate it). The spanner of \cite{Pettie10} can also be used to get a linear-time $(2^{O(\log^*{n})}\log n)$-approximation algorithm in the unweighted case.

\danupon{According to STOC'14 committee, \cite{DerbelGPV08} (which I used to cite) needs large messages so it's quite irrelevant.}

In general, it is not clear how to use graph sparsification for computing shortest paths since we still need at least linear time to collect the sparse graph. However, it plays a crucial role in some previous algorithms, including the algorithm of Lenzen and Patt-Shamir \cite{LenzenP13}. Moreover, by running the graph sparsification algorithm of Baswana and Sen \cite{BaswanaS07} and collecting the network to one node, we can $(2k-1)$-approximate \apsp in $O(kn^{1+1/k})$ time on general networks and $O(kn^{1/k})$ time on fully-connected networks, for any integer $k\geq 2$. This gives the fastest algorithm (with high approximation guarantees) on fully-connected networks.

%In other words, for any $1/\log n\leq \epsilon \leq 1/2$, we can $O(\epsilon^{-1})$-approximate \apsp in $\tilde O(n^{1+\epsilon})$ time on general graphs and $\tilde O(n^{\epsilon})$ time on complete graphs. 

\paragraph{Other Parameters} There are also some approximation algorithms whose running time is based on other parameters. These algorithms do not give any improvement for the worst values of their parameters. We do not consider these parameters in this paper since they are less standard. 
One important parameter is the {\em shortest-path diameter}, denoted by $\spdiam(G, w)$. This parameter captures the number of edges in a shortest path between any pair of nodes (see \Cref{def:spdiam} for details). It naturally arises in the analysis of several algorithms. For example, Bellman-Ford algorithm \cite{Bellman58,Ford56} can be analyzed to have $O(\spdiam(G, w))$ time for \sssp. 
Khan et al. \cite{KhanKMPT12} gives a $\tilde O(n\cdot \spdiam(G, w))$-time $O(\log n)$-approximation algorithm via {\em metric tree embeddings}~\cite{FakcharoenpholRT04}. We can also construct {\em Thorup-Zwick distance sketches}~\cite{ThorupZ05} of size $O(kn^{1/k})$ and stretch $2k-1$ in $\tilde O(kn^{1/k}\cdot \spdiam(G, w))$ time~\cite{DasSarmaDP12}. Since $\spdiam(G, w)$ can be as large as $n$, these algorithms do not give any improvement to previous algorithms when analyzed in terms of $n$ and $\diam$. One crucial component of our algorithms involves reducing the shortest-path diameter to be much less than $n$ (more in \Cref{sec:techniques}).
Another shortest path algorithm with running time based on the network's {\em local path diameter} is developed as a subroutine of the approximation algorithm for \mst \cite{KhanP08}. This algorithm solves a slightly different problem (in particular, nodes only have to know the distance to some nearby nodes) and cannot be used to solve \sssp and \apsp.

\paragraph{Lower Bounds} The lower bound of Das Sarma et al. \cite{DasSarmaHKKNPPW11} (building on \cite{Elkin06,PelegR00,KorKP13}) shows that solving \sssp requires $\tilde \Omega(\sqrt{n}+\diam)$ time, even when we allow $\poly(n)$ approximation ratio and the network has $O(\log n)$ diameter. This implies the same lower bound for \apsp. Recently, \cite{ElkinKNP12} shows that the same $\tilde \Omega(\sqrt{n}+\diam)$ lower bound holds even in the quantum setting. These lower bounds are subsumed by Observation~\ref{observe:lower bound} for the case of \apsp. 
Das Sarma et al. (building on \cite{LotkerPP06}) also shows a polynomial lower bound on networks of diameter 3 and 4. It is still open whether there is a non-trivial lower bound on networks of diameter one and two \cite{Elkin04}. 

%However, proving such a lower bound on complete graphs will be very difficult since it will imply a breakthrough in circuit complexity \cite{Oshman13}. 

%Our simple observation in Section~\ref{sec:results} subsumes all previous lower bounds completely for the case of weighted \apsp (but not for unweighted \apsp or diameter and girth computation).

%\danupon{Say this in "our results" section: Our simple lower bound presented in Appendix~\ref{sec:lower bound} subsumes all previous lower bounds completely for the case of weighted \apsp (but not for unweighted \apsp or diameter and girth computation).}

\paragraph{Other Works} While computing shortest paths is among the earliest studied problems in distributed computing, many classic works on this problem concern other objectives, such as the message complexity and convergence. When faced with the bandwidth constraint, the time complexities of these algorithms become higher than the trivial $O(m)$-time algorithm; e.g., Bellman-Ford algorithm and algorithms in \cite{AbramR82,Haldar97,AfekR93} require $\Omega(n^2)$ time.

To the best of our knowledge, there is still no exact distributed algorithm for \apsp that is faster than the trivial $O(m)$-time algorithm\footnote{The problem can also be solved by running the distributed version of Bellman-Ford algorithm \cite{peleg,Lynch-Book,Santoro-book} from every node, but this takes $O(n^2)$ time in the worst case. So this is always worse than the trivial algorithm.}, except for the special case of {\em BHC network}, whose topology is structured as a balanced hierarchy of clusters. In this special case, the problem can be solved in $O(n \log n)$-time \cite{AntonioHT92}. 
For the related problem of computing network's diameter and girth, many results are known in the unweighted case but none is previously known for the weighted case. Peleg, Roditty, and Tal \cite{PelegRT12} shows that we can $\frac{3}{2}$-approximate the network's diameter in $O(n^{1/2}\diam)$ time, in the unweighted case, and Holzer and Wattenhofer \cite{HolzerW12} presents an $O(\frac{n}{\diam}+\diam)$-time $(1+\epsilon)$-approximation algorithm. By combining both algorithms, we get a $\frac{3}{2}$-approximation  $O(n^{3/4}+\diam)$-time algorithm. In contrast, any $(\frac{3}{2}-\epsilon)$-approximation and $(2-\epsilon)$-approximation algorithm for computing the network's diameter and girth requires  $\Omega(n/\log n)$ time~\cite{HolzerW12}
%\danupon{Lemma 11 Section 7.1.} 
and $\Omega(\sqrt{n}/\log n)$ time~\cite{FrischknechtHW12}, respectively. These bounds imply the same lower bound for approximation algorithms for \apsp on unweighted networks. In particular, they imply that our approximation algorithms are tight, even on unweighted networks.

\section{Overview}\label{sec:techniques}

\subsection{Tool 1: Light-Weight Bounded-Hop \sssp (Details in \Cref{sec:light-weight bounded-hop sssp})} At the core of our algorithms is the {\em light-weight} $(1+o(1))$-approximation algorithm for computing {\em bounded-hop} distances. Informally, an {\em $h$-hop path} is a path containing at most $h$ edges.  The $h$-hop distance between two nodes $u$ and $v$, denoted by $\dist_{G, w}^h(u, v)$, is the minimum weight among all $h$-hop paths between $u$ and $v$. The {\em $h$-hop \sssp problem} is to find the $h$-hop distance between a given source node $s$ and all other nodes. This problem can be solved exactly in $O(h)$ time using the distributed version of Bellman-Ford algorithm. This algorithm is, however, {\em not} suitable for parallelization, i.e. when we want to solve $h$-hop \sssp from $k$ different sources. The reason is that Bellman-Ford algorithm is {\em heavy-weight} in the sense that they require so much communication between each neighboring nodes; in particular, this algorithm may require as many as $O(h)$ messages on each edge. Thus, running $k$ copies of this algorithm in parallel may require as many as $O(hk)$ messages on each edge, which will require $O(hk)$ time. 

We show a simple algorithm that is not as accurate as Bellman-Ford algorithm but more suitable for parallelization: it can $(1+o(1))$-approximate $h$-hop \sssp in $\tilde O(h)$ time and is {\em light-weight} in the sense that every node sends a message (of size $O(\log n)$) to its neighbors only $\tilde O(1)$ times. Thus, when we run $k$ copies of this algorithm in parallel, we will require to send only $\tilde O(k)$ messages through each edge, which gives us a hope that we will require only additional $\tilde O(k)$ time. By a careful paralellization (based on the random delay technique of \cite{LeightonMR94}\footnote{Note that the random delay technique makes the algorithm randomized. Techniques in \cite{HolzerW12,PelegRT12} might enable us to get a deterministic algorithm. We do not discuss these techniques here since other parts of our algorithms will also heavily rely on randomness.}), we can solve $h$-hop \sssp from $k$ sources in $\tilde O(h+k)$ time. This is the first tool that we will use later.
\begin{claim}[See \Cref{thm:k-source h-hop sp} for a formal statement]\label{claim:k-source h-hop sp}
We can $(1+o(1))$-approximate $h$-hop \sssp from any $k$ nodes in $\tilde O(h+k)$ time. 
\end{claim}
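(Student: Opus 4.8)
The plan is to first design a light-weight $(1+o(1))$-approximation algorithm for $h$-hop \sssp from a \emph{single} source, and then parallelize $k$ copies of it via random delays. For the single-source case, the natural idea is to \emph{scale and round} the edge weights so that the Bellman-Ford-style layered exploration only needs to propagate through $\tilde O(1)$ distinct ``distance layers'' before reaching any node (within $h$ hops). Concretely, fix a source $s$ and a target scale; guess the value $\Lambda$ of $\dist^h_{G,w}(s,v)$ up to a factor of $2$ (there are only $O(\log(n W))=\tilde O(1)$ such guesses since weights and $h$ are $\poly(n)$). For each guess, replace each weight $w(e)$ by $\lceil w(e)\cdot h/(\epsilon \Lambda)\rceil$ for a suitable $\epsilon=o(1)$; now every relevant shortest path has integer length $O(h/\epsilon)=\tilde O(h)$, but — crucially — the rounding inflates each path of $\le h$ edges by an additive $h$, i.e.\ by a $(1+\epsilon)$ multiplicative factor relative to $\Lambda$. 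One then runs a BFS-like layered broadcast on the rescaled graph: a node that first learns a tentative distance $d$ at round $d$ forwards a single message $(s,d)$ to all neighbors, and never speaks again for this source. This is the light-weight property: each node sends $O(1)$ messages per source (per guess), so $\tilde O(1)$ messages total. The subtlety is that a node may first hear a \emph{non-optimal} distance and lock in; to avoid this we should run the layered broadcast in strict round-by-round synchrony so that the \emph{first} message a node hears for source $s$ is along a minimum-rescaled-weight $\le h$-hop path — this holds because in the synchronous schedule, the round at which a value arrives equals the rescaled length of the path carrying it, so the earliest arrival is the minimum. Running over all $O(\log(nW))$ guesses and taking the best gives a $(1+\epsilon)$-approximation to $\dist^h_{G,w}(s,v)$ in $\tilde O(h)$ time with each node transmitting $\tilde O(1)$ times.

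Next I would parallelize over the $k$ sources. Since each source's algorithm is light-weight — each edge carries only $\tilde O(1)$ messages over the course of the single-source run, and the run takes $\tilde O(h)$ rounds — the standard random-delay scheduling technique of Leighton, Maggs, and Rao \cite{LeightonMR94} applies: assign each of the $k$ source-algorithms an independent uniformly random start delay in $\{1,\dots,\Theta(k/\log n)\}$ (appropriately scaled), and run all of them concurrently, where in each actual round a node relays at most one buffered message per outgoing edge. The total \emph{congestion} on any edge across all $k$ runs is $\tilde O(k)$ and the \emph{dilation} (length of any single run) is $\tilde O(h)$; the random-delay theorem then yields a schedule of length $\tilde O(\text{congestion} + \text{dilation}) = \tilde O(k + h)$ w.h.p., with only $O(\log n)$-bit queues needed per edge (so the bandwidth bound is respected — a message is just $(s, d)$, which is $O(\log n)$ bits). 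One has to be slightly careful that the ``run at round $d$'' semantics of the single-source algorithm is preserved under delay: because a node forwards a value along with the round-offset within its own run rather than the global round, the layered-broadcast correctness argument is unaffected by the shift. This gives the claimed $\tilde O(h+k)$ bound.

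The main obstacle I expect is making the light-weight single-source subroutine genuinely correct while sending only $\tilde O(1)$ messages per node: a naive Bellman-Ford relaxes repeatedly and is heavy-weight, so the whole point is the weight-rescaling trick that collapses the number of ``useful'' relaxations at each node down to $\tilde O(1)$ — but this requires the synchronous first-arrival-equals-shortest-rescaled-path invariant to be airtight, including the interaction with the $O(\log(nW))$ guesses running in parallel (they should not be conflated; either run them sequentially at a cost of an $\tilde O(1)$ factor, or tag messages with the guess index, which keeps messages $O(\log n)$ bits). A secondary technical point is verifying that the $\tilde O(\cdot)$ in the random-delay bound correctly absorbs the $\log(nW)$ guess overhead and the $\poly\log n$ from the LMR analysis, so that the final bound is cleanly $\tilde O(h+k)$; since $W=\poly(n)$ by the model's assumption, $\log(nW)=O(\log n)$ and this is routine. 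I would state the formal version (with explicit $\epsilon$, explicit queue sizes, and the w.h.p.\ guarantee) as \Cref{thm:k-source h-hop sp} in \Cref{sec:light-weight bounded-hop sssp}.
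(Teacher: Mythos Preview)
Your proposal is correct and follows essentially the same approach as the paper: weight rescaling (guess the target distance, round weights so that any relevant $h$-hop path has rescaled length $\tilde O(h)$, then run a synchronous BFS in which each node speaks once) to get a light-weight single-source subroutine, followed by random-delay parallelization \`a la \cite{LeightonMR94} across the $k$ sources. The only cosmetic differences are in parameters: the paper takes $\epsilon=1/\log n$ and delays uniform in $[0,k\log n]$ (not $\Theta(k/\log n)$), and instead of invoking the general congestion$+$dilation theorem it gives a direct union-bound argument (\Cref{thm:no congestion}) that at most $\log n$ messages collide at any node in any round; but the substance is the same.
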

The idea behind \Cref{claim:k-source h-hop sp} is actually very simple. Consider any path $P$ having at most $h$ hops. Let $\epsilon=1/\log n$ and $W'=(1+\epsilon)^i$ where $i$ is such that $W'\leq w(P) \leq (1+\epsilon)W'$ (recall that $w(P)$ is the sum of weights of edges in $P$). Consider changing weight $w$ slightly to $w'$ where $w'(uv)=\lceil \frac{h w(uv)}{\epsilon W'}\rceil$. Because $w'(uv)-\frac{hw(uv)}{\epsilon W'}\leq 1$, we have that  
$$w(uv)\leq w'(uv)\times \frac{\epsilon W'}{h} \leq w(uv)+O(\epsilon)\frac{W'}{h}.$$ 
It follows that  
$$w(P)\leq w'(P)\times \frac{\epsilon W'}{h} \leq w(P)+ O(\epsilon) W' = (1+o(1))w(P).$$
%
%
%$w'(uv)\times \frac{\epsilon W'}{2h} \in [w(uv), (1+O(\epsilon/h)) w(uv)]$ and thus $w'(P)\times \frac{\epsilon W'}{2h} \in [w(P), (1+O(\epsilon)) w(P)]$. 
%
In other words, it is sufficient for us to find $w'(P)$. To this end, we observe that $w'(P)=O(h/\epsilon)$. Thus, we can simply use the breadth-first search (BFS) algorithm \cite{peleg,Lynch-Book} on $(G, w')$ for $O(h/\epsilon)$ rounds. The BFS algorithm is {\em light-weight}: it sends at most one message through each edge. 
Now to use this algorithm to solve $h$-hop \sssp, we have to try different values of $W'$ in the form $(1+\epsilon)^i$. This makes our algorithm send $\tilde O(1)$ messages through each edge.

To the best of our knowledge, this simple technique has not been used before in the literature of distributed algorithms. In the dynamic data structure context, Bernstein has independently used a similar weight rounding technique to construct a {\em bounded-hop data structure}, which plays an important role in his recent breakthrough \cite{Bernstein13}.
Also, it was recently pointed out to us by a STOC 2014 reviewer that this technique is similar to the one used in the PRAM algorithm of Klein-Sairam \cite{KleinS92} which was originally proposed for VLSI routing by Raghavan and Thomson \cite{RaghavanT85}.  The main difference between this and our weight approximation technique is that we always round edge weights up while the previous technique has to round the weights up and down randomly (with some appropriate probability). So, if we adopt the previous technique, then the approximation guarantee of our light-weight \sssp algorithm will hold only with high probability (in contrast, it always holds in this paper). More importantly, randomly rounding the weight could cause some edge to have a zero weight after rounding. This problem can be handled in the PRAM setting by contracting edges of weight zero. However, this will be a serious problem for us since we do not know how to handle zero edge weight.
%(In fact, we believe that being able to handle zero edge weights is the first step to obtain an exact sublinear-time algorithm, one of the open problems we leave in this paper.)

\subsection{Tool 2: Shortest-Path Diameter Reduction Using Shortcuts (Details in \Cref{sec:spd reduction})} The other crucial idea that we need is the {\em shortest-path diameter reduction technique}. Recall that the shortest-path diameter of a weighted graph $(G, w)$, denoted by $\spdiam(G, w)$, is the minimum number $h$ such that for any nodes $u$ and $v$, there is a shortest $u$-$v$ path in $(G, w)$ having at most $h$ edges; in other words, $\dist_{G, w}^h(u, v)=\dist_{G, w}(u, v)$ for all $u$ and $v$.  
As discussed in \Cref{sec:related} there are algorithms that need $\tilde O(\spdiam(G, w))$ time to solve \sssp and \apsp, e.g. Bellman-Ford algorithm. 
%
%For example, if we know that $\spdiam(G, w)\leq \sqrt{n}$, then we can solve \sssp by simply solving $\sqrt{n}$-hop \sssp using either Bellman-Ford or our light-weight \sssp algorithm above. 
%
Thus, it is intuitively important to try to make the shortest-path diameter small. The second crucial tool of our algorithm is the following claim. 
\begin{claim}[See \Cref{thm:spd reduction} for a formal statement]\label{claim:spd reduction}
If we add $k$ edges called {\em shortcuts} from every node $u$ to its $k$ nearest nodes (breaking tie arbitrarily), where for each such node $v$ the shortcut edge $uv$ has weight $\dist_{G, w}(u, v)$, then we can bound the shortest-path diameter to $O(n/k)$.
\end{claim}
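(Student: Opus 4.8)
The plan is to form the new weighted graph $(G',w')$ by adding the shortcut edges, observe that this leaves all distances unchanged, and then show that any two nodes are joined by a shortest path of $(G',w')$ that uses only $O(n/k)$ edges. For the distance claim, $G\subseteq G'$ gives $\dist_{G',w'}\le\dist_{G,w}$ immediately; conversely, given any $u$-$v$ path in $G'$, I would expand each shortcut $xy$ it uses into a shortest $x$-$y$ path of $(G,w)$, which (since the shortcut $xy$ has weight exactly $\dist_{G,w}(x,y)$) turns it into a $u$-$v$ walk in $G$ of the same weight, so $\dist_{G,w}(u,v)$ is at most the weight of the original $G'$-path. Hence $\dist_{G',w'}=\dist_{G,w}$, and every shortest path of $(G,w)$ stays shortest in $(G',w')$.

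Then I would fix $u,v$ and let $Q=\langle y_0,\dots,y_h\rangle$ be a shortest $u$-$v$ path of $(G',w')$ with the \emph{fewest} edges (such a $Q$ can also be produced greedily: start from a shortest path of $(G,w)$ and repeatedly jump from the current node to the farthest-along node reachable by a single $G'$-edge --- a path edge, or a shortcut, which is present precisely when one endpoint is among the $k$ nearest nodes of the other; every shortcut carries the true distance, so the jumped path is still shortest). The whole problem is now to bound $h$.

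The key is a ``no long chord'' property of a minimal-hop shortest path: for $|i-j|\ge 2$ there can be no $G'$-edge $y_iy_j$ of weight $\dist_{G,w}(y_i,y_j)$, since otherwise short-circuiting $y_i\cdots y_j$ by that edge would give a shortest $u$-$v$ path with fewer edges. In particular $y_i$ and $y_j$ share no shortcut, so --- writing $N_k(x)$ for the set of $k$ nearest nodes of $x$ in $(G,w)$ and $R_i$ for the distance from $y_i$ to its $k$-th nearest node --- we obtain (i) $N_k(y_i)$ meets $V(Q)$ in at most $\{y_{i-1},y_{i+1}\}$, so $N_k(y_i)$ has at least $k-2$ nodes outside $V(Q)$; and (ii) $\dist_{G,w}(y_i,y_j)\ge\max(R_i,R_j)$ whenever $|i-j|\ge 2$. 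If the off-$Q$ parts of $N_k(y_i)$, over the even indices $i$, were pairwise disjoint, I would get $\lceil h/2\rceil\,(k-2)\le n-h-1$ and hence $h=O(n/k)$ at once; so the entire difficulty is to control the overlap of these sets.

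That overlap control is the main obstacle. The sets $N_k(y_i)$ are guaranteed to hold $\Omega(k)$ nodes off $Q$ but not to be disjoint, and there is no smaller ball about $y_i$ still guaranteed to hold $\Omega(k)$ nodes, so shrinking radii to force disjointness destroys the count. I expect the fix to run through property (ii): a node lying in both $N_k(y_i)$ and $N_k(y_j)$ forces $\dist_{G,w}(y_i,y_j)\le R_i+R_j$, which together with (ii) pins $R_i,R_j$ to within a constant factor and $y_i,y_j$ to within $O(R_i)$ of one another; one then buckets the even indices by $\lceil\log R_i\rceil$ (only $O(\log n)$ buckets, since the maximum weight is $\poly(n)$), proves disjointness of suitably shrunken balls within each bucket, and sums over buckets. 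This may lose a $\poly\log$ factor, which is harmless given the paper's $\tilde O(\cdot)$ bounds and is presumably what the formal version (\Cref{thm:spd reduction}) records; getting the clean $O(n/k)$ with no loss is the delicate point.
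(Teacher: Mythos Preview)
You set up exactly the right framework (minimal-hop shortest path $Q$, the ``no long chord'' observation, and property (ii)), but you stop one step short of the punchline and then head off into an unnecessary bucketing argument that loses a $\log$ factor.

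The point you are missing is this: if $y\in N_k(y_i)\cap N_k(y_j)$, then \emph{both} shortcut edges $y_iy$ and $yy_j$ are present in $G'$, so $\langle y_i,y,y_j\rangle$ is a $2$-edge path in $G'$.  Now just use your own property (ii) in the paper's way.  Since $y_{i+2}\notin N_k(y_i)$, every $z\in N_k(y_i)$ has $\dist_{G,w}(y_i,z)\le \dist_{G,w}(y_i,y_{i+2})$; in particular $w'(y_iy)\le \dist_{G,w}(y_i,y_{i+2})$.  Symmetrically $w'(yy_j)\le \dist_{G,w}(y_{j-2},y_j)$.  If $j\ge i+4$, the two subpaths $y_i\cdots y_{i+2}$ and $y_{j-2}\cdots y_j$ of $Q$ are edge-disjoint, so
\[
w'(y_iy)+w'(yy_j)\ \le\ \dist_{G,w}(y_i,y_{i+2})+\dist_{G,w}(y_{j-2},y_j)\ \le\ w'(y_i\cdots y_j).
\]
Splicing $\langle y_i,y,y_j\rangle$ into $Q$ therefore yields a shortest $u$--$v$ path with strictly fewer edges, contradicting the minimality of $Q$.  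Hence the sets $N_k(y_0),N_k(y_4),N_k(y_8),\ldots$ are pairwise disjoint, and $h<4n/k$ with no logarithmic loss.  This is exactly the paper's argument (\Cref{thm:spd reduction}): it looks only at indices $4$ apart and uses the midpoint $y_{4i+2}$ as the witness that forces disjointness.  Your radius/triangle-inequality detour and the bucketing are not needed.
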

We note that the above claim would be trivially true if we add a shortcut from every node to all nodes within $k$ hops from it. The non-trivial part is showing that it is sufficient to add shortcuts to only $k$ nearest nodes. Note that this claim holds only for undirected graphs and the proof has to carefully exploit the fact that the network is undirected. 

%We also note that it is {\em not} possible to add edges to distributed networks so we cannot use this claim to directly reduce the shortest-path diameter of the original network; instead, we will use it to add {\em virtual edges} to reduce the shortest-path diameter of a {\em virtual (overlay) network} (which we will discuss next). 

To the best of our knowledge, there is no previous work that proves and uses this fact in the distributed setting. 
Previous work that is somewhat related is the BSP algorithm of Lenzen and Patt-Shamir \cite{LenzenP13} which finds $h$-hop distances to $k$ nearest nodes in $O(hk)$ time. In this work, the algorithm is not used to create shortcuts, but rather to collect information about a sufficient number of nodes so that one of them is also in some set of uniformly sampled nodes.
%
%The purpose of this algorithm is not to create shortcuts but to have information to a large number of neighbors to make sure that one of them will be in a set of some random nodes. 
%
%In the {\em short-range scheme}, which is part of the algorithm of Lenzen and Patt-Shamir \cite{LenzenP13}, one also has to find the distance to $n^{1/2}$ nearest nodes. However, the purpose of this scheme is not to create shortcuts but to have information to a large number of neighbors to make sure that one of them will be in a set of $n^{1/2}$ random nodes (``skeleton nodes''). 
%
%\danupon{Can we improve LZ? No. We don't know how to deal with the case where two nodes are less than $\sqrt{n}$ hops from each other.}  
%
%It was pointed out to us by a STOC 2014 reviewer that our technique might be related to the technique of Cohen \cite{Cohen00}. 
%
Another related work is the notion of {\em $(d, \epsilon)$-hop set} introduced by Cohen \cite{Cohen00} in the PRAM setting: 
%
%Our technique is indeed closely related to the notion of {\em $(d, \epsilon)$-hop set} used in \cite{Cohen00}: 
%
our shortest path diameter reduction technique can be considered as a simple construction of $(d, 0)$-hop set of size $O(n^2/d)$. It might be possible to improve our algorithm by applying a more advanced construction of such hop set to the distributed setting.

%A paper for computing shortest paths on PRAM by Cohen \cite{Cohen00} was pointed out to us by a STOC 2014 reviewer. 

\subsection{Sketches of Algorithms}

\paragraph{\apsp on General Networks (details in \Cref{sec:sssp})} Algorithm for \apsp follows almost immediately from the the first tool above. By applying \Cref{claim:k-source h-hop sp} with $h=k=n$, we can $(1+o(1))$-approximate \sssp with every node as a source in $\tilde O(n)$ time; in other words, we can $(1+o(1))$-approximate \apsp in $\tilde O(n)$ time on general networks.

\paragraph{\sssp on Fully-Connected Networks (details in \Cref{sec:sssp clique})} This result follows easily from the the second tool above. To compute \sssp exactly on fully-connected networks, we will compute $k$ shortcuts from every node, where $k=n^{1/2}$. To do this, we show that it is enough for every node to send $k$ lightest-weight edges incident to it to all other nodes (since running $k$ rounds of Dijkstra's algorithm will only need these edges). This takes $O(n^{1/2})$ time. 
%(Alternatively, every node can run Dijkstra's algorithm to find $n^{1/2}$ nearest nodes in parallel.) 
%
Using this information to modify the weight assignment from $w$ to $w'$, we can reduce the shortest-path diameter of the network to $\spdiam(G, w')\leq \sqrt{n}$ without changing the distance between nodes; this fact is due to \Cref{claim:spd reduction}. We then run Bellman-Ford algorithm on this $(G, w')$ to solve \sssp; this takes  $\spdiam(G, w')=O(n^{1/2})$ time.

\paragraph{\apsp on Fully-Connected Networks (details in \Cref{sec:apsp clique})} We will need both tools for this result. {\bf Step 1:} Like the previous algorithm for \sssp on fully-connected network, we compute $n^{1/2}$ shortcuts from every node in $O(n^{1/2})$ time. Again, by \Cref{claim:spd reduction}, this gives us a graph $(G, w')$ such that $\spdiam(G, w')=O(n^{1/2})$. Additionally, every node sends these shortcuts to all other nodes (taking $O(n^{1/2})$ time).  
{\bf Step 2:} We then randomly pick  $n^{1/2}\poly\log n$ nodes and run the light-weight $h$-hop \sssp algorithm from these nodes, where $h=\spdiam(G, w')=O(n^{1/2})$. By \Cref{claim:k-source h-hop sp}, this takes $\tilde O(n^{1/2})$ time w.h.p. and gives us $(1+o(1))$-approximate values of the distances $\dist_{G, w}(x, v)$ between each random node $x$ and all other nodes $v$ (known by $v$). Each node $v$ broadcasts distances to these $n^{1/2}\poly\log n$ random nodes to all other nodes, taking $\tilde O(n^{1/2})$ time.

After this, we show that every node can use the information they have received so far to compute $(2+o(1))$-approximate values of its distances to all other nodes. (In particular, every node uses the distances it receives to build a graph and uses the distances in such graph as approximate distances between itself and other nodes.)\danupon{Does this also give a $2$-spanner or emulator? That would be weird!!!}
To explain the main idea, we show how to prove a $(3+o(1))$ approximation factor instead of $2+o(1)$: Consider any two nodes $u$ and $v$, and let $P$ be a shortest path between them. If $v$ is one of the $n^{1/2}$ nodes nearest to $u$, then $u$ already knows $\dist_{G, w}(u, v)$ from the first step (when we compute shortcuts). Otherwise, by a standard hitting set argument, one of these $n^{1/2}$ nearest nodes must be picked as one of  $n^{1/2}\poly\log n$ random nodes; let $x$ be such a node. Observe that $\dist_{G, w}(u, x)\leq \dist_{G, w}(u, v)$. By triangle inequality 
$$\dist_{G, w}(x, v)\leq \dist_{G, w}(x, u)+\dist_{G, w}(u, v) \leq 2\dist_{G, w}(u, v).$$ 
Again, by triangle inequality, 
$$\dist_{G, w}(u, v)\leq \dist_{G, w}(u, x)+\dist_{G, w}(x, v) \leq 3\dist_{G, w}(u, v);$$ 
in other words, $\dist_{G, w}(u, x)+\dist_{G, w}(x, v)$ is a $3$-approximate value of $\dist_{G, w}(u, v)$. Note that $u$ knows the exact value of $\dist_{G, w}(u, x)$ (from the first step) and the $(1+o(1))$-approximate value of $\dist_{G, w}(x, v)$ (from the second step). So, it can compute a $(1+o(1))$-approximate value of $\dist_{G, w}(u, x)+\dist_{G, w}(x, v)$  which is a $(3+o(1))$-approximate value of $\dist_{G, w}(u, v)$. Using the same argument, $v$ can also compute a $(3+o(1))$-approximate value of $\dist_{G, w}(u, v)$. To extend this idea to a $(2+o(1))$-approximation algorithm, we use exactly the same algorithm but has to consider a few more cases.

\paragraph{\sssp on General Networks (details in \Cref{sec:sssp})} Approximating \sssp in sublinear time needs both tools above and a few other ideas. First, we let $S$ be a set of $\frac{n^{1/2}}{\diam^{1/4}}\poly\log n$ random nodes and the source $s$. We need the following.
\begin{claim}[details in \Cref{sec:overlay}]\label{claim:overlay network is enough}
Let $h=n^{1/2}\diam^{1/4}$. Every node $v$ can compute an approximate distance to $s$ if it knows (i) approximate $h$-hop distances between itself and all nodes in $S$, and (ii) distances between the source $s$ and all nodes in $S$ in the following weighted graph $(G', w')$: nodes in $G'$ are those in $S$, and every edge $uv$ in $G'$ has weight equal to the $h$-hop distance between $u$ and $v$ in $G$.
\end{claim}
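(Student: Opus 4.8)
The plan is to show that a shortest $s$--$v$ path in $(G,w)$ can be ``chopped'' at nodes of $S$ into segments of at most $h$ hops each, so that its total weight is recovered by combining distances within the overlay graph $(G',w')$ with one final $h$-hop hop from $S$ to $v$. First I would fix the source $s$ and an arbitrary node $v$, and let $P$ be a shortest $s$--$v$ path in $(G,w)$. I intend to first dispatch the easy case: if $P$ has at most $h$ hops, then $\dist^h_{G,w}(s,v)=\dist_{G,w}(s,v)$ and, since $s\in S$, item~(i) already gives $v$ a $(1+o(1))$-approximation directly. The interesting case is when $P$ has more than $h$ hops, and here I would invoke the shortest-path diameter reduction, \Cref{claim:spd reduction} (via \Cref{thm:spd reduction}): after adding the $k$ shortcuts to $k$ nearest nodes, with $k$ chosen so that $n/k \le h$, every pair of nodes has a shortest path of at most $h$ hops in the shortcut graph. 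The key combinatorial point I must argue is that one can realize the $s$--$v$ shortest path as a concatenation $s = u_0, u_1, \dots, u_{t-1}, v$ where consecutive $u_i$'s are connected by a genuine $\le h$-hop path in $G$ of weight exactly $\dist_{G,w}(u_i,u_{i+1})$, and crucially all the intermediate ``breakpoints'' $u_1,\dots,u_{t-1}$ lie in $S$.

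The mechanism for forcing breakpoints into $S$ is a standard hitting-set / random-sampling argument, and this is the step I expect to be the main obstacle — getting the bookkeeping exactly right so the segments are simultaneously (a) at most $h$ hops, (b) weight-optimal, and (c) end at sampled nodes. The idea: walk along a bounded-hop shortest path from $s$; among any $\Theta(n/k)$ consecutive nodes on it, a shortcut edge jumps ahead to a nearest node, and within any window of roughly $\tilde\Theta(n^{1/2}/D^{1/4})$ nodes one is hit by $S$ w.h.p.\ because $|S| = \tilde\Theta(n^{1/2}/D^{1/4})$ random nodes hit any set of that size. I would choose the window sizes so that $h = n^{1/2}D^{1/4}$ comfortably exceeds the hop-length of a segment between two consecutive $S$-breakpoints. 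Then $\dist_{G,w}(u_i,u_{i+1}) = \dist^h_{G,w}(u_i,u_{i+1})$, which is exactly the weight of edge $u_iu_{i+1}$ in $(G',w')$; summing over the prefix $u_0,\dots,u_{t-1}$ gives $\dist_{G',w'}(s,u_{t-1}) \le \sum \dist_{G,w}(u_i,u_{i+1})$, and this is the quantity supplied to $v$ by item~(ii). The final segment from $u_{t-1}$ to $v$ has at most $h$ hops and weight $\dist_{G,w}(u_{t-1},v) = \dist^h_{G,w}(u_{t-1},v)$, which $v$ knows approximately from item~(i).

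Putting the pieces together: $v$ computes $\min_{x\in S}\bigl(\widetilde{\dist}_{G',w'}(s,x) + \widetilde{\dist}^h_{G,w}(x,v)\bigr)$, where the tildes denote the $(1+o(1))$-approximate values from \Cref{claim:k-source h-hop sp}. For the upper bound (feasibility of this estimate), every term is an approximation to the weight of an actual $s$--$v$ walk, so it is at least $\dist_{G,w}(s,v)$ up to the multiplicative error absorbed into $1+o(1)$; for the lower bound, the decomposition above exhibits a choice $x = u_{t-1}$ for which the sum is at most $(1+o(1))\dist_{G,w}(s,v)$. Multiplying the two $(1+o(1))$ factors (one from the overlay-distance approximation, one from the $h$-hop approximation) still yields $1+o(1)$. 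The only subtlety to handle carefully in the write-up is that item~(ii) gives distances \emph{in $(G',w')$}, whose edge weights are themselves $h$-hop distances in $G$ — so I should either take $(G',w')$ to have \emph{exact} $h$-hop weights (as the claim states) and account for the approximation only when these weights are actually computed in the full algorithm, or carry the $(1+o(1))$ factor through the overlay shortest-path computation as well; I would state the claim with exact weights here, as written, and defer the approximation accounting to the section where $(G',w')$ is built.
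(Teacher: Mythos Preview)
Your overall architecture is right and matches the paper: break a shortest $s$--$v$ path $P$ at nodes of $S$, argue consecutive breakpoints are at most $h$ hops apart, then have $v$ output $\min_{x\in S}\bigl(\widetilde{\dist}_{G',w'}(s,x)+\widetilde{\dist}^h_{G,w}(x,v)\bigr)$. The paper (\Cref{sec:overlay}, \Cref{thm:overlay network}) does exactly this, including the ``easy case'' when $P$ already has at most $h$ hops.

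Where you diverge is in \emph{how} you force the breakpoints into $S$. You invoke the shortcut tool (\Cref{claim:spd reduction}/\Cref{thm:spd reduction}) and talk about walking along a bounded-hop shortest path in the shortcut-augmented graph. This is unnecessary and is not what the paper does here. The paper's argument is a direct hitting-set lemma (\Cref{thm:landmark}, citing Ullman--Yannakakis): since $S$ consists of $\tilde\Theta(n/h)$ uniformly random nodes, \emph{every} window of $h$ consecutive vertices on the \emph{original} shortest path $P$ contains a node of $S$ w.h.p. That immediately gives the breakpoints $u_1,\dots,u_{t-1}\in S$ on $P$ itself, each consecutive pair at most $h$ hops apart, with $\dist_{G,w}(u_i,u_{i+1})=\dist^h_{G,w}(u_i,u_{i+1})$ because that segment of $P$ is already a shortest path. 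No shortcuts, no detour through a modified graph.

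Two concrete points to fix: (1) Your window size is inverted --- a random set of size $\tilde\Theta(n^{1/2}/\diam^{1/4})$ hits windows of size $\tilde\Theta(n^{1/2}\diam^{1/4})=h$, not windows of size $n^{1/2}/\diam^{1/4}$. (2) The shortcut technique is used in the paper, but \emph{later} and \emph{on the overlay network $(G',w')$} (\Cref{sec:reduce spd overlay}), to bring $\spdiam(G',w')$ down before solving \sssp there; it plays no role in establishing \Cref{claim:overlay network is enough}. Mixing it in here muddles which tool does which job --- in particular, shortcut edges go to distance-nearest nodes, not hop-nearest, so your sentence ``among any $\Theta(n/k)$ consecutive nodes on it, a shortcut edge jumps ahead'' does not parse. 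Strip out the shortcut step and your proof becomes essentially the paper's.
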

We call graph $(G', w')$ an overlay network since it can be viewed as a network sitting on the original network $(G, w)$. The idea of using the overlay network to compute distances is not new. It is a crucial tool in the context of dynamic data structures and distance oracle (e.g. \cite{DemetrescuFI05}). In distributed computing literature, it has appeared (in a slightly different form) in, e.g., \cite{LenzenP13}. 

Our main task is now to achieve (i) and (ii) in \Cref{claim:overlay network is enough}. Achieving (i) is in fact very easy: We simply run our light-weight $h$-hop \sssp from all nodes in $S$. By \Cref{claim:k-source h-hop sp}, this takes time $\tilde O(|S|+h)=\tilde O(n^{1/2}\diam^{1/4})$.\footnote{Note that nodes actually only know $(1+o(1))$ distances. To keep our discussion simple, we will pretend that they know the real distance.} In fact, by doing this we already partly achieve (ii): every node in $S$ already know the $h$-hop distance to all other nodes in $S$, thus it already has a ``local'' perspective in the overlay network $(G', w')$. To finish (ii), it is left to solve \sssp on $(G', w')$. 

To do this, we will first reduce the shortest-path diameter of the overlay network $(G', w')$ by creating $k$ shortcuts, where $k=\diam^{1/2}$. As noted in the \sssp algorithm on fully-connected network, it is enough for every node in $(G', w')$ to send $k$ lightest-weight edges incident to it to all other nodes (since running $k$ rounds of Dijkstra's algorithm will only need these edges). Broadcasting each such edge can be done in $O(\diam)$ time via the breadth-first search tree, and broadcasting all $k|S|=\tilde O(n^{1/2}\diam^{1/4})$ edges takes $\tilde O(n^{1/2}\diam^{1/4}+\diam)$ time by pipelining. (See details in \Cref{sec:reduce spd overlay}.) Let $(G'', w'')$ be an overlay network obtained from adding $k$ shortcuts to $(G', w')$. (As usual, nodes $u$ in $(G'', w'')$ only know weights $w''(uv)$ of edges $uv$ incident to it.) By \Cref{claim:spd reduction}, $\spdiam(G'', w'') = O(|S|/k) = \tilde O(n^{1/2}/\diam^{3/4})$.
Finally, we simulate our light-weight $h'$-hop \sssp algorithm to solve \sssp from source $s$ on overlay $(G'', w'')$, where $h'=\spdiam(G'', w'') = \tilde O(n^{1/2}/\diam^{3/4})$. To do this efficiently, we need a slightly stronger property of our light-weight $h'$-hop \sssp algorithm: recall that we have claimed that in our light-weight \sssp algorithm, each node sends a message through each edge only $\tilde O(1)$ times. In fact, we can show the following stronger claim.
\begin{claim}[details in \Cref{thm:light-weight sssp}]\label{claim:light-weight sssp broadcast}
In the light-weight \sssp algorithm, each node communicates in each round by {\em broadcasting} the same message to its neighbors. Moreover, each node broadcasts messages only for $\tilde O(1)$ times.
\end{claim}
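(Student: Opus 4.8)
The plan is to unwind the description of the light-weight $h$-hop \sssp algorithm and observe that its only communication primitive is a single round of a weighted breadth-first search, repeated once per distance scale, and that a weighted BFS on an \emph{integer}-weighted graph is inherently broadcast-only and fires each vertex at most once.

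Recall the structure: fix $\epsilon=1/\log n$; for each scale index $i=0,1,\ldots,O(\log_{1+\epsilon}\poly(n))$ — there are only $\tilde O(1)$ of them — set $W'=(1+\epsilon)^i$, define the rescaled integer weights $w'(uv)=\lceil h\,w(uv)/(\epsilon W')\rceil$, run a BFS from $s$ on $(G,w')$ for $\Theta(h/\epsilon)=\tilde O(h)$ rounds, and finally take, at each node, the minimum over all scales of the distances discovered. First I would make precise that BFS on an integer-weighted graph is light-weight and broadcast-only. Conceptually, replace each edge $uv$ of $(G,w')$ by a chain of $w'(uv)$ unit edges; an ordinary BFS on this unweighted graph reaches every (real or virtual) vertex exactly once, at the round equal to its unweighted distance from $s$, which for a real vertex $v$ equals $\dist_{G,w'}(s,v)$. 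In the actual implementation no virtual vertex is materialized: when a real vertex $v$ is reached (at round $r=\dist_{G,w'}(s,v)$) it broadcasts a single $O(\log n)$-bit ``reached'' signal (carrying the source identity) to \emph{all} of its neighbors; a neighbor $u$, knowing the globally-synchronized round counter $r$ and its locally stored weight $w'(uv)$, infers that the wavefront emerges from the $uv$-chain at round $r+w'(uv)$, and $u$ locally takes the minimum over its incident chains. Hence within a single scale every real vertex transmits at most once, always the same message to all neighbors, and in every round a vertex either performs this one broadcast or stays silent — never does it send different messages on different incident edges.

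Next I would assemble the global bound. The scales are executed one after another, the round counter telling every node which scale is active and whether its $\Theta(h/\epsilon)$-round budget has elapsed; thus over the whole execution each node broadcasts at most once per scale, i.e.\ $\tilde O(1)$ times in total, and the per-round ``broadcast-the-same-message'' property is inherited scale by scale. Vertices whose $w'$-distance exceeds the round budget simply never fire in that scale, which is harmless because, by the rescaling identity already displayed in \Cref{sec:techniques}, at the scale $W'$ matching the weight of a given $\le h$-hop path that path has $w'$-weight $O(h/\epsilon)$, so the BFS is run long enough to capture it; taking the minimum over scales is purely local and costs no communication. Combined with \Cref{claim:k-source h-hop sp} (itself obtained by the random-delay parallelization of these broadcasts), this gives exactly the stronger statement asserted.

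The point needing the most care is the interaction between this hop-oblivious BFS and the $h$-hop restriction: a genuine $h$-hop Bellman-Ford would track hop counts and would be heavy-weight, whereas the BFS above may report a smaller distance realized by a path of more than $h$ hops. I would handle this precisely as in the approximation analysis sketched above: the value returned at a node $v$ is sandwiched between $\dist_{G,w}(s,v)$ and $(1+o(1))\,\dist^{h}_{G,w}(s,v)$ — the lower bound because every $w'$-path corresponds to a genuine $w$-path, the upper bound because the matching scale's BFS sees the optimal $\le h$-hop path with $w'$-distortion at most $(1+o(1))$. This argument introduces no extra messages, so the broadcast count remains $\tilde O(1)$, which is what \Cref{thm:light-weight sssp} claims.
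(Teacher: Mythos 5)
Your proposal is correct and matches the paper's own argument: the paper's formal proof (Algorithm~\ref{algo:bounded-hop sssp} together with Algorithm~\ref{algo:pseudo polynomial}) likewise runs one BFS-style scan per distance scale, has each node fire a single broadcast message $(s,\ell)$ to all its neighbors at the round $\ell$ equal to its rescaled distance from $s$, and bounds the total broadcasts by the number of scales, which is $\tilde O(1)$ because the maximum weight is $\poly(n)$. The only cosmetic difference is that you use scales $(1+\epsilon)^i$ while the formal section uses $2^i$ (giving $O(\log n)$ rather than $O(\log^2 n)$ scales), and you spell out the unit-chain interpretation of the weighted BFS that the paper leaves implicit; neither changes the substance.
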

The intuition behind the above claim is simple: at the heart of our light-weight \sssp algorithm, we solve $\tilde O(1)$ breadth-first search algorithms where, for each of these algorithms, each node broadcasts only once; it broadcasts its distance to the root, say $d$, at time $d$.
Now we simulate our light-weight \sssp algorithm on $(G'', w'')$ as follows. When each node $v$ wants to broadcast a message to all its neighbors in $G''$, we broadcast this message to all nodes in $G$, using the breadth-first search tree of $G$ (see details in \Cref{sec:sssp overlay}). 
This takes $O(\diam)$ time. If we want to broadcast $M_i$ messages in a round $i$ of our light-weight \sssp algorithm, we can do so in $O(\diam+M_i)$ time by pipelining. It can then be shown that the time we need to simulate all $r=\tilde O(h')=\tilde O(n^{1/2}/\diam^{3/4})$ rounds of our light-weight $h'$-hop \sssp algorithm takes $\tilde O(\diam h' +\sum_{i=1}^{r} M_i) = \tilde O(n^{1/2}\diam^{1/4} + \diam)$ (note that $\sum_{i=1}^{r} M_i = \tilde O(|S|)$ by \Cref{claim:light-weight sssp broadcast}). (See details in \Cref{sec:sssp time analysis}.) This completes (ii) in \Cref{claim:overlay network is enough}, and thus we can solve \sssp on $(G, w)$ in $\tilde O(n^{1/2}\diam^{1/4}+\diam)$ time.

%x
%{\em $h$-hop \sssp} = bounded to $K$ hops.

%\input{overview3}

%\input{prelim}

\section{Main Tools} \label{sec:tools}

\subsection{Light-Weight Bounded-Hop Single-Source and Multi-Source Shortest Paths}\label{sec:light-weight bounded-hop sssp}

A key tool for our algorithm is a simple idea for computing a bounded-hop single-source shortest path and its extensions. Informally, an $h$-hop path is a path containing at most $h$ edges. The $h$-hop distance between two nodes $u$ and $v$ is the minimum weight among all $u$-$v$ $h$-hop paths. The problem of $h$-hop \sssp is to find the $h$-hop distance between a given source node $s$ and all other nodes. Formally:
\begin{definition}[$h$-hop \sssp]\label{def:bounded hop sssp}
Consider any network $G$ with edge weight $w$ and integer $h$. For any nodes $u$ and $v$, let $\cP^h(u, v)$ be a set of $u$-$v$ paths containing at most $h$ edges. We call $\cP^h(u, v)$ a set of $h$-hop $u$-$v$ paths. 
Define the $h$-hop distance between $u$ and $v$ as %$\dist_{G, w}^{h}(u, v) = \min_{P\in \cP^h(u, v)} w(P)$ if $\cP^h(u, v)\neq \emptyset$ and $\dist_{G, w}^{h}(u, v) =\infty$ otherwise.
\begin{align*}
\dist_{G, w}^{h}(u, v) &= 
\begin{cases}
\min_{P\in \cP^h(u, v)} w(P) & \mbox{if $\cP^h(u, v)\neq \emptyset$}\\
\infty & \mbox{otherwise.}
\end{cases}
\end{align*}
%$ if $\cP^h(u, v)\neq \emptyset$ and $\dist_{G, w}^{h}(u, v) =\infty$ otherwise.
%
\danupon{To define: $w(P)$.} 
Let $h$-hop \sssp be the problem where, for a given weighted network $(G, w)$, source node $s$ (node $s$ knows that it is the source), and integer $h$ (known to every node), we want every node $u$ to know $\dist_{G, w}^{h}(s, u)$.\danupon{Does $u$ always know that $\dist_{G, w}^{h}(s, u)=\infty$?}
\end{definition}
This problem can be solved in $O(h+D)$ time using, e.g., the distributed version of Bellman-Ford's algorithm. However, previous algorithms are ``heavy-weight'' in the sense that they require so much communication (i.e., there could be as large as $\Omega(h)$ messages sent through an edge) and thus are not suitable for parallelization. In this paper, we show a simple algorithm that can $(1+o(1))$-approximate $h$-hop \sssp in $\tilde O(h+D)$ time. Our algorithm is {\em light-weight} in the sense that every node broadcasts a message (of size $O(\poly\log n)$) to their neighbors only $O(\log n)$ times:
%each node sends only one message (of size $O(\poly\log n)$) through each edge:
%
%\begin{proposition}[Light-weight $h$-hop \sssp algorithm; Proof in \Cref{sec:light-weight bounded-hop sssp}]\label{thm:light-weight sssp}\label{thm:light}
%There is an algorithm that solves $h$-hop \sssp on network $G$ with weight $w$\danupon{Don't forget that $w$ must be positive} in $O(h)$-time and, during the whole computation, every node $u$ sends a message of size $O(\poly\log n)$ to any of its neighbors $v$ at most once. 
%\end{proposition} 
%
\begin{restatable}[Light-weight $h$-hop \sssp algorithm; proof in \Cref{sec:proof of them:light}]{theorem}{lighthsp}\label{thm:light-weight sssp}\label{thm:light}
There is an algorithm that solves $h$-hop \sssp on network $G$ with weight $w$\danupon{Don't forget that $w$ must be positive} in $\tilde O(h+D)$-time and, during the whole computation, every node $u$ broadcasts $O(\log n)$ messages, each of size $O(\poly\log n)$, to its neighbors $v$. 
\end{restatable}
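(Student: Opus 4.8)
The plan is to avoid the distributed Bellman--Ford algorithm, which is exact but ``heavy-weight'' (up to $\Omega(h)$ messages may cross a single edge), and instead to reduce the task, up to a $(1+o(1))$ factor, to $\tilde O(1)$ runs of an ordinary weighted breadth-first search on an integer-weighted graph with \emph{small} weights; a single weighted BFS is automatically light-weight. As sketched in \Cref{sec:techniques}, the key is the weight-rounding trick: with a parameter $\epsilon$ (take $\epsilon=1/\log n$ for the $1+o(1)$ guarantee), and for each ``magnitude guess'' $W$ of the form $(1+\epsilon)^i$ with $0\le i=O(\tfrac{\log n}{\epsilon})$ --- these cover the whole range $[1,\poly n]$ of finite $h$-hop distances, and every node can enumerate them from $n$ alone --- define the locally computable integer weight $w_W(uv):=\lceil h\,w(uv)/(\epsilon W)\rceil$. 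I would first record the two elementary inequalities: for every path $P$ with at most $h$ edges, $w(P)\le \tfrac{\epsilon W}{h}\,w_W(P)\le w(P)+\epsilon W$, and if in addition $w(P)\in[W,(1+\epsilon)W]$ then $w_W(P)\le \tfrac{(1+\epsilon)h}{\epsilon}+h=O(h/\epsilon)$. Hence, at the scale $W$ matching $\dist_{G,w}^{h}(s,v)$, the $w_W$-distance from $s$ to $v$ is at most $L:=O(h/\epsilon)$ and, rescaled by $\epsilon W/h$, is a $(1+o(1))$-approximation of $\dist_{G,w}^{h}(s,v)$.

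Next I would implement, for each scale $W$, the textbook synchronous weighted BFS from $s$ on $(G,w_W)$ truncated at $w_W$-depth $L$: $s$ broadcasts ``$0$'' in round $0$, and every node at $w_W$-distance $d\le L$ from $s$ learns the value $d$ by round $d$ (each $w_W$-weight is $\ge1$, so such a node is within $d$ hops of $s$, which makes the schedule realizable) and broadcasts ``my distance is $d$'' exactly once, in round $d$. This run therefore finishes in $O(L+D)=\tilde O(h+D)$ rounds, and every node broadcasts at most one $O(\log n)$-bit message; each node keeps the minimum, over all scales, of the rescaled values it hears. Running the $O(\tfrac{\log n}{\epsilon})$ scales one after another (or pipelined, with an $O(\log n)$-bit scale index attached so messages stay $O(\poly\log n)$ bits) costs $\tilde O(h+D)$ rounds in total and $O(\tfrac{\log n}{\epsilon})$ broadcasts per node --- a fixed constant $\epsilon$ yields the $O(\log n)$ broadcasts of the statement, and $\epsilon=1/\log n$ keeps this at $\tilde O(1)$ while giving the $(1+o(1))$ ratio.

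The step I expect to need the most care is the interaction between the hop bound and the fact that weighted BFS on $w_W$ is \emph{not} hop-constrained: the path realizing the $w_W$-distance may use more than $h$ original edges, so a priori the output $\widetilde{\dist}_{G,w}(s,v)$ is only sandwiched as $\dist_{G,w}(s,v)\le \widetilde{\dist}_{G,w}(s,v)\le (1+o(1))\dist_{G,w}^{h}(s,v)$. I would argue that this sandwich is exactly what the downstream applications need (they want true distances and invoke the routine with $h\ge\spdiam$, where $\dist_{G,w}^{h}=\dist_{G,w}$); alternatively, to obtain an honest $(1+o(1))$-approximation of $\dist_{G,w}^{h}$ itself one propagates an $O(\log h)$-bit hop counter alongside the distance and discards witnesses using more than $h$ edges, which affects neither the round count nor the per-edge message count asymptotically. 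The remaining ingredients --- that all nodes agree on the schedule of scales with no extra communication (they know the round number), that truncation at depth $L$ never discards the relevant path, and an $O(D)$ slack for announcing termination --- are routine, and I would defer them together with the full argument to the section containing the proof of this theorem.
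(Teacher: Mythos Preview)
Your approach is essentially the paper's: round weights by a scale-dependent factor so that the relevant $h$-hop path has $w_W$-length $O(h/\epsilon)$, then run a truncated weighted BFS (which is inherently light-weight) at each scale and take the minimum of the rescaled outputs. The one substantive difference is your choice of scales. You take $W=(1+\epsilon)^i$, giving $O(\tfrac{\log n}{\epsilon})$ scales; with $\epsilon=1/\log n$ this is $O(\log^2 n)$ broadcasts per node, not the $O(\log n)$ of the statement (you noticed this and hedged). The paper instead takes $D'_i=2^i$ --- only $O(\log n)$ scales regardless of $\epsilon$ --- and compensates with a factor of $2$ in the rounding: $w'_i(xy)=\big\lceil 2h\,w(xy)/(\epsilon D'_i)\big\rceil$. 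The point is that if $D'_{i^*-1}\le \dist^h_{G,w}(s,v)\le D'_{i^*}$ then the additive rounding error over at most $h$ edges is at most $h\cdot\tfrac{\epsilon D'_{i^*}}{2h}=\tfrac{\epsilon}{2}D'_{i^*}\le \epsilon\,\dist^h_{G,w}(s,v)$, so the $(1+\epsilon)$ guarantee survives even with coarse factor-$2$ scale spacing. This decouples the number of scales from $\epsilon$ and recovers the literal $O(\log n)$ message count with $\epsilon=1/\log n$.

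Your discussion of the hop-bound subtlety is apt and arguably more careful than the paper's own argument: the unrestricted $w'_i$-shortest path may use more than $h$ edges, so the rescaled output is only guaranteed to lie in $[\dist_{G,w}(s,v),\,(1+\epsilon)\dist^h_{G,w}(s,v)]$, not in $[\dist^h_{G,w}(s,v),\,(1+\epsilon)\dist^h_{G,w}(s,v)]$. As you say, the downstream applications only need the former sandwich (they invoke the routine with $h\ge\spdiam$), and your hop-counter patch works if one insists on approximating $\dist^h$ itself.
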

\Cref{thm:light}, in its own form, cannot be directly used. We will extend it to an algorithm for computing $h$-hop {\em multi-source shortest paths} (\mssp). (Later, in \Cref{sec:overlay} we will also extend this result to {\em overlay networks}.)
%
%We will extend it in two ways to get algorithms for computing $h$-hop {\em multi-source shortest paths} (\mssp)\danupon{Should define this problem either here or in prelim} and \sssp on {\em overlay networks}. These extensions will be explained in \Cref{sec:bounded-hop mssp} and XXX. 
%
The rest of this subsection is devoted to proving \Cref{thm:light-weight sssp}.\danupon{To do: Add some overview as commented below}

%The proof consists of two parts. In the first part, we show that edge weights can be rounded, with a small approximation factor, in such a way that 

%The algorithm for approximating \apsp is outlined in Algorithm~\ref{algo:approx apsp}. It is similar to the algorithm for approximating the diameter, except that we do not have $D'$, the two-approximation of the diameter. Instead, we will ``guess'' all possible values of the distance between every pair of nodes, using $D'_i=2^i$, as shown in the for-loop of Algorithm~\ref{algo:approx apsp}.

%\begin{lemma}\label{thm:approximate weight}
%For any $i$ and edge $xy$, let  $D'_i=2^i$ and $w'_i(xy)=\lceil \frac{2n w(xy)}{\epsilon D'_i}\rceil$. 
%Consider any node $u$ and $v$. For any $i$,
%\begin{align}
%\frac{\epsilon 2^i}{2n}\times \dist_{G, w'_i}(u, v)\geq \dist_{G, w}(u, v).\label{eq:apsp approx main one}
%\end{align}
%Moreover, for $i^*$  such that $2^{i^*-1}\leq \dist_{G, w}(u, v)\leq 2^{i^*}$, we have that
%\begin{align}
%\dist_{G, w'_{i^*}}(u, v)&\leq (1+2/\epsilon)n ~~\mbox{and}\label{eq:apsp approx main two}\\
%\frac{\epsilon D_{i^*}}{2n}\times \dist_{G, w'_{i^*}}(u, v) &\leq (1+\epsilon)\dist_{G, w}(u, v).\label{eq:apsp approx main three}
%\end{align}
%\end{lemma}

\subsubsection{Reducing Bounded-Hop Distance by Approximating Weights}

%\begin{lemma}
%Consider any $n$-node weighted graph $(G, w)$ and an integer $h$. Let $\epsilon=\log n$. For any $i$ and edge $xy$, let $D'_i=2^i$ and $w'_i(xy)=\lceil \frac{2h w(xy)}{\epsilon D'_i}\rceil$. 
%%
%For any nodes $u$ and $v$, if we let 
%$$\tilde d_{G, w}(u, v) = \min_i  \left(\frac{\epsilon D'_{i}}{2h}\times \dist^{(1+2/\epsilon)n}_{G, w'_i}(u, v)\right),$$ 
%then $\dist_{G, w}(u, v)\leq \tilde d_{G,w}(u, v)\leq (1+\epsilon)\dist_{G, w}(u, v)$.
%\end{lemma}

\danupon{To do: Explain. Below says that we can approximate bounded-hop distance $\dist^h_{G, w}$ by solving bounded-distance $\dist^{O(h)}_{G, w'_i}$ instead.}\danupon{To do: Recall what $d^h$ is.}

\begin{proposition}[Reducing Bounded-Hop Distance by Approximating Weights]\label{thm:approximate weight}
Consider any $n$-node weighted graph $(G, w)$ and an integer $h$. Let $\epsilon=1/\log n$. For any $i$ and edge $xy$, let $D'_i=2^i$ and $w'_i(xy)=\lceil \frac{2h w(xy)}{\epsilon D'_i}\rceil$. 
For any nodes $u$ and $v$, if we let 
%\[
%\tilde d^h_{G, w}(u, v) = \min_i  \left(\frac{\epsilon D'_{i}}{2h}\times \dist^{(1+2/\epsilon)h}_{G, w'_i}(u, v)\right),
%\]
%
\[
\widetilde \dist^h_{G, w}(u, v) = \min  \left\{\frac{\epsilon D'_{i}}{2h}\times \dist_{G, w'_i}(u, v) \mid i:\ \dist_{G, w'_i}(u, v)\leq (1+2/\epsilon)h\right\},
\]
%
%
%\[
%\tilde \dist^h_{G, w}(u, v) = \min_i  
%\begin{cases}
%\frac{\epsilon D'_{i}}{2h}\times \dist^{(1+2/\epsilon)h}_{G, w'_i}(u, v) & \mbox{if XXX}\\
%B & \mbox{otherwise}
%\end{cases}
%%\left(\frac{\epsilon D'_{i}}{2h}\times \dist^{(1+2/\epsilon)h}_{G, w'_i}(u, v)\right),$$ 
%\]
%
then $\dist^h_{G, w}(u, v)\leq \widetilde \dist^h_{G,w}(u, v)\leq (1+\epsilon)\dist^h_{G, w}(u, v)$.
\end{proposition}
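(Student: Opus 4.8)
The plan is to establish the two inequalities separately, both relying on the core estimate relating $w$ and $w'_i$ along a fixed path. First I would record the pointwise bound coming from the definition of $w'_i$: since $w'_i(xy) = \lceil \frac{2hw(xy)}{\epsilon D'_i}\rceil$, we have $\frac{2hw(xy)}{\epsilon D'_i} \le w'_i(xy) < \frac{2hw(xy)}{\epsilon D'_i} + 1$, hence for any path $P$ with at most $h$ edges,
\[
\frac{2h}{\epsilon D'_i}\,w(P) \;\le\; w'_i(P) \;\le\; \frac{2h}{\epsilon D'_i}\,w(P) + h,
\]
where the $+h$ (rather than $+|P|$) uses that $P$ has at most $h$ edges. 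Equivalently, multiplying through by $\frac{\epsilon D'_i}{2h}$,
\[
w(P) \;\le\; \frac{\epsilon D'_i}{2h}\,w'_i(P) \;\le\; w(P) + \frac{\epsilon D'_i}{2}.
\]

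For the lower bound $\dist^h_{G,w}(u,v)\le \widetilde\dist^h_{G,w}(u,v)$: for any index $i$ appearing in the minimization, let $P$ be a shortest $u$-$v$ path under $w'_i$; the constraint $\dist_{G,w'_i}(u,v)\le (1+2/\epsilon)h$ forces $w'_i(P)$ to be finite, but I still need $P$ to have at most $h$ hops so that the pointwise estimate applies — this follows because each edge has $w'_i(xy)\ge 1$ (the ceiling of a positive number), so $w'_i(P)\ge (\text{number of edges of }P)$, and $w'_i(P)\le(1+2/\epsilon)h$ would a priori allow more than $h$ edges. This is the one subtle point: I would instead argue that $\dist^h_{G,w'_i}(u,v) = \dist_{G,w'_i}(u,v)$ whenever the latter is at most $(1+2/\epsilon)h$, or more simply replace $P$ by a shortest $h$-hop path under $w'_i$ and note it is no worse; in any case we get $\frac{\epsilon D'_i}{2h}\dist_{G,w'_i}(u,v)\ge \frac{\epsilon D'_i}{2h}w'_i(P)\ge w(P)\ge \dist^h_{G,w}(u,v)$. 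Taking the minimum over admissible $i$ gives the bound; I also need the minimization set to be nonempty, which the upper-bound argument will supply.

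For the upper bound $\widetilde\dist^h_{G,w}(u,v)\le(1+\epsilon)\dist^h_{G,w}(u,v)$: let $Q$ be a shortest $h$-hop $u$-$v$ path under $w$, so $w(Q)=\dist^h_{G,w}(u,v)$ (handle $\dist^h_{G,w}(u,v)=\infty$ trivially, and note if $w(Q)=0$ the claim is immediate). Choose $i$ so that $D'_i = 2^i$ satisfies $\frac{\epsilon D'_i}{2}\le \epsilon\, w(Q) < \epsilon D'_i$, i.e. $w(Q)\in(D'_i/2, D'_i]$; such an $i$ exists since the dyadic intervals cover the positive reals (and $w(Q)\ge 1$ keeps $i$ in range). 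Then from the pointwise estimate applied to $Q$,
\[
\frac{\epsilon D'_i}{2h}\,w'_i(Q) \;\le\; w(Q) + \frac{\epsilon D'_i}{2} \;\le\; w(Q) + \epsilon\,w(Q) \;=\;(1+\epsilon)\,w(Q).
\]
It remains to check this $i$ is admissible, i.e. $\dist_{G,w'_i}(u,v)\le(1+2/\epsilon)h$: from $w'_i(Q)\le \frac{2h}{\epsilon D'_i}w(Q)+h$ and $w(Q)\le D'_i$ we get $w'_i(Q)\le \frac{2h}{\epsilon}+h = (1+2/\epsilon)h$, and $\dist_{G,w'_i}(u,v)\le w'_i(Q)$. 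Hence $i$ is in the minimization set, so $\widetilde\dist^h_{G,w}(u,v)\le \frac{\epsilon D'_i}{2h}\dist_{G,w'_i}(u,v)\le \frac{\epsilon D'_i}{2h}w'_i(Q)\le(1+\epsilon)w(Q)$, as desired.

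The main obstacle is the bookkeeping around hop counts: making sure that whenever an index $i$ contributes to $\widetilde\dist^h_{G,w}$, the corresponding $w'_i$-shortest path really has at most $h$ edges so the pointwise bound converting $w'_i$-weight back to $w$-weight is valid — the cleanest fix is to observe $w'_i(xy)\ge 1$ for every edge, so any path of $w'_i$-weight at most $(1+2/\epsilon)h$ has at most that many edges, which is larger than $h$; thus I should phrase the lower-bound direction using $h$-hop distances $\dist^{O(h)}_{G,w'_i}$ throughout, or simply note that the quantity $\widetilde\dist^h$ is an upper bound on $\dist^h_{G,w}$ because any $w$-length it reports is realized (up to the rounding slack, in the wrong direction for this inequality — so actually the lower bound needs the path to be short in hops). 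I would resolve this by defining the minimization over $\dist^{(1+2/\epsilon)h}_{G,w'_i}$ implicitly and checking $(1+2/\epsilon)h$-hop paths under $w'_i$ still give valid $h$-hop... — on reflection the right statement is that the reported path may have up to $(1+2/\epsilon)h$ hops under $w'_i$ but few $w$-hops is not guaranteed; since the Proposition as stated only claims $\dist^h_{G,w}\le\widetilde\dist^h$, and $\widetilde\dist^h$ is built from genuine $u$-$v$ path weights in $(G,w'_i)$ rescaled, the lower bound follows as long as $\frac{\epsilon D'_i}{2h}w'_i(P)\ge w(P)\ge\dist_{G,w}(u,v)$, and $\dist_{G,w}(u,v)\le\dist^h_{G,w}(u,v)$ — wait, that is the wrong direction. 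So the hop bound genuinely matters and I will handle it by the $w'_i(xy)\ge 1$ observation combined with noting $(1+2/\epsilon)h$ is the relevant hop budget and that the upper-bound direction only ever uses $Q$ with $\le h$ hops, so the two directions use different hop budgets consistently.
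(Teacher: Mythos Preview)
Your upper-bound argument is correct and is exactly the paper's: choose $i^*$ with $D'_{i^*-1}<\dist^h_{G,w}(u,v)\le D'_{i^*}$, verify admissibility via a shortest $h$-hop $w$-path $Q$, and bound $\frac{\epsilon D'_{i^*}}{2h}w'_{i^*}(Q)\le(1+\epsilon)\dist^h_{G,w}(u,v)$.

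Your unease about the lower bound is well founded, and the paper does not actually resolve it either. The paper's argument for $\frac{\epsilon D'_i}{2h}\dist_{G,w'_i}(u,v)\ge\dist^h_{G,w}(u,v)$ reads: ``for any $i$ and the path $P$ defined as before'' (i.e.\ the shortest $h$-hop $w$-path), ``$\dist_{G,w'_i}(u,v)\ge\sum_j\frac{2hw(x_jx_{j+1})}{\epsilon D'_i}=\frac{2h}{\epsilon D'_i}\dist^h_{G,w}(u,v)$.'' But the first inequality is unjustified: the $w'_i$-shortest path need not be $P$, and there is no reason the $w'_i$-distance should dominate the rescaled $w$-weight of an unrelated path. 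In fact the inequality as stated is false. Take $u\text{--}a\text{--}b\text{--}v$ with unit-weight edges plus a direct edge $uv$ of weight $10$, and $h=1$; then $\dist^1_{G,w}(u,v)=10$, but for $D'_i$ near $2/\epsilon$ the three-edge path has $w'_i$-weight $3\le(1+2/\epsilon)h$ and gives $\frac{\epsilon D'_i}{2h}\cdot 3<10$.

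What \emph{is} provable, and what every downstream use in the paper actually needs, is the weaker bound $\dist_{G,w}(u,v)\le\widetilde\dist^h_{G,w}(u,v)$: for any admissible $i$, if $Q$ is a $w'_i$-shortest $u$--$v$ path then $\frac{\epsilon D'_i}{2h}w'_i(Q)\ge w(Q)\ge\dist_{G,w}(u,v)$. In all applications of this proposition, either $h$ is at least the shortest-path diameter (so $\dist^h_{G,w}=\dist_{G,w}$ and the two bounds coincide) or only this weaker inequality is used when verifying that overlay edge weights dominate true distances. So your instinct to back off to a larger hop budget or to the unrestricted distance was the right one; the cleanest repair is simply to state the lower inequality with $\dist_{G,w}$ in place of $\dist^h_{G,w}$.
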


Note that the min term in \Cref{thm:approximate weight} is over all $i$ such that $\dist_{G, w'_i}(u, v)\leq (1+2/\epsilon)h$. The proof of \Cref{thm:approximate weight} heavily relies on \Cref{thm:approximate weight detail} below.

\begin{lemma}[Key Lemma for Reducing Bounded-Hop Distance by Approximating Weights]\label{thm:approximate weight detail}
Consider any nodes $u$ and $v$. For any $i$, let $w_i$ and $D'_i$ be as in  \Cref{thm:approximate weight}. Then, 
\begin{align}
\frac{\epsilon D'_i}{2h}\times \dist_{G, w'_i}(u, v)\geq \dist^h_{G, w}(u, v).\label{eq:apsp approx main one}
\end{align}
Moreover, for $i^*$  such that $D'_{i^*-1}\leq \dist^h_{G, w}(u, v)\leq D'_{i^*}$, we have that
\begin{align}
\dist_{G, w'_{i^*}}(u, v)&\leq (1+2/\epsilon)h ~~\mbox{and}\label{eq:apsp approx main two}\\
\frac{\epsilon D_{i^*}}{2h}\times \dist_{G, w'_{i^*}}(u, v) &\leq (1+\epsilon)\dist^h_{G, w}(u, v).\label{eq:apsp approx main three}
\end{align}
\end{lemma}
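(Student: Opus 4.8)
The whole lemma rests on one elementary observation about how the rounding in $w'_i$ distorts path lengths, so the plan is to isolate that first and then squeeze out the three inequalities. Fix $u,v$ and an index $i$. Directly from $w'_i(xy)=\lceil \frac{2hw(xy)}{\epsilon D'_i}\rceil$ we get, for every edge $xy$, the two-sided bound $\frac{2h}{\epsilon D'_i}w(xy)\le w'_i(xy)\le \frac{2h}{\epsilon D'_i}w(xy)+1$. Summing over the edges of an arbitrary $u$–$v$ path $P$ with $|P|$ edges yields
\[
w(P)\;\le\;\frac{\epsilon D'_i}{2h}\,w'_i(P)\;\le\; w(P)+\frac{\epsilon D'_i}{2h}\,|P|.
\]
Everything else is a careful use of the left inequality (for lower bounds) and the right inequality (for upper bounds).

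For \eqref{eq:apsp approx main one}: let $P^{*}$ be a shortest $u$–$v$ path under $w'_i$. Since $w'_i$ is integral and strictly positive on every edge, $P^{*}$ is simple and its number of edges is at most $w'_i(P^{*})=\dist_{G,w'_i}(u,v)$; in the regime in which the estimate is used, $h$ is at least the shortest-path diameter, so we may take $P^{*}$ to have at most $h$ edges, i.e. $P^{*}\in\cP^h(u,v)$. The left inequality above then gives $\dist^h_{G,w}(u,v)\le w(P^{*})\le\frac{\epsilon D'_i}{2h}w'_i(P^{*})=\frac{\epsilon D'_i}{2h}\dist_{G,w'_i}(u,v)$, which is \eqref{eq:apsp approx main one}.

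For \eqref{eq:apsp approx main two} and \eqref{eq:apsp approx main three}: here $i=i^{*}$ with $D'_{i^{*}-1}\le\dist^h_{G,w}(u,v)\le D'_{i^{*}}$, so $\dist^h_{G,w}(u,v)$ is finite; let $Q$ be a $w$-shortest $h$-hop $u$–$v$ path, so $|Q|\le h$ and $w(Q)=\dist^h_{G,w}(u,v)$. Applying the right inequality to $Q$ with $i=i^{*}$,
\[
w'_{i^{*}}(Q)\;\le\;\frac{2h}{\epsilon D'_{i^{*}}}\,w(Q)+|Q|\;\le\;\frac{2h}{\epsilon D'_{i^{*}}}\,D'_{i^{*}}+h\;=\;\Bigl(1+\tfrac{2}{\epsilon}\Bigr)h,
\]
using $w(Q)=\dist^h_{G,w}(u,v)\le D'_{i^{*}}$ and $|Q|\le h$; since $\dist_{G,w'_{i^{*}}}(u,v)\le w'_{i^{*}}(Q)$ this proves \eqref{eq:apsp approx main two}. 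Scaling the same chain, $\frac{\epsilon D'_{i^{*}}}{2h}\dist_{G,w'_{i^{*}}}(u,v)\le\frac{\epsilon D'_{i^{*}}}{2h}w'_{i^{*}}(Q)\le w(Q)+\frac{\epsilon D'_{i^{*}}}{2h}|Q|\le \dist^h_{G,w}(u,v)+\frac{\epsilon D'_{i^{*}}}{2}$, and finally $D'_{i^{*}}=2D'_{i^{*}-1}\le 2\dist^h_{G,w}(u,v)$ turns the last term into $\epsilon\,\dist^h_{G,w}(u,v)$, giving \eqref{eq:apsp approx main three}. Then \Cref{thm:approximate weight} is immediate: \eqref{eq:apsp approx main one} makes each term in its min an upper bound on $\dist^h_{G,w}(u,v)$, \eqref{eq:apsp approx main two} certifies that $i^{*}$ is one of the indices over which the min ranges, and \eqref{eq:apsp approx main three} bounds that particular term by $(1+\epsilon)\dist^h_{G,w}(u,v)$.

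The step I expect to require the most care is the hop-count bookkeeping in \eqref{eq:apsp approx main one}: one must be sure that the $w'_i$-optimal path used on the left is genuinely admissible as an $h$-hop path (equivalently, that restricting to $h$-hop paths on both sides is consistent), and one must keep the ceiling-induced $+|P|$ error aligned with the $\tfrac{2h}{\epsilon D'_i}$ scaling factor so that it contributes exactly the $O(\epsilon)$ slack and nothing larger. The per-edge two-sided estimate is robust, but one has to be disciplined about which side of it (the floor loss or the ceiling loss) is invoked in each of the three inequalities.
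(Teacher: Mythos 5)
Your proofs of \eqref{eq:apsp approx main two} and \eqref{eq:apsp approx main three} follow the paper exactly: apply the ceiling bound edge by edge to a fixed $w$-shortest $h$-hop path, then use $D'_{i^*-1}\le\dist^h_{G,w}(u,v)\le D'_{i^*}$ to control both error terms.

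Your treatment of \eqref{eq:apsp approx main one} is genuinely different, and your instinct is actually sounder than the paper's. The paper keeps $P$ as the $w$-shortest $h$-hop path and then writes $\dist_{G,w'_i}(u,v)\ge\sum_j \frac{2h\,w(x_jx_{j+1})}{\epsilon D'_i}$; but $P$ is merely a candidate path for $w'_i$, so what one actually has is $\dist_{G,w'_i}(u,v)\le w'_i(P)$, the wrong direction. Switching, as you do, to the $w'_i$-shortest path $P^*$ is the right move: $\dist_{G,w'_i}(u,v)=w'_i(P^*)\ge\frac{2h}{\epsilon D'_i}w(P^*)$ holds cleanly.

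The gap is exactly the step you flag. To conclude $w(P^*)\ge\dist^h_{G,w}(u,v)$ you need $|P^*|\le h$, and you buy this by assuming $h\ge\spdiam(G,w)$, which is not a hypothesis of the lemma. Without it \eqref{eq:apsp approx main one} is false: take $h=2$, nodes $u,z_1,z_2,v$, unit-weight edges $uz_1,z_1z_2,z_2v$ and a weight-$10$ edge $uv$, so $\dist^2_{G,w}(u,v)=10$ while $\dist_{G,w}(u,v)=3$; for any $i$ with $D'_i=4/\epsilon$ the rounding is the identity, $\frac{\epsilon D'_i}{2h}=1$, and the left side of \eqref{eq:apsp approx main one} is $3<10$. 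What your chain proves without the hop-count assertion is the weaker bound $\frac{\epsilon D'_i}{2h}\dist_{G,w'_i}(u,v)\ge\dist_{G,w}(u,v)$, valid for every $i$ since $w(P^*)\ge\dist_{G,w}(u,v)$ always, and that weaker lower bound together with the upper bound \eqref{eq:apsp approx main three} is all the downstream arguments actually need (in the overlay reduction of \Cref{thm:overlay network}, the upper bound is applied only to consecutive landmarks that are within $h$ hops on a true shortest path, where $\dist^h_{G,w}$ and $\dist_{G,w}$ agree). So the honest repair is to replace $\dist^h_{G,w}(u,v)$ by $\dist_{G,w}(u,v)$ on the right of \eqref{eq:apsp approx main one} and in the lower bound of \Cref{thm:approximate weight}, rather than to import $h\ge\spdiam(G,w)$.
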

\begin{proof}
Let $P=\langle u=x_0, x_1, \ldots, x_\ell=v\rangle$ be any shortest $h$-hop path between $u$ and $v$ (thus $2^{i^*-1}\leq w(P)\leq 2^{i^*}$ and $\ell \leq h$). Then,
\begin{align}
\dist_{G, w'_{i^*}}(u, v)&=\sum_{j=0}^{\ell-1} \left\lceil \frac{2h w(x_jx_{j+1})}{\epsilon D'_{i^*}}\right\rceil\nonumber\\ %\label{eq:apsp approx one} \\
&\leq \frac{2h}{\epsilon D'_{i^*}}\sum_{j=0}^{\ell-1} w(x_jx_{j+1}) + \ell\nonumber\\ %\label{eq:apsp approx two}\\
&\leq \frac{2h}{\epsilon D'_{i^*}}\dist^h_{G, w}(u, v) + h \label{eq:apsp approx three}\\
&\leq (1+2/\epsilon)h \nonumber %\label{eq:apsp approx four}
\end{align}
where the last inequality is because $D'_{i^*}\geq \dist^h_{G, w}(u,v)$. This proves \Cref{eq:apsp approx main two}.
Using \Cref{eq:apsp approx three}, we also have that
\begin{align*}
\frac{\epsilon D'_{i^*}}{2h}\times \dist_{G, w'_{i^*}}(u, v)  & \leq \frac{\epsilon D'_{i^*}}{2h}\times \left(\frac{2h}{\epsilon D'_{i^*}}\dist^h_{G, w}(u, v) + h\right)\\
&\leq \dist^h_{G, w}(u, v)+\frac{\epsilon}{2}D'_{i^*}\\
&\leq (1+\epsilon)\dist^h_{G, w}(u, v)
\end{align*}
where the last inequality is because $D'_{i^*}=2D'_{i^*-1}\leq 2\dist^h_{G, w}(u, v)$. This proves \Cref{eq:apsp approx main three}. Finally, observe that for any $i$ and the path $P$ defined as before, we have
\begin{align*}
\dist_{G, w'_{i}}(u, v)&\geq \sum_{j=0}^{\ell-1} \frac{2h w(x_jx_{j+1})}{\epsilon D'_i}\nonumber\\
&= \frac{2h}{\epsilon D'_i}  \dist^h_{G, w}(u, v)\,.%\label{eq:diameter approx five}\,.
\end{align*}
It follows that
\begin{align*}
\frac{\epsilon D'_i}{2h}\times\dist_{G, w'_i}(u, v) & \geq \frac{\epsilon D'_i}{2h}\times \left(\frac{2h}{\epsilon D'_i}  \dist^h_{G, w}(u, v)\right)\\
&= \dist^h_{G, w}(u, v)\,.
\end{align*}
This proves \Cref{eq:apsp approx main one} and completes the proof of \Cref{thm:approximate weight detail}.
\end{proof}

Now we are ready to prove \Cref{thm:approximate weight}.

\begin{proof}[Proof of \Cref{thm:approximate weight}] Note that
\begin{align*}
\tilde d^h_{G, w}(u, v) &= \min  \left\{\frac{\epsilon D'_{i}}{2h}\times \dist_{G, w'_i}(u, v) \mid i:\ \dist_{G, w'_i}(u, v)\leq (1+2/\epsilon)h\right\}\\
&\leq  \frac{\epsilon D'_{i^*}}{2h}\times \dist_{G, w'_{i^*}}(u, v)\\
&\leq (1+\epsilon)\dist^h_{G, w}(u, v)
%&= \min  \left\{\frac{\epsilon D'_{i}}{2h}\times \dist_{G, w'_i}(u, v) \mid i\right\},
\end{align*}
%
%$$\tilde \dist^h_{G, w}(u, v)= \min_i \frac{\epsilon D'_{i}}{2n}\times\dist^{(1+2/\epsilon)n}_{G, w'_i}(u, v)= \min_i \frac{\epsilon D'_{i}}{2n}\times\dist_{G, w'_i}(u, v)$$
%
where $i^*$ is as in \Cref{thm:approximate weight detail}, the second inequality is due to the fact that $\dist_{G, w'_{i^*}}(u, v)\leq (1+2/\epsilon)h$ as in \Cref{eq:apsp approx main two}, and the third inequality follows from \Cref{eq:apsp approx main three}. This proves the second inequality in \Cref{thm:approximate weight}. The first inequality of \Cref{thm:approximate weight} simply follows from the fact that $\frac{\epsilon D'_i}{2h}\times \dist_{G, w'_i}(u, v)\geq \dist^h_{G, w}(u, v)$ for all $i$, by \Cref{eq:apsp approx main one}. 
%
%\begin{align*}
%\tilde d^h_{G, w}(u, v) &= \min  \left\{\frac{\epsilon D'_{i}}{2h}\times \dist_{G, w'_i}(u, v) \mid i:\ \dist_{G, w'_i}(u, v)\leq (1+2/\epsilon)h\right\}\\
%&\geq XXX
%\end{align*}
%Then, \Cref{eq:apsp approx main one,eq:apsp approx main three} imply that $\tilde \dist^h_{G, w}(u, v)\geq \dist^h_{G, w}(u, v)$ and $\tilde \dist^h_{G, w}(u, v)\leq (1+\epsilon)\dist^h_{G, w}(u, v)$, respectively.
\end{proof}

%\subsubsection{Pseudo-Polynomial Time Algorithm for Bounded-Distance \sssp}
%
%We first argue that there exists a $O(n+K)$-time algorithm for $K$-\apsp on network $G$ with positive symmetric integral weight $w$, as claimed in Proposition~\ref{PROP:PSEUDOPOLY TIME}. This algorithm is pseudo-polynomial since $K$ could be as large as $D(G, w)$.
%
%{\renewcommand{\thetheorem}{\ref{prop:pseudopoly time}}
%\begin{proposition}[Restated]
%$K$-\apsp on network $G$ with {\em positive symmetric} integral weight $w$ can be solved in $O(n+K)$-time.
%\end{proposition}
%\addtocounter{theorem}{-1}
%}
%
%Consider Algorithm~2 in \cite{HolzerW12} which is used for the problem called {\em $\cS$-shortest paths} (computing distances from sources in $\cS$ on an unweighted network $G$). This algorithm is outlined in Algorithm~\ref{algo:HolzerW12}. This algorithm executes the breadh-first search (BFS) algorithms from $q$ different sources in $\cS$ in parallel, where the BFS algorithm starting at node $u$ forwards $u$ (representing the unique ID of node $u$) in the BFS manner until every node receives $u$. We let BFS$_i$ denote the BFS algorithm starting at $s_i$, and $T_{s_i}$ denote the BFS tree constructed by BFS$_i$.

\subsubsection{Algorithm for Bounded-Hop \sssp (Proof of \Cref{thm:light})}\label{sec:proof of them:light}

We now show that we can solve $h$-hop \sssp in $\tilde O(h+D)$ time while each node broadcasts $O(\log n)$ messages of size $O(\log n)$, as claimed in \Cref{thm:light}. Our algorithm is outlined in \Cref{algo:bounded-hop sssp}. Given a parameter $h$ (known to all nodes) and weighted network $(G, w)$, it computes $w_i$, for all $i$, as defined in \Cref{thm:approximate weight}; i.e., every node $u$ internally computes $w_i(u, v)$ for all neighbors $v$. Note that this step needs no communication. Next, in Line~\ref{line:solve bounded distance sssp} of \Cref{algo:bounded-hop sssp}, for each value of $i$, the algorithm executes an algorithm for the bounded-distance \sssp problem with parameter $(G, w, s, K)$, where $(1+2/\epsilon)h$, as outlined in \Cref{algo:pseudo polynomial} (we will explain this algorithm next). At the end of the execution of \Cref{algo:pseudo polynomial}, every node $u$ knows $\dist_i'(s, u)$ such that $\dist_i'(s, u)=\dist_{G, w_i}(s, u)$ if $\dist_{G, w_i}(s, u)\leq K$ and $\dist_i'(s, u)=\infty$ otherwise. Finally, we set $\widetilde \dist^h_{G, w_i}(s, u) = \min_i \dist'_i(s, u)$. By \Cref{thm:approximate weight}, we have that $\dist^h_{G, w}(u, v)\leq \widetilde \dist^h_{G,w}(u, v)\leq (1+\epsilon)\dist^h_{G, w}(u, v) = (1+o(1))\dist^h_{G, w}(u, v)$ as desired.\danupon{Careful!!! If we use $\epsilon=1/\log n$, how many different values of $(1+\epsilon)^i$ are there? } 

We now explain \Cref{algo:pseudo polynomial} for solving the bounded-distance \sssp problem. It is a simple modification of a standard bread-first tree algorithm. It runs for $K$ rounds. In the initial round (Round $0$), the source node $s$ broadcasts a message $(s, 0)$ to all its neighbors to start the algorithm. This message is to inform all its neighbors that its distance from the source (itself) is $0$. In general, we will make sure that every node $v$ whose distance to $s$ is $\dist_{G, w}(s, u)=\ell$ will broadcast a message $(s, \ell)$ to its neighbor at Round $\ell$. Every time a node $u$ receives a message of the form $(s, \ell)$ from its neighbor $v$, it knows that $\dist_{G, w}(s, v)=\ell$; so, $u$ updates its distance to the minimum between the current distance and $\ell+w(uv)$. %The following claim is easy to check. 
%
%\begin{claim}
It is easy to check that every node $u$ such that $\dist_{G, w}(s, u)< K$ broadcasts its message to all neighbors {\em once} at Round $\ell=\dist_{G, w}(s, u)$. 
%\end{claim}
%
The correctness of \Cref{algo:pseudo polynomial} immediately follows. Moreover, since we execute \Cref{algo:pseudo polynomial} for $O(\log n)$ different values of $i$ (since the maximum weight is $\poly(n)$), it follows that every node broadcasts a message to their neighbors $O(\log n)$ times. \Cref{thm:light} follows.

%This algorithm executes the breadh-first search (BFS) algorithms from $q$ different sources in $\cS$ in parallel, where the BFS algorithm starting at node $u$ forwards $u$ (representing the unique ID of node $u$) in the BFS manner until every node receives $u$. We let BFS$_i$ denote the BFS algorithm starting at $s_i$, and $T_{s_i}$ denote the BFS tree constructed by BFS$_i$.

\begin{algorithm}
\caption{Bounded-Hop \sssp $(G, w, s, h)$} \label{algo:bounded-hop sssp}
{\bf Input:} Weighted undirected graph $(G, w)$, source node $s$, and integer $h$. 

{\bf Output:} Every node $u$ knows the value of $\widetilde \dist^h_{G, w_i}(s, u)$ such that $\dist^h_{G, w}(s, u)\leq \widetilde \dist^h_{G, w_i}(s, u) \leq (1+o(1))\dist^h_{G, w_i}(s, u)$.

%\medskip 

%$\bullet$ Initially, at time $t$, source node $s$ triggers the algorithm to start by sending a message $(s, 0)$ to itself.

%\medskip\textit{\textbf{Main subroutine:}}

%$\bullet$ When a node $u$ receives a message $(s, \ell)$ for some $\ell$ from node $v$, it does the following.

\begin{algorithmic}[1]

%\algrestore{bfsone}

\State Let $\epsilon=1/\log n$. For any $i$ and edge $xy$, let $D'_i=2^i$ and $w'_i(xy)=\lceil \frac{2h w(xy)}{\epsilon D'_i}\rceil$. Let $K=(1+2/\epsilon)h$.

\State Let $t$ be the time this algorithm starts. We can assume that all nodes know $t$.

\ForAll{i}

%
%\Comment{This is Iteration $i$. We run a BFS-like algorithm on $(G, w_i)$}
%
%\State Let $t_i$ be the time this iteration starts. We can assume that all nodes know $t_i$.
%
%
%\State At time $t_i$, source node $s$ sends a message $(s, 0)$ to itself.
%
%%\medskip\textit{\textbf{Main subroutine:}}
%
%%\State When a node $u$ receives a message $(s, \ell)$ for some $\ell$ from node $v$, it does the following.
%
%%\StatexIndent[2]
%
%\If{a node $u$ receives a message $(s, \ell)$ for some $\ell$ from node $v$,}
%
%\If{$u$ has {\em not} already received $(s, \ell')$ for any $\ell'<\ell$}
%
%\State $u$ now knows that $\dist_{G}(s, u)=\ell+w(u, v)$.
%
%%\State $u$ sends through all directed edges $\overrightarrow{uv}$ a message $(s, t, \ell+w(\overrightarrow{uv}))$ at time $t+\ell+w(\overrightarrow{uv})$.\label{line:bfs send message}
%
%\State {\bf if} $\dist_{G}(s, u)<K$ {\bf then} $u$ broadcasts message $(s, \dist_{G}(s, u))$ to all its neighbors at time $t+\dist_{G}(s, u)$. 
%
%
%\EndIf
%\EndIf
%
%\State At time $t_i+K$ stop the execution of this iteration and proceed to the next value of $i$.

\State Solve bounded-distance \sssp with parameters $(G, w_i, s, K)$
% in $O(K)=O(h/\epsilon)$ time ()
using \Cref{algo:pseudo polynomial}. 
(This takes  $\tilde O(K)=\tilde O(h)$ time.)
Let $\dist'_i(s, u)$ be the distance returned to node $u$.\label{line:solve bounded distance sssp}

\EndFor

\State Each node $u$ computes $\widetilde \dist^h_{G, w_i}(s, u) = \min_i \dist'_i(s, u)$.

\end{algorithmic}
\end{algorithm}

\begin{algorithm}
\caption{Bounded-Distance \sssp $(G, w, s, K)$} \label{algo:pseudo polynomial}
{\bf Input:} Weighted undirected graph $(G, w)$, source node $s$, and integer $K$. 

{\bf Output:} Every node $u$ knows $\dist'_{G, w}(s, u)$ where $\dist'_{G, w}=\dist_{G, w}(s, u)$ if $\dist_{G, w}(s, u)\leq K$ and $\dist'_{G, w}=\infty$ otherwise. 

%Every node $u$ knows the value of $\dist_{G, w}(s, u)$ if $\dist_{G, w}(s, u)\leq K$; otherwise, it knows that $\dist_{G, w}(s, u)>K$.

%\medskip $\bullet$ Let $t$ be the time this algorithm starts. We can assume that all nodes know $t$.
%
%$\bullet$ Initially, 
%
%$\bullet$ at time $t$, source node $s$ triggers the algorithm to start by sending a message $(s, 0)$ to itself.
%
%%\medskip\textit{\textbf{Main subroutine:}}
%
%$\bullet$ When a node $u$ receives a message $(s, \ell)$ for some $\ell$ from node $v$, it does the following.

\begin{algorithmic}[1]

%\algrestore{bfsone}

\State Let $t$ be the time this algorithm starts. We can assume that all nodes know $t$.

\State Initially, every node $u$ sets $\dist'_{G, w}(s, u)=\infty$. 

\State In the beginning of this algorithm (i.e., at time $t$) source node $s$ sends a message $(s, 0)$ to itself.

%\medskip\textit{\textbf{Main subroutine:}}

%\State When a node $u$ receives a message $(s, \ell)$ for some $\ell$ from node $v$, it does the following.

\If{a node $u$ receives a message $(s, \ell)$ for some $\ell$ from node $v$,}

%\If{$u$ has {\em not} already received $(s, \ell')$ for any $\ell'<\ell$}

\If{($\ell+w(u, v)\leq K$) and ($\ell+w(u, v)<\dist'_{G}(s, u)$)}

\State $u$ sets $\dist'_{G}(s, u)=\ell+w(u, v)$.

%\State $u$ sends through all directed edges $\overrightarrow{uv}$ a message $(s, t, \ell+w(\overrightarrow{uv}))$ at time $t+\ell+w(\overrightarrow{uv})$.\label{line:bfs send message}

\EndIf
\EndIf

\State For any $x\leq K$, at time $t+x$, every node $u$ such that $\dist'_{G}(s, u)=x$ broadcasts message $(s, x)$ to all its neighbors to announce that $\dist'_{G}(s, u)=x$.  

\end{algorithmic}
\end{algorithm}

%************************************************************************
%************************************************************************
%************************************************************************

%\paragraph
\subsubsection{Bounded-Hop Multi-Source Shortest Paths} \label{sec:bounded-hop mssp}
The fact that our algorithm for the bounded-hop single-source shortest path problem in \Cref{thm:light-weight sssp} is light-weight allows us to solve its {\em multi-source} version, where there are {\em many sources} in parallel. The problem of bounded-hop multi-source shortest path is as follows. 
\begin{definition}[$h$-hop $k$-source shortest paths]\label{def:bounded hop ksp}
Given a weighted network $(G, w)$, integer $h$ (known to every node), and sources $s_1, \ldots, s_k$ (each node $s_i$ knows that it is a source), the goal of the $h$-hop $k$-source shortest paths problem is to make every node $u$ knows $\dist_{G, w}^{h}(s_i, u)$ for all $1\leq i\leq k$.
\end{definition}
The main result of this section is an algorithm for solving this problem in $\tilde O(k+h+D)$ time, as follows. 
\begin{restatable}[$k$-source $h$-hop shortest path algorithm]{proposition}{khsp}\label{thm:k-source h-hop sp}
There is an algorithm that $(1+o(1))$-approximates the $h$-hop $k$-source  shortest paths problem on weighted network $(G, w)$ in $\tilde O(k+h+D)$ time; i.e., at its termination every node $u$ knows $\dist'_{G, w}(s_i, u)$ such that 
\[
\dist_{G, w}^{h}(s_i, u)\leq \dist'_{G, w}(s_i, u)\leq (1+o(1))\dist_{G, w}^{h}(s_i, u)
\]
for all sources $s_i$. 
\end{restatable}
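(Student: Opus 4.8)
The plan is to obtain \Cref{thm:k-source h-hop sp} by running $k$ copies of the light-weight single-source algorithm of \Cref{thm:light-weight sssp} in parallel — the $i$-th copy from source $s_i$ — and taming the resulting congestion with the random-delay technique of \cite{LeightonMR94}. The key leverage is that, by \Cref{thm:light-weight sssp}, in one copy every node communicates only by \emph{broadcasting} a single $O(\poly\log n)$-bit message to all its neighbours, and it does so at most $O(\log n)$ times over the $\tau := \tilde O(h+D)$ rounds of that copy.

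First I would have each source $s_i$ draw a delay $\delta_i$ uniformly at random from $\{0,1,\dots,k\}$ and make every node learn all the pairs $(s_i,\delta_i)$: build a BFS tree of $G$ from an arbitrary root in $O(D)$ rounds, then convergecast and broadcast the $k$ pairs along this tree with pipelining in $O(k+D)$ rounds. This preprocessing costs $O(k+D)$ rounds, which is within the target bound. Then I would run the $k$ copies simultaneously, copy $i$ executing exactly as in \Cref{thm:light-weight sssp} but with its global start time set to round $\delta_i$; since every node knows the current round and every $\delta_i$, each copy's schedule is well defined. Because the copies keep disjoint state and their messages are tagged by source, interleaving them does not affect correctness — each copy produces, at node $u$, a value in $[\dist^h_{G,w}(s_i,u),(1+o(1))\dist^h_{G,w}(s_i,u)]$ exactly as it would in isolation — so the output is a $(1+o(1))$-approximation as required, provided the interleaving can be simulated faithfully.

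The crux is the timing analysis, which I expect to be the main obstacle. Fix a node $v$ and a round $r$. Let $\rho^{(i)}_{v,1},\dots,\rho^{(i)}_{v,q}$ (with $q=O(\log n)$) be the rounds, relative to the start of copy $i$, at which $v$ broadcasts in copy $i$; these are determined by $(G,w)$, not by the delays. In the combined run, copy $i$ asks $v$ to broadcast at absolute rounds $\delta_i+\rho^{(i)}_{v,j}$, so the number of broadcasts $v$ must perform at round $r$ is $X_{v,r}=\sum_{i=1}^k X^{(i)}_{v,r}$ where $X^{(i)}_{v,r}=\#\{j:\delta_i+\rho^{(i)}_{v,j}=r\}\in\{0,\dots,q\}$ depends only on $\delta_i$. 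The $X^{(i)}_{v,r}$ are independent and $\mathbb{E}[X_{v,r}]\le q=O(\log n)$, so a Chernoff bound for sums of independent bounded variables gives $\Pr[X_{v,r}\ge C\log^2 n]\le n^{-\Omega(1)}$; a union bound over the $\poly(n)$ pairs $(v,r)$ then shows that, w.h.p., no node ever needs to broadcast more than $C\log^2 n$ times in a single round of the combined schedule, for a universal constant $C$.

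Finally I would implement the combined schedule in ``virtual rounds'', each virtual round $r$ being played out over $O(\log^2 n)$ real rounds during which every node broadcasts, one message at a time, everything that the copies scheduled for it at round $r$ (a further $O(\poly\log n)$ factor absorbs the message length). On the high-probability event above nothing is dropped, messages sent in virtual round $r$ arrive before virtual round $r+1$, and thus every copy sees precisely the execution it would have in isolation with its shifted clock. The number of virtual rounds is $\max_i\delta_i+\tau=O(k)+\tilde O(h+D)$, so together with the $O(k+D)$ preprocessing the algorithm runs in $\tilde O(k+h+D)$ rounds w.h.p. The only subtlety beyond the Chernoff/union-bound step is checking that the blow-up into virtual rounds really does preserve the round-by-round semantics of the BFS-style copies (messages sent ``now'' are read ``next round''), which is immediate once virtual-round boundaries are respected uniformly by all nodes.
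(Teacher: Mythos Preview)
Your proposal is correct and follows essentially the same approach as the paper: run the $k$ light-weight copies with independent random start delays, use the $O(\log n)$-broadcast property of \Cref{thm:light-weight sssp} to bound per-round congestion with high probability, and absorb the congestion into $\poly\log n$ real rounds per scheduled round. The only cosmetic differences are that the paper draws delays from $[0,k\log n]$ and bounds congestion by a direct counting argument (\Cref{thm:no congestion}) rather than Chernoff, yielding a slightly tighter $O(\log n)$ per-round bound instead of your $O(\log^2 n)$; both are swallowed by the $\tilde O$.
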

The algorithm is conceptually easy: we simply run the algorithm for bounded-hop single-source shortest path in \Cref{thm:light-weight sssp} (i.e. \Cref{algo:bounded-hop sssp}) from $k$ sources {\em in parallel}. Obviously, this algorithm needs at least $\tilde \Omega(h+D)$ time since this is the guarantee we can get for the case of single source. Moreover, it is possible that one need has to broadcast $O(\log n)$ messages for each execution of \Cref{algo:bounded-hop sssp}, making a total of $O(k\log n)$ messages; this will require $\tilde O(k)$ time. So, the best running time we can hope for is $\tilde O(k+h+D)$. 
%
%Since each execution of \Cref{algo:bounded-hop sssp} finishes in $\tilde O(h+D)$ time and each node sends only $O(\log n)$ messages on each edge, running $k$ executions could finish in $\tilde O(h+k+D)$ time in an ideal case where . 
%
%However, this has to be carefully done since one execution could delay other executions. We show that this ideal running time can be achieved, simply by adapting the {\em random delay} technique for package scheduling \cite{LeightonMR94}. 
It is, however, not obvious to achieve this running time since one execution could {\em delay} other executions; i.e., it is possible that all executions of \Cref{algo:bounded-hop sssp} might want a node to send a message at the same time making some of them unable to proceed to the next round. We show that by simply adding a small {\em delay} to the starting time of each execution, it is {\em unlikely} that many executions will delay each other. 
%
%this ideal running time can be achieved, simply by adapting the {\em random delay} technique for package scheduling \cite{LeightonMR94}. 

%
% or the scheduling technique for {\em unweighted} $k$-source \ksp \cite{HolzerW12}\danupon{Really? Check!!!}. The algorithm based on the random delay technique has an advantage that it can be adapted to {\em directed networks}, the case that we
%
The algorithm is very simple: Instead of starting the execution of \Cref{algo:bounded-hop sssp} from different source nodes at the same time, each execution starts with a {\em random delay} randomly selected from integers from $0$ to $k\log n$. The algorithm is outlined in \Cref{algo:random delay}. 
\begin{algorithm}
\caption{Multi-Source Bounded-Hop Shortest Path $(G, w, \{s_1, \ldots, s_k\}, h)$} \label{algo:random delay}\label{algo:bounded-hop multi-source}
{\bf Input:} Weighted undirected graph $(G, w)$, $k$ source nodes $s_1, \ldots, s_k$, and integer $h$. 

{\bf Output:} Every node $u$ knows $\dist^h_{G, w}(s_i, u)$ for all $i$. 

%Every node $u$ knows the value of $\dist_{G, w}(s, u)$ if $\dist_{G, w}(s, u)\leq K$; otherwise, it knows that $\dist_{G, w}(s, u)>K$.

%\medskip $\bullet$ Let $t$ be the time this algorithm starts. We can assume that all nodes know $t$.
%
%$\bullet$ Initially, 
%
%$\bullet$ at time $t$, source node $s$ triggers the algorithm to start by sending a message $(s, 0)$ to itself.
%
%%\medskip\textit{\textbf{Main subroutine:}}
%
%$\bullet$ When a node $u$ receives a message $(s, \ell)$ for some $\ell$ from node $v$, it does the following.

\begin{algorithmic}[1]

%\algrestore{bfsone}

\State Let $r_1, \ldots, r_k$ be a number selected uniformly at random from $[0, k\log n]$. We can assume that all nodes know $r_i$, for all $r_i$. This can be done by, e.g., broadcasting all $r_i$ to all nodes in $O(k+D)$ time. 

\State Let $t$ be the time this algorithm starts. We can assume that all nodes know $t$.

\State At time $t+r_i$, execute the bounded-hop single-source shortest path algorithm (\Cref{algo:bounded-hop sssp}) on $(G, w, s_i, h)$.

\State If at any time, more than $\log n$ messages is sent through an edge, we say that the algorithm {\em fails}. (We show that the algorithm fails with probability $O(1/n^2)$ in \Cref{thm:no congestion}.)

\end{algorithmic}
\end{algorithm}
The crucial thing is to show that many executions of \Cref{algo:bounded-hop sssp} launched by \Cref{algo:random delay} {\em do not delay each other}. In particular, that we show that at most $\log n$ messages will be sent through each edge in every round, with high probability (if this does not happen, we say that the algorithm {\em fails}). We prove this in \Cref{thm:no congestion} below. Our proof is simply an adaptation of the random delay technique for package scheduling \cite{LeightonMR94}. \Cref{thm:no congestion} immediately implies \Cref{thm:k-source h-hop sp}, since each execution, which start at time $\tilde O(k)$ will finish in $\tilde O(h+D)$ rounds without being delayed.   
\begin{lemma}[Congestion guaranteed by the random delay technique]\label{thm:no congestion}
For any source $s_i$ and node $u$, let $\mset_{i, u}$ be the set of messages broadcasted by $u$ during the execution of \Cref{algo:bounded-hop sssp} with parameter $(G, w, s_i, h)$. Note that $|\mset_{i, u}|\leq c\log n$ for some constant $c$, by \Cref{thm:light-weight sssp}. 
%
%Let $\beta$ be such that each node sends at most $\beta$ messages during the whole execution of \Cref{algo:bounded-hop sssp} (by \Cref{thm:light-weight sssp}, $\beta=O(\log n)$).
%
%$M_{i, 1}, M_{i, 2}, \ldots, M_{i, \beta}$ be messages broadcasted by $u$ during the execution of \Cref{algo:bounded-hop sssp} with parameter $(G, w, s_i, h)$. 
%
Then, the probability that, in \Cref{algo:random delay}, there exists time $t$, node $u$, and a set $\mset\subseteq \bigcup_i \mset_{i, u}$ such that $|\mset|\geq \log n$, and all messages in $\mset$ are broadcasted by $u$ at time $t=O(k+h+D)$, is $O(1/n^2)$. 
%
%an edge $uv$ and a set $\mset=\{(s_j, \ell_j)\}_{1\leq j\leq k}$ of $k>\log n$ messages such that all messages in $\mset$ are sent on $uv$ at the same time is $O(1/n^2)$.
\end{lemma}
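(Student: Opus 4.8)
The plan is to run the standard Leighton--Maggs--Rao random-delay analysis. I would fix a node $u$ and an absolute time $t$, bound the probability that $u$ broadcasts $\log n$ or more messages at time $t$, and then take a union bound over all nodes $u$ and all relevant times $t$. The whole point is that the light-weight property of \Cref{thm:light-weight sssp} --- each execution of \Cref{algo:bounded-hop sssp} makes a given node broadcast only $|\mset_{i,u}|\le c\log n$ times --- forces the expected number of messages $u$ sends in any single round to be only $O(1)$, after which a Chernoff bound plus a union bound closes everything.

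Concretely, the steps in order would be: (1) For each source $s_i$, observe that in the execution $\cA_i$ of \Cref{algo:bounded-hop sssp} on $(G,w,s_i,h)$, node $u$ emits at most one (possibly bundled) message per round, so its $\le c\log n$ broadcasts happen at $m_i\le c\log n$ \emph{distinct} rounds $\tau_{i,1},\dots,\tau_{i,m_i}$ counted from the start of $\cA_i$. (2) Since \Cref{algo:random delay} starts $\cA_i$ at time $t_0+r_i$ with $r_i$ uniform over the $k\log n+1$ values in $[0,k\log n]$, node $u$ broadcasts an $\cA_i$-message at absolute time $t$ iff $r_i\in\{t-t_0-\tau_{i,1},\dots,t-t_0-\tau_{i,m_i}\}$, so the indicator $Z_i$ of this event satisfies $\Pr[Z_i=1]\le m_i/(k\log n+1)\le c/k$. (3) The number of messages $u$ broadcasts at time $t$ is exactly $Z=\sum_{i=1}^k Z_i$, a sum of independent $\{0,1\}$ variables (each $Z_i$ depends only on $r_i$) with $\mathbb{E}[Z]\le c$. (4) A Chernoff bound gives $\Pr[Z\ge\log n]\le(ec/\log n)^{\log n}$, which is $n^{-\omega(1)}$ and hence below $n^{-C}$ for any fixed constant $C$ once $n$ is large. (5) The schedule of \Cref{algo:random delay} runs for at most $k\log n+\tilde O(h+D)=\poly(n)$ rounds (delays are at most $k\log n$ and each $\cA_i$ lasts $\tilde O(h+D)$ rounds by \Cref{thm:light-weight sssp}), so union-bounding over the at most $n\cdot\poly(n)$ pairs $(u,t)$, with $C$ chosen larger than the degree of that polynomial plus $2$, yields total failure probability $O(1/n^2)$.

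The main thing needing care is step (1): I must argue that within a single execution of \Cref{algo:bounded-hop sssp} node $u$ emits at most one message in any given round --- the $O(\log n)$ internal bounded-distance runs for the different scalings $w'_j$ must be bundled into one broadcast per round --- so that the contribution of execution $i$ to time $t$ is a genuine $0/1$ indicator and the $Z_i$ are independent across $i$. This is exactly the place where the fine bookkeeping of \Cref{thm:light-weight sssp} (one broadcast per bounded-distance run, $O(\log n)$ runs, all bundled) is invoked, and it is the only non-mechanical point; the collision-probability estimate of step (2), the expectation computation, and the Chernoff/union bound of steps (4)--(5) are all routine once (1) is in place.
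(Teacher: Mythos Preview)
Your proposal is correct and follows essentially the same approach as the paper. The paper also fixes $u$ and $t$, uses the one-message-per-round-per-execution fact, and bounds the probability of $\ge\log n$ collisions; the only cosmetic difference is that the paper carries out the concentration step by an explicit union bound over all size-$m$ subsets $\mset$ (obtaining $\sum_{m\ge\log n}\binom{k}{m}(c\log n)^m(k\log n)^{-m}\le k(ec/\log n)^{\log n}$), which is literally the computation inside a Chernoff bound, whereas you invoke Chernoff on $Z=\sum_i Z_i$ directly.

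One minor remark on your step~(1): you worry about ``bundling'' the $O(\log n)$ internal bounded-distance runs so that at most one message leaves $u$ per round. In the paper's \Cref{algo:bounded-hop sssp} these runs are executed \emph{sequentially} (the for-loop over $i$), so at any absolute round only one bounded-distance instance is active and the one-message-per-round property is immediate---no bundling is required. The paper invokes exactly this fact (``every node $u$ broadcasts at most one message per round'') to justify $|\mset\cap\mset_{i,u}|\le 1$.
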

\begin{proof}
Fix any node $u$, time $t$, and set $\mset$ as above. Observe that, for any $i$, the time that a message $M\in \mset_{i, u}$, is broadcasted by $u$ is determined by the random delay $r_i$ -- there is only one value of $r_i$ that makes $u$ broadcasts $M$ at time $t$. In other words, for fixed $u$, $t$, and message $M\in \bigcup_i \mset_{i, u}$, 
\[
Pr[\mbox{$M$ is sent by $u$ at time $t$}]\leq \frac{1}{k\log n}. 
\] 
It follows that for fixed $u$, $t$, and set of messages $\mset$,
\[
Pr[\mbox{all messages in $\mset$ is sent by $u$ at time $t$}]\leq \left(\frac{1}{k\log n}\right)^{|\mset|}. 
\]
Note that we can assume that $|\mset\cap \mset_{i, u}|\leq 1$ since, for an execution of \Cref{algo:bounded-hop sssp} on a source $s_i$, every node $u$ broadcasts at most one message per round. This implies that $|\mset|\leq k$, and, for any $m\leq k$, the number of such set $\mset$ of size exactly $m$ is at most $\binom{k}{m}(c\log n)^m$ since each set $\mset$ can be constructed by picking $m$ different sets $\mset_{i, u}$, and picking one message out of $c\log n$ messages from each $\mset_{i, u}$. Thus, for fixed $u$ and $t$, the probability that there exists $\mset$ such that $|\mset|\geq \log n$ and all messages in $\mset$ is sent by $u$ at time $t$ is at most
\begin{align*}
%& Pr\left[\mbox{$\exists \mset$ s.t. $|\mset|> \log n$ and all messages in $\mset$ is sent by $u$ at time $t$}\right]\\
%&\ \  \leq 
\sum_{m=\log n}^{k} \binom{k}{m}(c\log n)^m\left(\frac{1}{k\log n}\right)^m. 
\end{align*}
Using the fact that for any $0<b<a$, $\binom{a}{b}\leq (ae/b)^b$, the previous quantity is at most
\begin{align*}
\sum_{m=\log n}^{k} \left(\frac{ke}{m}\right)^m (c\log n)^m\left(\frac{1}{k\log n}\right)^m
&\leq \sum_{m=\log n}^{k} \left(\frac{ec}{m}\right)^m\\
&\leq k\left(\frac{ec}{\log n}\right)^{\log n}.
\end{align*}
For large enough $n$, the above quantity is at most $1/n^4$. We conclude that for fixed $u$ and $t$, the probability that there exists $\mset$ such that $|\mset|\geq \log n$ and all messages in $\mset$ is sent by $u$ at time $t$ is at most  $1/n^4$. By summing this probability over all nodes $u$ and $t=O(k+h+D)=O(n)$, \Cref{thm:no congestion} follows.
\end{proof}

%
%\begin{proposition}[$k$-source $h$-hop \ksp algorithm; Proof in \Cref{sec:k-source h-hop sp}]\label{thm:k-source h-hop sp}
%%
%There is an algorithm that solves $h$-hop \sssp on network $G$ with weight $w$\danupon{Don't forget that $w$ must be positive} in $O(h)$-time and, during the whole computation, every node $u$ sends a message of size $O(\log n)$ to any of its neighbors $v$ at most once. 
%%
%\end{proposition} 
% 

\subsection{Shortest-Path Diameter Reduction Using Shortcuts}\label{sec:spd reduction}

%\danupon{Make sure to define \spdist and \spdiam in prelim}
%
In this section, we show a simple way to augment a graph with some edges (called ``shortcuts'') to reduce the shortest-path diameter. The notion of shortest path diameter is defined as follows.  
\begin{definition}[Shortest-path distance and diameter]\label{def:spdiam}
For any weighted graph $(G, w)$, the {\em shortest-path distance} between any two nodes $u$ and $v$, denoted by $\spdist_{G, w}(u, v)$, is the minimum integer $h$ such that $\dist_{G, w}^h(u, v)=\dist_{G, w}(u, v)$. That is, it is the minimum number of edges among the shortest $u$-$v$ paths. The {shortest-path diameter} of $(G, w)$, denoted by $\spdiam(G, w)$, is defined to be $\max_{u, v\in V(G)} \spdist_{G, w}(u, v)$. In other words, it is the minimum integer $h$ such that $\dist_{G, w}^h(u, v)=\dist_{G, w}(u, v)$ for all nodes $u$ and $v$. 
\end{definition}
\begin{definition}[$k$-shortcut graph]\label{def:shortcut graph}
Consider any $n$-node weighted graph $(G, w)$ and an integer $k\neq n-1$. For any node $u$, let $S^k_{G, w}(u)\subseteq V(G)$ be the set of exactly $k$ nodes nearest to $u$ (excluding $u$); i.e. $u\notin S^k_{G, w}(u)$, $|S^k_{G, w}(u)|=k$, and for all $v\in S^k_{G, w}(u)$ and $v'\notin S^k_{G, w}(u)$, $\dist_{G, w}(u,v)\leq \dist_{G, w}(u, v')$. 
The {\em $k$-shortcut graph of $(G, w)$}, denoted by $(G, w)^k$, is a weighted graph resulting from adding an edge $uv$ of weight $\dist_{G, w}(u, v)$ for every $u\in V(G)$ and $v\in S^k_{G, w}(u)$. When it is clear from the context, we will write $S^k(u)$ instead of $S^k_{G, w}(u)$.
\end{definition}
%
%We note that for some graph $(G, w)$, $S^k_{G, w}(u)$ might not be unique. To simplify our statement and proofs, we will assume that $(G, w)$ has the following property, which makes  $S^k_{G, w}(u)$ unique: 
%\begin{assumption}[Uniqueness of Distances]\label{assumption:unique distance}
%Every pair of nodes $u$ and $v$ has a unique value of $\dist_{G, w}(u, v).$ 
%\end{assumption}
%This assumption can be removed by breaking ties arbitrarily.  
%In this case, we call  $S^k_{G, w}(u)$ {\em the set of $k$ nodes nearest to $u$}.
%
\begin{proposition}[Main result of \Cref{sec:spd reduction}: Reducing the shortest-path diameter by shortcuts]\label{thm:shortcuts}\label{thm:spd reduction}
For any $n$-node weighted undirected graph $(G, w)$ and integer $k$, %$\spdiam((G, w)^k)\leq  4n/k$.\danupon{Don't forget to define $\spdiam((G, w))$ and say that we also use $\spdiam(G, w)$ instead of $\spdiam((G, w))$.}
if $(G', w')$ is the $k$-shortcut graph of $(G, w)$, then $\spdiam(G', w')<  4n/k$. 
\end{proposition}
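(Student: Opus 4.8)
The plan is to fix an arbitrary pair of nodes $u,v$ and work with a shortest $u$--$v$ path in the shortcut graph $(G',w')$ that uses the fewest edges, say $P=\langle u=x_0,x_1,\dots,x_\ell=v\rangle$, so that $\ell=\spdist_{G',w'}(u,v)$. Two preliminary observations set everything up: since each shortcut edge $uv$ has weight exactly $\dist_{G,w}(u,v)$, adding shortcuts changes no distance, so $\dist_{G',w'}=\dist_{G,w}$ (I will simply write $\dist$), and every contiguous subpath $\langle x_i,\dots,x_j\rangle$ of $P$ is itself a shortest path; moreover $P$ repeats no vertex (weights are positive), so $\ell+1\le n$. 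The goal is then an upper bound on $\ell$ in terms of $n$ and $k$.

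The first step is the structural claim that for every index $i$ and every $j$ with $|j-i|\ge 2$ we have $x_j\notin S^k(x_i)$. Indeed, if $x_j\in S^k(x_i)$ with, say, $j\ge i+2$, then $(G',w')$ contains the edge $x_ix_j$ of weight $\dist(x_i,x_j)$, which equals the weight of the subpath $\langle x_i,\dots,x_j\rangle$; replacing that subpath by this single edge gives another shortest $u$--$v$ path with $j-i-1\ge 1$ fewer edges, contradicting the minimality of $P$. Consequently $S^k(x_i)\cap V(P)\subseteq\{x_{i-1},x_{i+1}\}$, so $S^k(x_i)$ contains at least $k-2$ vertices that lie off $P$.

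The second and, I expect, most delicate step bounds, for each off-path vertex $z$, the size of $I_z:=\{\,i\in\{0,\dots,\ell\}:z\in S^k(x_i)\,\}$; I claim $\max I_z-\min I_z\le 3$, hence $|I_z|\le 4$. Write $a=\min I_z$, $b=\max I_z$ (the boundary cases $a\ge\ell-1$ or $b\le 1$ force $b-a\le 1$ at once). From $z\in S^k(x_a)$ together with the first step ($x_{a+2}\notin S^k(x_a)$) one gets $\dist(x_a,z)\le\dist(x_a,x_{a+2})$, and symmetrically $\dist(z,x_b)\le\dist(x_{b-2},x_b)$. If $b-a\ge 5$, combining these with the triangle inequality $\dist(x_a,x_b)\le\dist(x_a,z)+\dist(z,x_b)$ and the fact that $\dist(x_a,x_b)$ equals the full sum of $P$'s edge weights between $x_a$ and $x_b$ forces a strictly positive edge weight to be nonpositive --- impossible. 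In the borderline case $b-a=4$, the same inequalities pin down $\dist(x_a,z)+\dist(z,x_b)=\dist(x_a,x_b)$, and now the shortcut edges $x_az$ and $zx_b$ (present precisely because $z\in S^k(x_a)\cap S^k(x_b)$) let one reroute $P$ through $z$, producing a shortest $u$--$v$ path with $2$ fewer edges --- again a contradiction. Hence $|I_z|\le 4$.

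Finally, I would finish by double counting:
\[
(\ell+1)(k-2)\ \le\ \sum_{i=0}^{\ell}|S^k(x_i)\setminus V(P)|\ =\ \sum_{z\notin V(P)}|I_z|\ \le\ 4(n-\ell-1),
\]
which rearranges to $(\ell+1)(k+2)\le 4n$, so $\spdist_{G',w'}(u,v)=\ell<4n/k$. As $u,v$ were arbitrary, $\spdiam(G',w')<4n/k$; the trivial bound $\spdiam\le n-1$ takes care of any degenerate small-$k$ situations. The only real care needed is the boundary bookkeeping in the second step --- checking that $x_{a+2}$ and $x_{b-2}$ exist, and that the rerouted walk is a simple path --- and making sure the ``reroute to get fewer edges'' contradiction is applied consistently; everything else is elementary.
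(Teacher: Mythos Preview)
Your proof is correct and shares the same core idea as the paper's: start from a shortest $u$--$v$ path $P=\langle x_0,\dots,x_\ell\rangle$ in $(G',w')$ with the fewest edges, observe that $x_j\notin S^k(x_i)$ whenever $|j-i|\ge 2$, and then derive a contradiction whenever some $z$ lies in $S^k(x_a)\cap S^k(x_b)$ with $b-a\ge 4$ by rerouting $P$ through $z$ using the shortcut edges $x_az$ and $zx_b$.

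The bookkeeping differs slightly. The paper applies the rerouting argument only at indices that are multiples of~$4$, concluding directly that the sets $S^k(x_0),S^k(x_4),S^k(x_8),\dots$ are pairwise disjoint; since each has size $k$, there can be at most $n/k$ of them, giving $\ell<4n/k$. You instead bound $|I_z|\le 4$ for every off-path $z$ and double count, arriving at $(\ell+1)(k+2)\le 4n$, which is marginally sharper but requires a bit more case analysis (the $b-a=4$ versus $b-a\ge 5$ split). Both routes are short; the paper's is slightly more direct, while yours makes the ``each vertex is covered at most four times'' structure explicit.
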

\begin{proof}
Consider any nodes $u$ and $v$, and let 
\[
P=\langle u=x_0, x_1, \ldots, x_\ell=v\rangle
\]
be the shortest $u$-$v$ path in $(G', w')$ with smallest number of edges; i.e. there is no path $Q$ in $(G', w')$ such that $|E(Q)|<|E(P)|$\danupon{Define $E(P)$ and $w'(P)$} and $w'(Q)\leq w'(P)$. 
%
%there is no $u$-$v$ path having the same total edge weight and less number of edges. 
%
For any node $x$, let $S^k(x)$ be the set of $k$ nodes nearest to $x$ in $(G, w)$, as in \Cref{def:shortcut graph}. We claim that for any integer $0\leq i\leq (\ell/4)-1$, we have 
\[S^k(x_{4i})\cap S^k(x_{4(i+1)})=\emptyset.\]  
This claim immediately implies that $\ell< 4n/k$, thus \Cref{thm:shortcuts}; otherwise, $|\bigcup_{0\leq i\leq \ell/4}S_k(x_{4i})| \geq (k+1)(n/k) > n$, which is impossible. It is thus left to prove the claim that $S^k(x_{4i})\cap S^k(x_{4(i+1)})=\emptyset$. 
Now, consider any $1\leq i\leq \ell/4$. Observe that 
\begin{align}
\label{eq:spd reduction one}
x_{4i+2}\notin S^k(x_{4i}).
\end{align}
Otherwise, $(G', w')$ will contain edge $x_{4i}x_{4i+2}$ of weight $\dist_{G, w}(x_{4i},x_{4i+2})$. This implies that $P'=\langle x_0, \ldots, x_{4i-1}, x_{4i}, x_{4i+2}, x_{4i+3}, \ldots, x_\ell=v\rangle$ is a shortest $u$-$v$ path in $(G', w')$ containing $\ell-1$ edges, contradicting the fact that $P$ has the smallest number of edges among shortest $u$-$v$ paths in $(G', w')$.  By the same argument, we have 
\begin{align}
\label{eq:spd reduction two}
x_{4i+2}\notin S^k(x_{4(i+1)}).
\end{align}
By the definition of $S^k$, \Cref{eq:spd reduction one,eq:spd reduction two} imply that 
% for all $v\in S^k(x_{4i})$, 
\begin{align}
\label{eq:spd reduction three}\forall y\in S^k(x_{4i}) && \dist_{G, w}(x_{4i}, y) &\leq \dist_{G, w}(x_{4i}, x_{4i+2}), ~~~\mbox{and}\\
\label{eq:spd reduction four}\forall y\in S^k(x_{4(i+1)}) && \dist_{G, w}(x_{4(i+1)}, y) &\leq \dist_{G, w}(x_{4(i+1)}, x_{4i+2})
\end{align} 
respectively. Now, assume for a contradiction that there is a node $y\in S^k(x_{4i}) \cap  S^k(x_{4(i+1)})$. Consider a path 
\[
P''=\langle x_0, \ldots, x_{4i-1}, x_{4i}, y, x_{4(i+1)}, x_{4(i+1)+1}, \ldots, x_\ell=v\rangle.
\] 
Observe that $P''$ contains $\ell-3$ edges. Moreover, \Cref{eq:spd reduction three,eq:spd reduction four} imply that 
\begin{align*}
w'(x_{4i}y) &=\dist_{G, w}(x_{4i}, y)  \leq \dist_{G, w}(x_{4i}, x_{4i+2}) ~~~\mbox{and}~~~\\ 
w'(x_{4(i+1)}y) &= \dist_{G, w}(x_{4(i+1)}, y) \leq \dist_{G, w}(x_{4(i+1)}, x_{4i+2})
\end{align*}
which further imply that $w'(P'')\leq w'(P)$. This means that $P''$ is a shortest $u$-$v$ paths in $(G', w')$ and contradicts the fact that $P$ has the smallest number of edges among shortest $u$-$v$ paths in $(G', w')$. Thus,  $S^k(x_{4i}) \cap  S^k(x_{4(i+1)})=\emptyset$ as desired. 
\end{proof}

We note a simple fact that will be used throughout this paper: we can compute $S^k_{G, w}(u)$ and $\dist_{G, w}(u, v)$ for all $v\in S^k_{G, w}(u)$ if we know $k$ smallest edges incident to {\em every} nodes. The precise statement is as follows.

%fact that we can construct a $k$-shortcut graph, thus reducing the shortest path diameter, if we augment $(G, w)$ with the distance obtained  $k$ smallest 

\begin{definition}[$E^k(u)$ and $(G^k, w)$]\label{def:E^k(u) set of k smallest weight }
For any node $u$, let $E^k(u)$ be the set of $k$ edges incident to $u$ with minimum weight (breaking tie arbitrarily); i.e., for every edge $uv\in E^k(u)$ and $uv'\notin E^k(u)$, we have $w(uv)\leq w(uv')$. Let $(G^k, w)$ be the subgraph of $(G, w)$ whose edge set is $\bigcup_{v\in V(G)} E^k(v)$.
\end{definition}

We note that for some graph $(G, w)$, the sets $E^k(u)$ and $S^k_{G, w}(u)$ might not be uniquely defined. To simplify our statement and proofs, we will assume that $(G, w)$ has the following property, which makes both $S^k_{G, w}(u)$ and $S^k_{G^k, w}(u)$ unique: every edge $uv\in E(G)$ has a unique value of $w(uv)$, and every pair of nodes $u$ and $v$ has a unique value of $\dist_{G, w}(u, v)$ and $\dist_{G^k, w}(u, v).$ Removing these assumptions can be easily done by breaking ties arbitrarily. 

\begin{observation}[Computing shortcut edges using $(G^k, w)$]\label{thm:shortcuts from smallest edges}
For any node $u$, 
\begin{itemize}
\item $S^k_{G, w}(u)=S^k_{G^k, w}(u)$, and
\item $\dist_{G, w}(u, v)= \dist_{G^k, w}(u, v)$, for any $v\in S^k_{G, w}(u)$.
\end{itemize}
In other words, using only edges in  $\bigcup_{v\in V(G)} E^k(v)$ and their weights, we can compute all $k$-shortcut edges.
\end{observation}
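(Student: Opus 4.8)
The plan is to show both claimed equalities by arguing that a shortest path in $(G, w)$ realizing $\dist_{G, w}(u, v)$ for $v \in S^k_{G, w}(u)$ uses only edges lying in $\bigcup_{x \in V(G)} E^k(x)$, so that the same path survives in $(G^k, w)$, and conversely that no \emph{new} short distances can appear because $(G^k, w)$ is a subgraph of $(G, w)$. The key structural claim is: \textbf{if $v \in S^k_{G, w}(u)$, then there is a shortest $u$-$v$ path $P = \langle u = y_0, y_1, \dots, y_\ell = v \rangle$ in $(G, w)$ all of whose vertices lie in $S^k_{G, w}(u) \cup \{u\}$, and every edge $y_j y_{j+1}$ of $P$ lies in $E^k(y_j) \cap E^k(y_{j+1})$.} First I would establish the vertex part: take $P$ to be any shortest $u$-$v$ path; for each internal vertex $y_j$ we have $\dist_{G, w}(u, y_j) \le \dist_{G, w}(u, v) \le \dist_{G, w}(u, v')$ for every $v' \notin S^k_{G, w}(u)$ (using $v \in S^k_{G, w}(u)$), and since distances are assumed unique and $y_j \ne u$, this forces $y_j \in S^k_{G, w}(u)$; hence $y_j$ is among the $k$ nearest nodes to $u$.

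Next I would handle the edge part, which is the crux. Consider a consecutive edge $y_j y_{j+1}$ on $P$ and suppose for contradiction that $y_j y_{j+1} \notin E^k(y_j)$. Then $y_j$ has $k$ incident edges $y_j z$, each with $w(y_j z) < w(y_j y_{j+1})$ (by uniqueness of weights); each such $z$ satisfies $\dist_{G, w}(u, z) \le \dist_{G, w}(u, y_j) + w(y_j z) < \dist_{G, w}(u, y_j) + w(y_j y_{j+1}) = \dist_{G, w}(u, y_{j+1})$, where the last equality uses that $P$ is a shortest path through $y_{j+1}$. So $u$ has at least $k$ distinct nodes strictly closer than $y_{j+1}$, namely these $z$'s (which are all distinct from each other, and one checks they are distinct from $u$ and from $y_{j+1}$ — $y_{j+1}$ because of the strict inequality, $u$ because $\dist(u,u)=0 < \dist(u, z)$ assuming positive weights). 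Hence $y_{j+1} \notin S^k_{G, w}(u)$, contradicting what we showed above (for internal $y_{j+1}$; when $y_{j+1} = v$ it contradicts $v \in S^k_{G, w}(u)$ directly). By the symmetric argument $y_j y_{j+1} \in E^k(y_{j+1})$ as well, so $y_j y_{j+1} \in \bigcup_x E^k(x) = E(G^k)$. Therefore $P$ is a path in $(G^k, w)$, giving $\dist_{G^k, w}(u, v) \le w(P) = \dist_{G, w}(u, v)$. Since $(G^k, w)$ is a subgraph of $(G, w)$ with the same weights, $\dist_{G^k, w}(u, v) \ge \dist_{G, w}(u, v)$ always, so the two distances are equal for every $v \in S^k_{G, w}(u)$.

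Finally I would derive $S^k_{G, w}(u) = S^k_{G^k, w}(u)$ from the distance equality. For $v \in S^k_{G, w}(u)$ we just showed $\dist_{G^k, w}(u, v) = \dist_{G, w}(u, v)$. For $v' \notin S^k_{G, w}(u)$, monotonicity of distance under taking subgraphs gives $\dist_{G^k, w}(u, v') \ge \dist_{G, w}(u, v') > \dist_{G, w}(u, v) = \dist_{G^k, w}(u, v)$ for every $v \in S^k_{G, w}(u)$ (the strict inequality from the uniqueness assumption on $\dist_{G,w}$ together with $v \in S^k_{G,w}(u)$, $v' \notin S^k_{G,w}(u)$). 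Thus in $(G^k, w)$ the set of the $k$ nearest nodes to $u$ is exactly $S^k_{G, w}(u)$, i.e. $S^k_{G^k, w}(u) = S^k_{G, w}(u)$, and (by the uniqueness hypothesis on $\dist_{G^k,w}$) this set is well-defined. The last sentence of the observation then follows: from $\bigcup_v E^k(v)$ one can internally reconstruct $(G^k, w)$, compute each $S^k_{G^k, w}(u) = S^k_{G, w}(u)$ and the distances $\dist_{G^k, w}(u, v) = \dist_{G, w}(u, v)$ for $v$ in that set, which are precisely the $k$-shortcut edges and their weights.

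I expect the \textbf{main obstacle} to be the edge-part argument — cleanly verifying that the $k$ candidate neighbors $z$ of $y_j$ are genuinely $k$ distinct nodes all strictly closer to $u$ than $y_{j+1}$, with no collision with $u$, $y_{j+1}$, or each other — and making sure the uniqueness-of-weights and uniqueness-of-distances assumptions (stated just before the observation) are invoked correctly so that all inequalities can be taken strict. Everything else is bookkeeping with the subgraph monotonicity of distances.
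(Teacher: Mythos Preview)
Your proof is correct and uses the same core idea as the paper: if an edge $y_j y_{j+1}$ on a shortest path from $u$ to some $v \in S^k_{G,w}(u)$ were not in $E^k(y_j)$, the $k$ lighter edges at $y_j$ would yield $k$ nodes strictly closer to $u$ than $y_{j+1}$, contradicting $y_{j+1} \in S^k_{G,w}(u)$. The paper packages this via a shortest-path tree and a single contradiction, whereas you spell out that the whole path lies in $G^k$ and then derive both bullets from the distance equality; the organization differs but the combinatorial step is the same. One small fix to your flagged ``main obstacle'': your justification that the $z$'s are distinct from $u$ is circular (if $z=u$ then $\dist(u,z)=0$, not positive), and indeed when $j>0$ one of the $z$'s may equal $u$; in that case observe that $y_j$ itself---which is $\neq u$, not among the $z$'s, and strictly closer to $u$ than $y_{j+1}$---serves as the $k$-th witness, while for $j=0$ the $z$'s are neighbors of $y_0=u$ and hence automatically $\neq u$.
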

\begin{proof} The intuition behind \Cref{thm:shortcuts from smallest edges} is that Dijkstra's algorithm can be used to compute  $S^k_{G, w}(u)$ and $\{\dist_{G, w}(u, v)\}_{v\in S^k_{G, w}(u)}$ by executing it for $k$ iterations, and this process will never need any edge besides those in $(G^k, w)$. Below we provide a formal proof that does not require the knowledge of Dijkstra's algorithm. 

Let $T_u$ be a shortest path tree rooted at $u$ in $(G, w)$. Assume for a contradiction that there is a node $v\in  S^k_{G, w}(u)\setminus  S^k_{G^k, w}(u)$. Let $z$ be the parent of $v$ in $T_u$. The fact that $v\notin S^k_{G^k, w}(u)$ implies that $zv\notin E^k(z)$\danupon{More detail?}; thus, there exists $v'$ in $E^k(z)\setminus S^k_{G^k, w}(u)$ (since $|E^k(z)|= |S^k_{G^k, w}(u)|=k$ and $S^k_{G^k, w}(u)\setminus E^k(z)\neq \emptyset$). 
Note that since $v\notin E^k(z)$, we have $w(zv')<w(zv)$ (recall that we assume that edge weights are distinct). This, however, implies that 
\[
\dist_{G, w}(u, v')\leq \dist_{G, w}(u, z)+w(zv') < \dist_{G, w}(u, v).
\]
This contradicts the fact that $v\in S^k_{G, w}(u)$ and  $v\notin S^k_{G, w}(u)$.
\end{proof}

\section{Algorithms on General Networks}
\label{sec:sssp}
\danupon{The time vs approx in the title is not the best we can say.}

%\subsection{Overview}
%
%%\paragraph
%
%
%%************************************
%
%\subsubsection*{Landmarks and Overlay Networks}
%The next step of our algorithm is randomly picking {\em landmarks}, a technique commonly used in data structure (e.g. XXX). Let $\ell$ be a parameter, which could depend on $n$. Every node in the network becomes a landmark with probability $n/\ell$ (thus, there will be $\ell$ landmarks in expectation). By a standard argument, it can be shown that if a node knows $(n/\ell)$-hop ....
%
%
%\begin{lemma}
%Let $h=n/\ell$. if a node $u$ knows $\dist_{G, w}^h(u, v)$
%
%\end{lemma}
%
%
%
%
%Then, we approximate $\dist_{G, w}^h(u, v)$ between any landmark $u$ and  
%
%solve the $k$-source $h$-hop 
%
% \ref{thm:k-source h-hop sp}
%
% use the algorithm for
%************************************
%
%\subsubsection*{Spanner of Overlay Network}
%
%
%
%
%\subsection{$k$-Source $h$-Hop \ksp}\label{sec:k-source h-hop shortest paths}
%We recall \Cref{thm:k-source h-hop sp}: \khsp*

In this section, we present algorithms for \sssp and \apsp on general distributed networks, as stated in \Cref{thm:main sssp} and \Cref{thm:linear apsp}. First, observe that \apsp is simply a special case of the $h$-hop $k$-source shortest paths problem defined in \Cref{sec:bounded-hop mssp} where we use $h=k=n$. Thus, by \Cref{thm:k-source h-hop sp}, there is an $(1+o(1))$-approximation algorithm that solves \apsp in $\tilde O(k+h+D)=\tilde O(n)$ time with high probability. This immediately proves \Cref{thm:linear apsp}. 
%
%in \Cref{sec:apsp general}, we show that the linear-time \apsp algorithm follows as a corollary from the light-weight bounded-hop \sssp algorithm developed in \Cref{sec:light-weight bounded-hop sssp}. 
% 
The rest of this section is then devoted to showing a $\tilde O(n^{1/2}\diam^{1/4}+\diam)$-time algorithm for \sssp as in \Cref{thm:main sssp}, which require several non-trivial steps.

%\subsection{Linear-Time algorithm for \apsp}\label{sec:apsp general}

\subsection{Reduction to Single-Source Shortest Path on Overlay Networks}\label{sec:skeleton}\label{sec:overlay}

In this section, we show that solving the single source shortest path problem on a network $(G, w)$ can be reduced to the same problem on a certain type of an {\em overlay network}, usually known as a {\em landmark} or {\em skeleton} (e.g. \cite{Sommer12-survey,LenzenP13}). 
In general, an overlay network $G'$ is a {\em virtual} network of nodes and logical links that is built on top of an underlying {\em real} network $G$; i.e., $V(G')\subset V(G)$ and an edge in $G'$ (a ``virtual edge'') corresponds to a path in $G$ (see, e.g., \cite{EmekFKKP12}). Its implementation is usually abstracted as {\em a routing scheme} that maps virtual edges to underlying routes. However, for the purpose of this paper, we do not need a routing scheme but will need the notion of {\em hop-stretch} which captures the number of hops in $G$ between two neighboring virtual nodes in $V(G')$, as follows.
%
%In general, an overlay network is a virtual network of nodes and logical links that is built on top of an existing network with the purpose of implementing a network service that is not available in the existing network. 
%
%
%\begin{definition}[Overlay network of stretch $\lambda$]\label{def:overlay}
%Consider any network $G$. For any $\lambda$, a weighted network $(G', w')$ is said to be an {\em overlay network on $G$ of stretch $\lambda$} if $V(G')\subseteq V(G)$ and $\dist_G(u, v)\leq \lambda$ for all $u, v\in V(G')$. We say that network $G$ {\em knows} $(G', w')$ if for every virtual edge $uv\in E(G')$, both $u$ and $v$ (as a node in $G$) knows the value of $w'(uv)$.  
%\end{definition}
%
%
\begin{definition}[Overlay network of hop-stretch $\lambda$]\label{def:overlay}
Consider any network $G$. For any $\lambda$, a weighted network $(G', w')$ is said to be an {\em overlay network of hop-stretch $\lambda$ embedded in $G$} if 
\begin{enumerate}
\item $V(G')\subseteq V(G)$,
\item $\dist_G(u, v)\leq \lambda$ for every virtual edge $uv\in E(G')$, and
\item for every virtual edge $uv\in E(G')$, both $u$ and $v$ (as a node in $G$) knows the value of $w'(uv)$.
\end{enumerate}  
\end{definition}

We emphasize that $\lambda$ captures the number of hops ($\dist_G(u, v)$) between two neighboring nodes $u$ and $v$, not the weighted distance ($\dist_{G, w}(u, v)$). The main result of this section is an algorithm to construct an overlay network such that, if we can solve the single-source shortest path problem on such network, we can solve the single-source shortest path on the whole graph: 

%\begin{proposition}
%In $O(...)$ time, can embed an overlay network $H$ such that small size, if every node in $V(G)$ knows $(1+o(1))$ approximate distance of $\dist_H(s, v)$ for all $v\in V(H)$ then ...
%\end{proposition}

\begin{proposition}[Main result of \Cref{sec:skeleton}: Reduction to an overlay network]\label{thm:skeleton}\label{thm:overlay network}
For any weighted graph $(G, w)$, source node $s$, and integer $\alpha$, there is an $\tilde O(\alpha+n/\alpha+D)$-time distributed algorithm that embeds an overlay network $(G', w')$ in $G$ such that, with high probability, 
\begin{enumerate} %[label=(\arabic{*})]
\item $s\in V(G')$,
\item $|V(G')|=\tilde O(\alpha)$, and
\item if every node $u\in V(G)$ knows a $(1+o(1))$-approximate value of $\dist_{G', w'}(s, v)$ for every node $v\in V(G')$, then $u$ knows the  $(1+o(1))$-approximate value of $\dist_{G, w}(s, u)$. 
\end{enumerate}
\end{proposition}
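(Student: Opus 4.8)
The plan is to build $(G', w')$ by the standard skeleton construction and to rely on the two tools already developed: the light-weight bounded-hop multi-source algorithm (\Cref{thm:k-source h-hop sp}) and the fact that $h$-hop distances through sampled nodes approximate true distances once $h$ is large enough. Concretely, set $h = \Theta(n/\alpha \cdot \log n)$ (a ``hop radius''), let $S$ be a set obtained by including $s$ and sampling each node of $V(G)$ independently with probability $\Theta((\alpha/n)\log n)$, so that $|S| = \tilde O(\alpha)$ w.h.p. Define $V(G') = S$ and, for $u,v \in S$, put a virtual edge $uv$ of weight $w'(uv) = \dist^h_{G,w}(u,v)$ whenever $\dist^h_{G,w}(u,v) \le h$ in hops sense — more precisely, whenever the $h$-hop shortest path between them has at most $h$ edges; equivalently we just keep $w'(uv) = \dist^h_{G,w}(u,v)$ for all pairs and observe it is finite exactly when there is an $h$-hop path. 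To satisfy the hop-stretch requirement of \Cref{def:overlay} we take $\lambda = h$: item~2 of that definition holds because a virtual edge $uv$ with finite weight corresponds to an $h$-hop path in $G$, so $\dist_G(u,v) \le h$; item~3 holds because, by running \Cref{thm:k-source h-hop sp} with the $k = |S|$ sources in $S$, every node — in particular every endpoint of a virtual edge — learns a $(1+o(1))$-approximation of $\dist^h_{G,w}(s_i,\cdot)$, and we just let $w'$ be these computed values (so the overlay weights are themselves $(1+o(1))$-approximate, which is harmless since we only aim for a $(1+o(1))$-approximation throughout).

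Next I would argue the \emph{hitting-set / shortcutting} property that makes the overlay faithful: for any node $u$, consider a true shortest $s$–$u$ path $P$ in $(G,w)$. Break $P$ into maximal blocks of $h/2$ consecutive edges. A Chernoff/union-bound argument (exactly the classic skeleton argument, e.g.\ as used in \cite{LenzenP13}) shows that w.h.p., simultaneously for all $u$, every such block contains a node of $S$. Consequently $P$ is covered by a sequence $s = y_0, y_1, \dots, y_t = $ (last sampled node before $u$) of nodes in $S$ with consecutive ones at hop-distance $\le h$ along $P$, hence $\dist^h_{G,w}(y_j, y_{j+1}) \le w(P[y_j,y_{j+1}])$, so these form a walk in $G'$ of total $w'$-weight $\le w(P) = \dist_{G,w}(s,u)$; combined with the trivial direction ($w'(ab) \ge \dist_{G,w}(a,b)$ because a virtual edge corresponds to a real path, and triangle inequality), we get $\dist_{G',w'}(s, y_t) = \dist_{G,w}(s, y_t)$ up to the $(1+o(1))$ overlay-weight error. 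Finally, the residual segment from $y_t$ to $u$ has at most $h/2 < h$ hops, so $\dist_{G,w}(s,u) = \dist_{G,w}(s,y_t) + w(P[y_t,u])$ and $w(P[y_t,u]) \le \dist^h_{G,w}(y_t,u)$, which node $u$ already knows (approximately) from the multi-source run with source $y_t \in S$. Therefore a node $u$ that additionally knows an approximation of $\dist_{G',w'}(s,v)$ for all $v \in S$ can output $\min_{v \in S}\big(\widetilde{\dist}_{G',w'}(s,v) + \widetilde{\dist}^h_{G,w}(v,u)\big)$, and the above shows this equals $\dist_{G,w}(s,u)$ up to a $(1+o(1))$ factor. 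This gives item~3.

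For the running time: sampling and broadcasting the identities of $S$ (and $r_i$'s for the delay technique) costs $\tilde O(|S| + D) = \tilde O(\alpha + D)$; running \Cref{thm:k-source h-hop sp} with $k = |S| = \tilde O(\alpha)$ sources and hop bound $h = \tilde O(n/\alpha)$ costs $\tilde O(k + h + D) = \tilde O(\alpha + n/\alpha + D)$. Both endpoints of every virtual edge automatically learn $w'$ from this same run, so no extra communication is needed for item~3's hypothesis to be set up; and items~1–2 are immediate from the construction ($s \in S$ by fiat, $|S| = \tilde O(\alpha)$ w.h.p.). Summing, the total is $\tilde O(\alpha + n/\alpha + D)$ as claimed.

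The main obstacle I anticipate is making the covering argument fully rigorous \emph{uniformly over all target nodes $u$} while the shortest paths $P$ are not fixed in advance: one must either argue over a fixed collection of shortest paths (one per node, chosen by a consistent tie-breaking rule) or phrase the sampling guarantee as ``every length-$(h/2)$ subpath of $G$ contains a sampled node,'' the latter requiring a union bound over all $\le n^{h/2}$ such subpaths, which is why the sampling probability must carry the extra $\Theta(\log n)$ factor and $h$ must be at least $\Omega(\log n)$. A secondary but routine subtlety is bookkeeping the $(1+o(1))$ factors: the overlay weights are approximate, and then we solve SSSP on the overlay approximately, so the errors compose — but since $(1+o(1))^{O(1)} = 1+o(1)$ this is benign and only needs to be stated carefully.
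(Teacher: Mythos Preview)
Your proposal is correct and follows essentially the same approach as the paper: sample nodes into $S$ (plus $s$), run the light-weight bounded-hop multi-source algorithm of \Cref{thm:k-source h-hop sp} from $S$ with hop bound $h=\tilde\Theta(n/\alpha)$ to define the overlay weights, and use a hitting-set argument on a fixed shortest $s$--$u$ path to conclude that $\min_{v\in S}\bigl(\widetilde{\dist}_{G',w'}(s,v)+\widetilde{\dist}^h_{G,w}(v,u)\bigr)$ is a $(1+o(1))$-approximation. The paper uses sampling probability $\alpha/n$ with $h=n\log n/\alpha$ and invokes the Ullman--Yannakakis fact for the hitting argument, whereas you push the $\log n$ into the sampling probability and argue directly via blocks of length $h/2$; both parametrizations are equivalent inside the $\tilde O(\cdot)$ bounds, and your treatment of the union-bound subtlety (fix one shortest path per node rather than quantify over all subpaths) matches what the paper does.
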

\begin{proof}
Our algorithm is as follows. First, every node $u$ selects itself to be in $V(G')$ with probability $\alpha/n$. Additionally, we always keep source $s$ in $V(G')$; this guarantees the first condition. Observe that $E[|V(G')|] = 1+(\alpha/n)\cdot (n-1) \leq 2\alpha$; so, by Chernoff's bound (e.g. \cite[Theorem 4.4]{MitzenmacherU05}), 
%\begin{align*}
$Pr[|(G')|\geq 12\alpha\log n] \leq 1/n.$
%\end{align*}
This proves the second condition. 
% and shows that  the running time so far is $\tilde O(|S|) = O(\alpha).$ 
%
To guarantee the last condition, we have to define edges and their weights in $G'$. To do this, we invoke an algorithm for the bounded-hop multi-source shortest path problem in \Cref{thm:k-source h-hop sp} (page \pageref{thm:k-source h-hop sp}), with nodes in $V(G')$ as sources and $h=n\log n/\alpha$ hops. By \Cref{thm:k-source h-hop sp}, the algorithm takes $\tilde O(|V(G')|+h+D)=\tilde O(\alpha+n/\alpha+D)$ time, and every node $u\in V(G)$ will know $\dist'_{G, w}(u, v)$ such that 
\[
\dist_{G, w}^{h}(u, v)\leq \dist'_{G, w}(u, v)\leq (1+o(1))\dist_{G, w}^{h}(u, v)
\]
for all $v\in V(G')$. For any $u, v\in V(G')$ such that $\dist'_{G, w}(u, v)<\infty$, we add edge $uv$ with weight $w'(u, v)=\dist'_{G, w}(u, v)$ to $G'$. Note that both $u$ and $v$ knows the existence and weight of this edge. This completes the description of an overlay network $(G', w')$ embedded in $G$. 

We are now ready to show the third condition, i.e., if a node $u\in V(G)$ knows a $(1+o(1))$-approximate value of $\dist_{G', w'}(s, v)$ for all $v\in V(G')$, then it knows a $(1+o(1))$-approximate value of $\dist_{G, w}(s, u)$. Consider any node $u\in V(G)$, and let 
$$P=\langle u=v_0, v_1, \ldots, v_k=s\rangle,$$ 
for some $k$, be a shortest path between $s$ and $u$ in $(G, w)$. Observe that if $k\leq n\log n/\alpha$, then $u$ knows $\dist'_{G, w}(u, v)$ which is a $(1+o(1))$-approximate value of $\dist_{G, w}(s, u)$, and thus the third condition holds even when $u$ does not know a $(1+o(1))$-approximate value of $\dist_{G', w'}(s, v)$ for any $v\in V(G')$. It is thus left to consider the case where $k\geq n\log n/\alpha$. 
Let $i_1 < i_2 < \ldots < i_t$ be such that $v_{i_1}, \ldots, v_{i_t}$ are nodes in $P\cap V(G')$. Let $i_0=0$ (i.e., $v_{i_0}=u$). We note the following simple fact, which is very easy and well-known (e.g. \cite{UllmanY91}). We provide its proof here only for completeness. 
\begin{lemma}[Bound on the number of hops between two landmarks in a path]\label{thm:landmark}
For any $j$, $i_j-i_{j-1}\leq n \log n/\alpha$, with probability at least $1-2^{-\beta n}$, for some constant $\beta>0$ and sufficiently large $n$.
\end{lemma}
\begin{proof}
We note a well-known fact that a set of random selected nodes $|V(G')|$ of size $\alpha$ will ``hit'' a simple path of length at least $cn\log n/|V(G')$, for some constant $c$, with high probability. To the best of our knowledge, this fact was first shown in \cite{GreenKnuth81book} and has been used many times in dynamic graph algorithms (e.g. \cite{DemetrescuFI05} and references there in). The following fact appears as Theorem 36.5 in \cite{DemetrescuFI05} (attributed to \cite{UllmanY91}). 
%
%\danupon{Say something from here (from \cite{DemetrescuFI05}) ``In this section we recall an intuitive combinatorial property of long paths in a graph. Namely, if we pick a subset S of vertices at random from a graph G, then a sufficiently long path will intersect S with high probability. This can be very useful in finding a long path by using short searches. To the best of our knowledge, the long paths property was first given in [21], and later on it has been used may times in designing efficient algorithms for transitive closure and shortest paths (see e.g., [9, 28, 41, 42]). The following theorem is from [41].}
%
%
\begin{fact}[Ullman and Yannakakis \cite{UllmanY91}]\label{thm:UllmanY} 
Let $S \subseteq V(G)$ be a set of vertices chosen uniformly at random. Then the probability that a given simple path has a sequence of more than $(cn\log n)/|S|$ vertices, none of which are from $S$, for any $c > 0$, is, for sufficiently large $n$, bounded by $2^{-\beta n}$ for some positive $\beta$.
\end{fact}
Using $S=V(G')$, which has size $\tilde \Theta(\alpha)$, we have that every subpath of $P$ of length at least $n\log n/\alpha$ contains a node in $V(G')$, with high probability. The lemma follows by union bouding over the subpaths of $P$.\danupon{I'm a bit sketchy here.} This completes the proof of \Cref{thm:landmark}.
\end{proof}
It follows from the above lemma that $u$ knows $\dist'_{G, w}(u, v_{i_1})$ and, for any $j\geq 1$, $v_{i_j}v_{i_{j+1}}$ is an edge in the overlay network $G'$ of weight $w'(v_{i_j}v_{i_{j+1}})=\dist'_{G, w}(v_{i_j}, v_{i_{j+1}})$, with probability at least $1-2^{-\beta n}$. Thus, with high probability, 
\[\dist_{G', w'}(v_{i_1}, s)\leq \sum_{j=1}^{k-1} \dist'_{G, w}(v_{i_j}, v_{i_{j+1}}).\] 
Since $u$ already knows $\dist'_{G, w}(u, v_{i_1})$, it can now compute  
\[\dist'_{G, w}(u, v_{i_1})+\sum_{j=1}^{k-1} \dist'_{G, w}(v_{i_j}, v_{i_{j+1}})\] 
which is at least $\dist_{G, w}(u, s)$ and at most $(1+o(1))\dist_{G, w}(u, s)$. We note one detail that, in fact, $u$ does not known which node is $v_1$, so it has to use the value of
\[\min_{v\in V(G')} \dist'_{G, w}(u, v) + \dist_{G', w'}(v, s)\] 
as an estimate.\danupon{Vague} By union bounding over all nodes $u$, \Cref{thm:overlay network} follows.\danupon{I'm quite sketchy here.}
\end{proof}

\subsection{Reducing the Shortest Path Diameter of Overlay Network $(G', w')$}\label{sec:reduce spd overlay}
In this section, we assume that we are given an overlay network $(G', w')$ embedded in the original network $(G, w)$, as show in \Cref{thm:overlay network}. Our goal is to solve the single-source shortest path problem on $(G', w')$. Recall that $(G', w')$ has $|V(G')|=\tilde O(\alpha)$ nodes, for some parameter $\alpha$, which will be fixed later. Note that the shortest path diameter ($\spdiam(G, w)$) of $(G', w')$  might be as large as $|V(G')|=\tilde O(\alpha)$.\danupon{Should recall ``shortest path diameter''} Since the running time of our algorithm for single-source shortest path will depend on the shortest path diameter, we wish to reduce the shortest path diameter. We will apply the technique from \Cref{sec:spd reduction} to do this task, as follows.

\begin{proposition}[\spdiam reduction of an overlay network]\label{thm:spd reduction overlay network}
For any parameter $\alpha$ and $\beta$, consider an overlay network $(G', w')$ of $\tilde O(\alpha)$ nodes, embedded in network $(G, w)$. There is a distributed algorithm that terminates in $\tilde O(\alpha\beta+\diam(G))$ time and gives an overlay network $(G'', w'')$ such that $V(G')=V(G'')$, $\spdiam(G'', w'') = \tilde O(\alpha/\beta)$, and for any nodes $u$ and $v$, $\dist_{G'', w''}(u, v)=\dist_{G', w'}(u, v)$. 
\end{proposition}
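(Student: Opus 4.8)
The plan is to instantiate the shortcut construction of \Cref{sec:spd reduction} on the overlay network $(G', w')$ itself: take $(G'', w'')$ to be the $\beta$-shortcut graph $(G', w')^\beta$ of $(G', w')$ in the sense of \Cref{def:shortcut graph}, applied with the $\tilde O(\alpha)$-node graph $(G', w')$ playing the role of $(G, w)$ and $k = \beta$. With this choice the three asserted properties come essentially for free. We have $V(G'') = V(G')$ by definition. Every shortcut edge $uv$ that we add has weight exactly $\dist_{G', w'}(u, v)$, so any walk in $(G'', w'')$ can be rewritten as a walk of the same total weight in $(G', w')$ and conversely, whence $\dist_{G'', w''}(u, v) = \dist_{G', w'}(u, v)$ for all $u, v$. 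And $\spdiam(G'', w'') < 4\,|V(G')|/\beta = \tilde O(\alpha/\beta)$ directly from \Cref{thm:spd reduction}. Moreover $(G'', w'')$ is still an overlay network embedded in $G$ (each virtual edge corresponds to a path in $G$, a shortcut edge to a concatenation of at most $|V(G')|$ old virtual edges, so the hop-stretch grows by at most a factor $|V(G')|$), and we will maintain the invariant of \Cref{def:overlay} that each virtual node $u$ knows $w''(uv)$ for all its incident virtual edges.

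The real content is the distributed implementation within the time budget $\tilde O(\alpha\beta + \diam(G))$. Here I would invoke \Cref{thm:shortcuts from smallest edges}: to build the $\beta$-shortcut graph of $(G', w')$ it suffices for the nodes to collectively learn, for every virtual node $v \in V(G')$, the set $E^\beta(v)$ of its $\beta$ lightest incident virtual edges together with their weights, i.e.\ the subgraph $(G'^\beta, w')$. By item~3 of \Cref{def:overlay} — which holds for the overlay network produced by \Cref{thm:overlay network}, since there every landmark $v$ learns $\dist'_{G,w}(v, u)$ for all landmarks $u$ — each virtual node $v$ already knows the weights of all of its incident virtual edges, and can therefore select $E^\beta(v)$ \emph{locally, with no communication} (breaking ties with node IDs so that $E^\beta$ and $S^\beta$ are well defined, consistently across all nodes).

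Next I would broadcast the whole collection $\bigcup_{v \in V(G')} E^\beta(v)$ — a total of $|V(G')|\cdot\beta = \tilde O(\alpha\beta)$ (endpoints, weight) triples — to all of $V(G)$ along a BFS tree of $G$. A single triple is broadcast in $O(\diam(G))$ rounds, and by standard pipelining all $\tilde O(\alpha\beta)$ of them are broadcast in $\tilde O(\alpha\beta + \diam(G))$ rounds, matching the claimed running time. After this step every node — in particular every virtual node — knows the entire subgraph $(G'^\beta, w')$, and hence by \Cref{thm:shortcuts from smallest edges} can compute, with purely local computation, $S^\beta_{G', w'}(v)$ and $\dist_{G', w'}(v, u)$ for every $v \in V(G')$ and every $u \in S^\beta_{G', w'}(v)$; these are exactly the edge set and weights of $(G'', w'') = (G', w')^\beta$. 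In particular each virtual node $v$ now knows $w''(vu)$ for every incident virtual edge $vu$ (both those inherited from $G'$ and the new shortcuts), which restores the overlay invariant of \Cref{def:overlay}.

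The step I expect to require the most care is the accounting for the pipelined broadcast: one must check that each node on the BFS tree forwards only $\tilde O(\alpha\beta)$ messages in total (true because $|\bigcup_v E^\beta(v)| = \tilde O(\alpha\beta)$ and each edge is forwarded at most once along the tree), and that each $w''$ value, being a distance in $(G',w')$ and hence $\poly(n)$, fits in an $O(\log n)$-bit message. Everything else reduces to a direct application of \Cref{thm:shortcuts from smallest edges} and \Cref{thm:spd reduction}, with $|V(G')| = \tilde O(\alpha)$ substituted for $n$.
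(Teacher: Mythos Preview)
Your proposal is correct and follows essentially the same approach as the paper: have each overlay node broadcast its $\beta$ lightest incident virtual edges over a BFS tree of $G$ (cost $\tilde O(\alpha\beta+\diam(G))$ by pipelining), then use \Cref{thm:shortcuts from smallest edges} to build the $\beta$-shortcut graph locally and invoke \Cref{thm:spd reduction} for the $\spdiam$ bound. Your write-up is in fact more careful than the paper's own proof---you explicitly justify why each virtual node can select $E^\beta(v)$ locally (via item~3 of \Cref{def:overlay} and the construction in \Cref{thm:overlay network}), and you note that after the broadcast every node can compute the shortcuts itself, whereas the paper somewhat redundantly mentions a second broadcast of the shortcuts.
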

\begin{proof}
Consider the following algorithm. First, every node in the overlay network $(G', w')$ broadcasts to all other nodes the values of $\beta$ edges incident to it with smallest weights (breaking ties arbitrarily). This step takes $\tilde O(\alpha \beta)$ time since there are $\alpha\beta$ edges broadcasted. Using these broadcasted edges, every node $v$ can compute $\beta$ nodes nearest to it (since any shortest path algorithm -- Disjkstra's algorithm for example -- will only need to know $\beta$ smallest-weight edges to compute $\beta$ nearest nodes).  Thus, $v$ can add $\beta$ {\em shortcuts} to the network $(G', w')$ to construct network $(G'', w'')$. In fact, the added shortcuts could be broadcasted to all nodes in $\tilde O(\alpha\beta+\diam(G))$ time since each node will broadcast only $\beta$ shortcuts. This implies that we can build an overlay network $(G'', w'')$ in $\tilde O(\alpha\beta)$ time and, by \Cref{thm:spd reduction}, the shortest-path diameter of $(G'', w'')$ is $\spdiam(G'', w'') = \tilde O(\alpha/\beta).$\danupon{This proof is really rough.}
\end{proof}

\subsection{Computing \sssp on Overlay Network $(G'', w'')$}\label{sec:sssp overlay}

In the final step of our sublinear-time \sssp algorithm, we solve \sssp on overlay network $(G'', w'')$ embedded in $(G, w)$ obtained in the previous section. Recall that for parameters $\alpha$ and $\beta$ which will be fixed later, $|V(G'')|=\tilde \Theta(\alpha)$ and  $\spdiam(G'', w'')=\tilde O(\alpha/\beta)$.

\begin{lemma}[$(1+o(1))$-approximate \sssp on $(G'', w'')$]\label{thm:final step sssp}
We can $(1+o(1))$-approximate \sssp on $(G'', w'')$ in $\tilde O(\diam(G)\alpha/\beta + \alpha)$ time.
\end{lemma}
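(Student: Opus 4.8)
The plan is to exploit the small shortest-path diameter of $(G'', w'')$ together with the broadcast-only nature of the light-weight $h$-hop \sssp algorithm. I would set $h' := \spdiam(G'', w'') = \tilde O(\alpha/\beta)$; then by \Cref{def:spdiam} we have $\dist_{G'', w''}^{h'}(s, v) = \dist_{G'', w''}(s, v)$ for every node $v$, so it suffices to $(1+o(1))$-approximate the $h'$-hop \sssp problem on $(G'', w'')$. This is exactly what \Cref{algo:bounded-hop sssp} (the algorithm behind \Cref{thm:light}) does: it returns $\dist'(s, v)$ with $\dist_{G'', w''}^{h'}(s, v)\le \dist'(s, v)\le (1+o(1))\dist_{G'', w''}^{h'}(s, v)$. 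The only real work is that $(G'', w'')$ is virtual, so every communication step of this algorithm must be simulated over the real network $G$.

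So the heart of the proof is the simulation. First I would build a BFS tree $T$ of $G$ rooted at an arbitrary node in $O(\diam(G))$ rounds, with every node learning its parent in $T$. Then I would run \Cref{algo:bounded-hop sssp} on $(G'', w'')$ and simulate it super-round by super-round, where a super-round corresponds to one time step of \Cref{algo:pseudo polynomial}. By inspection of \Cref{algo:pseudo polynomial,algo:bounded-hop sssp}, all communication consists of broadcasts, every broadcast occurs at one of at most $r = O(\log n)\cdot(1+2/\epsilon)h' = \tilde O(h')$ super-rounds (the $O(\log n)$ weight scales, each run for a window of $K=(1+2/\epsilon)h'$ rounds), and by \Cref{claim:light-weight sssp broadcast} each node broadcasts only $\tilde O(1)$ times and always sends the \emph{same} message, of $O(\poly\log n)$ bits, to all of its $G''$-neighbors. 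To simulate super-round $i$, each virtual node that wishes to broadcast tags its message with its own ID, and we deliver all such messages to \emph{every} node of $G$ by pipelining them up $T$ to the root and then back down $T$; with $M_i$ messages in super-round $i$ this is the standard upcast/downcast routine costing $\tilde O(\diam(G)+M_i)$ rounds. Having received the super-round-$i$ messages, a node $u$ processes each one that originates at a $G''$-neighbor $v$ (it knows $w''(uv)$ by \Cref{def:overlay}(3)), performing precisely the update $u$ would do in that round of the $G''$-algorithm. Since edge weights are positive integers, any message a node needs before its own super-round-$i$ broadcast was already delivered in an earlier super-round, so the simulation faithfully reproduces \Cref{algo:bounded-hop sssp}; correctness then follows from the first paragraph.

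For the running time, summing over super-rounds gives $\sum_{i=1}^{r}\tilde O(\diam(G)+M_i)=\tilde O\big(r\cdot\diam(G)+\sum_{i=1}^{r}M_i\big)$. Here $r=\tilde O(h')=\tilde O(\alpha/\beta)$, and $\sum_i M_i$ is just the total number of broadcasts, which is $\tilde O(1)$ per node by \Cref{claim:light-weight sssp broadcast} times $|V(G'')|=\tilde O(\alpha)$ nodes, hence $\tilde O(\alpha)$. Adding the $O(\diam(G))$ for building $T$ (subsumed, since $\beta\le|V(G')|=\tilde O(\alpha)$ forces $\alpha/\beta=\tilde\Omega(1)$), the total is $\tilde O(\diam(G)\,\alpha/\beta+\alpha)$, as claimed.

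I expect the main obstacle to be the simulation bookkeeping rather than any new idea: one has to be careful that \Cref{algo:bounded-hop sssp} on $(G'', w'')$ really finishes within $\tilde O(h')$ of its \emph{own} rounds --- so that the ``$+\diam(G'')$'' term one would get from invoking \Cref{thm:light} as a black box is irrelevant, which is seen by reading \Cref{algo:pseudo polynomial} directly --- and that handling one super-round at a time, with its broadcasts fully delivered across $G$ before the next begins, does not break the dependency structure of the algorithm, which hinges on all weights being at least $1$. The pipelined broadcast over $T$ itself ($\tilde O(\diam(G)+M_i)$ rounds for $M_i$ tokens) is classical.
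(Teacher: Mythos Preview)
Your proposal is correct and follows essentially the same approach as the paper: simulate the light-weight $h'$-hop \sssp algorithm on $(G'',w'')$ with $h'=\spdiam(G'',w'')$, implementing each of its $\tilde O(h')$ rounds by broadcasting the round's messages through $G$ (at cost $\tilde O(\diam(G)+M_i)$ each) and bounding $\sum_i M_i=\tilde O(\alpha)$ via the light-weight property. Your write-up is in fact more explicit than the paper's about the BFS-tree pipelining and about why the simulation preserves correctness (positive weights ensuring dependencies respect super-round order).
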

\begin{proof}
We will simulate the light-weight $h$-hop \sssp algorithm in \Cref{thm:light-weight sssp} on the overlay network $(G'', w'')$ by using $h=\spdiam(G'', w'') = \tilde O(\alpha/\beta)$. To simulate this algorithm, we will view $(G'', w'')$ as a {\em fully-connected} overlay network where every node can communicate with other nodes by {\em broadcasting}, i.e. sending a message to every node in the original network $G$, which takes $O(\diam(G))$ time. In particular, every node in $(G'', w'')$ will simulate each round of this algorithm and wait until the messages that are sent in such round by all nodes are received by all nodes before starting the next round (see \Cref{algo:simulate broadcast on overlay}). 

\begin{algorithm}
\caption{Similating a broadcasting algorithm on an overlay network $(G'', w'')$}\label{algo:simulate broadcast on overlay}
{\bf Input:} An overlay network $(G'', w'')$ embedded on network $G$ and an algorithm $\cA$ such that nodes communicate only by broadcasting a message to all its neighbors. 

{\bf Goal:} Simulate $\cA$ on $(G'', w'')$ when we view $G''$ as a fully-connected overlay network.

\begin{algorithmic}[1]
\For{each round $i$ of algorithm $\cA$} 

\State Count the number of nodes in $G''$ that want to broadcast a message in this round of $\cA$. Let $M_i$ be such number. Make every node in $G$ knows $M_i$. (This step takes $O(\diam(G))$ time.)

\State Every node in $G''$ that wants to send a message broadcasts such message to every node in $G$. Wait for $\diam(G)+M_i$ rounds to make sure that every node receives all $M_i$ messages before proceeding to round $i+1$.

\EndFor
\end{algorithmic}
\end{algorithm}

Simulating each round $i$ of this algorithm will take $\tilde O(\diam(G)+M_i)$, where $M_i$ is the total number of messages broadcasted by all nodes in round $i$. This is because broadcasting $M_i$ messages to all nodes in the network (not just all neighbors) takes  $\tilde O(\diam(G)+M_i)$ time. 
Note that, by \Cref{thm:light}, this algorithm finishes in $\tilde O(h)$ rounds; thus, the total time needed to simulate this algorithm is  $\tilde O(h\diam(G)+M)$ where $M$ is the total number of messages broadcasted by all nodes throughout the algorithm. Since this algorithm is light-weight, every node in $G''$ broadcasts only $O(\log n)$ messages, and thus we can bound $M$ by $\tilde O(|V(G'')|) = \tilde O(\alpha)$. So, the total running time is $\tilde O(h\diam(G)+\alpha) = \tilde O(\diam(G)\alpha/\beta + \alpha)$ as claimed. 
\end{proof}

\subsection{Putting Everything Together (Proof of \Cref{thm:main sssp})}\label{sec:sssp time analysis}

By \Cref{thm:final step sssp}, we can $(1+o(1))$-approximate \sssp on $(G'', w'')$ which, in turn,  $(1+o(1))$-approximates \sssp on $(G', w')$, by \Cref{thm:spd reduction overlay network}. Then, by \Cref{thm:overlay network}, we know that we can $(1+o(1))$-approximate \sssp on the original network $(G, w)$ as desired. 
We now analyze the running time. Constructing $(G', w')$ takes $\tilde O(\alpha+n/\alpha+\diam(G))$, as in \Cref{thm:overlay network}. Adding shortcuts to $(G', w')$ to construct $(G'', w'')$ takes $\tilde O(\alpha\beta+\diam(G))$, by \Cref{thm:spd reduction overlay network}. Finally, solving \sssp on $(G'', w'')$ takes $\tilde O(\diam(G)\alpha/\beta + \alpha)$ by \Cref{thm:final step sssp}. So, the total running time of our algorithm is 
\[
\tilde O(n/\alpha+\diam(G)+\alpha\beta+\diam(G)\alpha/\beta).
\]
By setting $\alpha=n^{1/2}/(\diam(G))^{1/4}$ and $\beta=(\diam(G))^{1/2}$, we get the running time of $\tilde O(n^{1/2}(\diam(G))^{1/4}+\diam(G))$ as desired. 
Note that it is possible that $\beta\geq \alpha$. In this case, we will simply set $\beta=\alpha$ to get the claimed running time; in fact, this happens only when $\diam(G)\geq n^{2/3}$, and the running time will be $\tilde O(\diam(G))$ in this case. 

%********************************************************
%********************************************************
%********************************************************

\section{Algorithms on Fully-Connected Networks}\label{sec:clique}

\subsection{$\tilde O(\sqrt{n})$-time Exact Algorithm for \sssp}\label{sec:sssp clique}

In this section, we present an algorithm that solves \sssp exactly in $\tilde O(\sqrt{n})$ time on fully-connected networks. The algorithm has two simple phases, as shown in \Cref{algo:clique SSSP}. In the first phase, it reduces the shortest-path diameter using the techniques developed in \Cref{sec:spd reduction}. In particular, every node $u$ broadcasts $k=\sqrt{n}$ edges of smallest weight. Then, every node uses the information it receives to compute a $k$-shortcut graph $(G, w')$, which can be done due to \Cref{thm:shortcuts from smallest edges}. By \Cref{thm:spd reduction}, we have 
$$\spdiam(G, w')<4\sqrt{n}~~~\mbox{and}~~~\forall u, v:~\dist_{G, w}(u, v)=\dist_{G, w'}(u, v).$$

\begin{algorithm}
\caption{$\tilde O(\sqrt{n})$-time Exact Algorithm for \sssp}\label{algo:clique SSSP}
{\bf Input:} A fully connected network $(G, w)$ and source node $s$. Weight $w(uv)$ of each edge $uv$ is known to $u$ and $v$. 

{\bf Output:} Every node $u$ knows $d(s, u)$ which is the equal to $\dist_{G, w}(s, u)$. 

\medskip{\bf Phase 1:} Shortest path diameter reduction. This phase gives a new weight $w'$ such that $\spdiam(G, w')<4\sqrt{n}$. The weight $w'(uv)$ of an edge $uv$ is known to its end-nodes $u$ and $v$.

\medskip\begin{algorithmic}[1]
\State Let $k=\sqrt{n}$. 
\State Each node $u$ sends $k$ edges of smallest weight, i.e. edges in $E^k(u)$ as in \Cref{def:E^k(u) set of k smallest weight }, to all other nodes. 
\State Every node $v$ uses $\bigcup_{v\in V(G)} E^k(v)$ construct  $(G^k, w)$ and compute $k$-shortcut edges, i.e. compute  $S^k_{G, w}(u)$ and $\{\dist_{G, w}(u, v)\}_{v\in S^k_{G, w}(u)}$. {\footnotesize\em // This step can be done internally (without communication) due to \Cref{thm:shortcuts from smallest edges}}. 
\State Augment $(G, w)$ with $k$-shortcut edges: for any edge $uv$, let $w'(uv)=\dist_{G, w}(u, v)$ if $u\in S^k_{G, w}(v)$ or $v\in S^k_{G, w}(u)$; otherwise, $w'(uv)=w(uv)$. {\footnotesize\em // This step can be done internally since both $u$ and $v$ know all information needed, i.e. $S^k_{G, w}(u)$,  $S^k_{G, w}(v)$,  $\{\dist_{G, w}(u, v)\}_{v\in S^k_{G, w}(u)}$, and $\{\dist_{G, w}(u, v)\}_{u\in S^k_{G, w}(v)}$.}
\algstore{cliqueSSSP}
\end{algorithmic}

\medskip{\bf Phrase 2:} Simulate Bellman-Ford algorithm on $(G, w')$. This phase makes every node $u$ knows  $d(s, u)$ where we claim that $d(s, u)=\dist_{G, w}(s, u)$. 
\medskip\begin{algorithmic}[1]
\algrestore{cliqueSSSP}
\State Let $d(s, s)=0$ and $d(s, u)=\infty$ for every node $u$.
\For{$i=1\ldots 4\sqrt{n}$}
\State Every node $u$ sends $d(s, u)$ to all other nodes. 
\State Every node $v$ updates $d(s, v)$ to $\min_{u} (d(s, u)+w'(uv))$. 
\EndFor
\end{algorithmic}
\end{algorithm}

In the second phase, the algorithm simulates Bellman-Ford's algorithm on $(G, w')$. In particular, every node iteratively uses the distance from $s$ to other nodes to update its distance; i.e., every node $v$ sets $d(s, v)$ to $\min_{u} (d(s, u)+w'(uv))$. It can be easily shown that by repeating this process for $\spdiam(G, w')$ iteration,  $d(s, u)=\dist_{G, w'}(s, u)$ for every node $u$. We provide the sketch of this claim for completeness, as follows.\danupon{To do: Use $d(u)$ instead of $d(s, u)$.}
\begin{claim}[Correctness of Phase 2 of \Cref{algo:clique SSSP}]\label{thm:BellmanFord}
Phase~2 of \Cref{algo:clique SSSP} returns a function $d$ such that, for every node $u$, $d(s, u)=\dist_{G, w'}(s, u)$. 
\end{claim}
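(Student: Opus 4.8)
The plan is to run the textbook Bellman--Ford correctness argument, with the single extra ingredient that the number of iterations, $4\sqrt n$, exceeds $\spdiam(G,w')$ thanks to Phase~1.

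\textbf{Step 1 (soundness: $d(s,u)\ge\dist_{G,w'}(s,u)$).} First I would observe that throughout Phase~2 the value $d(s,u)$ is either $\infty$ or equals $w'(Q)$ for some $s$--$u$ walk $Q$ in $(G,w')$. This is immediate by induction on the number of relaxations performed: the initial value $d(s,s)=0$ corresponds to the trivial walk, and a relaxation $d(s,v)\leftarrow d(s,u)+w'(uv)$ simply appends the edge $uv$ to a walk witnessing $d(s,u)$. Since $w'$ is positive, $w'(Q)\ge \dist_{G,w'}(s,u)$ for every $s$--$u$ walk $Q$, so $d(s,u)\ge \dist_{G,w'}(s,u)$ at all times. (Here I also note that, as is standard, each node keeps its old value if it is already smaller — equivalently $s$ re-announces $d(s,s)=0$ in every round — so the sequence of values of $d(s,u)$ is nonincreasing in the iteration count.)

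\textbf{Step 2 (completeness: an $i$-hop invariant).} Let $d_i(s,u)$ be the value of $d(s,u)$ after the $i$-th pass of the \textsf{for}-loop, with $d_0$ the initialization. I would prove by induction on $i$ that $d_i(s,u)\le \dist^{i}_{G,w'}(s,u)$ for every node $u$, where $\dist^i$ is the $i$-hop distance of \Cref{def:bounded hop sssp}. The base case holds since $d_0(s,s)=0=\dist^0_{G,w'}(s,s)$ and $d_0(s,u)=\infty=\dist^0_{G,w'}(s,u)$ for $u\ne s$. For the step, fix $u$, let $P=\langle s=x_0,\dots,x_\ell=u\rangle$ be a path of at most $i$ hops with $w'(P)=\dist^i_{G,w'}(s,u)$; if $\ell\le i-1$ the bound follows from the inductive hypothesis together with monotonicity of $d$, and if $\ell=i$ then the inductive hypothesis gives $d_{i-1}(s,x_{i-1})\le \dist^{i-1}_{G,w'}(s,x_{i-1})\le w'(\langle x_0,\dots,x_{i-1}\rangle)$, so in pass $i$ node $u=x_i$ sets $d(s,u)$ to at most $d_{i-1}(s,x_{i-1})+w'(x_{i-1}x_i)\le w'(P)$.

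\textbf{Step 3 (conclusion).} Combining the two steps, after the last pass $i^\star=4\sqrt n$ we have $\dist_{G,w'}(s,u)\le d(s,u)\le \dist^{4\sqrt n}_{G,w'}(s,u)$. By Phase~1 and \Cref{thm:spd reduction}, $\spdiam(G,w')<4\sqrt n$, so by \Cref{def:spdiam} we have $\dist^{4\sqrt n}_{G,w'}(s,u)=\dist_{G,w'}(s,u)$ for every $u$, whence $d(s,u)=\dist_{G,w'}(s,u)$, proving the claim. (Together with the Phase~1 guarantee $\dist_{G,w'}(s,u)=\dist_{G,w}(s,u)$ this also yields the overall correctness of \Cref{algo:clique SSSP}, though that is outside the statement of the claim.) There is no real obstacle here; the only points needing a little care are stating the induction with the $i$-hop distance rather than with an equality to the true distance (so the case $\ell<i$ goes through), and using positivity of $w'$ in Step~1 — the genuine content of the theorem is in Phase~1, not in this claim.
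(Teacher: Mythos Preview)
Your proposal is correct and follows essentially the same route as the paper: prove the $i$-hop invariant $d_i(s,u)\le\dist^i_{G,w'}(s,u)$ by induction, note the reverse inequality is immediate, and conclude using $\spdiam(G,w')<4\sqrt n$ from Phase~1. Your write-up is in fact more careful than the paper's --- you spell out the soundness direction via walks and flag the monotonicity issue (that the update rule as literally written does not keep the old value), which the paper glosses over with ``it is clear that $d(s,u)\ge\dist_{G,w'}$''.
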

\begin{proof}
We will show by induction that after the $i^{th}$ iteration, the value of $d(s, u)$ will be at most the value of the $i$-hop distance between $s$ and $u$, i.e. $d(s, u)\leq \dist_{G, w'}^i(s, u)$ (recall that $\dist_{G, w'}^i(s, u)$ is defined in \Cref{def:bounded hop sssp}). This trivially holds before we start the first iteration since $\dist_{G, w'}^0(s, s)=0$ and $\dist_{G, w'}^0(s, u)=\infty$. Assume for an induction that it holds for some $i\geq 0$. For any node $u$, let $P$ be a shortest $(i+1)$-hop $s$-$u$ path, and $v$ be the node preceding $u$ in such path. By the induction hypothesis, after the $i^{th}$ iteration, $d(s, v)\leq \dist^i_{G, w'}(s, v)$. So, after the $(i+1)^{th}$ iteration, $d(s, u)\leq d(s, v)+w'(vu)\leq \dist_{G, w'}(s, u)$. The claim thus holds for the $(i+1)^{th}$ iteration.

Let $h=\spdiam(G, w')$. Since $\dist^h_{G, w'}(s, u)=\dist_{G, w'}(s, u)$ for every node $u$, we have that $d(s, u)\leq \dist_{G, w'}(s, u)$ after $h$ iterations. Since it is clear that $d(s, u)\geq \dist_{G, w'}$, \Cref{thm:BellmanFord} follows.
\end{proof}

\subsection{$\tilde O(\sqrt{n})$-time $(2+o(1))$-Approximation Algorithm for \apsp}\label{sec:apsp clique}

We now present a $(2+o(1))$-approximation algorithm for \apsp, which also has $\tilde O(\sqrt{n})$ time. Our algorithm is outlined in \Cref{algo:clique APSP}. In the first phase of this algorithm is almost the same as the first phase of \Cref{algo:clique SSSP} presented in the previous section: by having every node $u$ sending out $E^{\sqrt{n}}(u)$ to all other nodes, we get a network $(G, w')$ such that 
$$\spdiam(G, w')<4\sqrt{n}~~~\mbox{and}~~~\forall u, v:~\dist_{G, w}(u, v)=\dist_{G, w'}(u, v).$$ 
The only difference is that, in addition to performing Phase~1 of \Cref{algo:clique SSSP} to get the properties above, we also make sure that 
\begin{align}
w'(uv)\leq \min_{z\in S^k(u)}\dist_{G, w}(u, z)+w(zv)\,.\label{eq:cliqueAPSP one}
\end{align}
This is done by having every node $u$ broadcasts $S^k(u)$ and $\{\dist_{G, w}(u, z)\}_{z\in S^k_{G, w}(u)}$ (which are also computed in Phase~1 of \Cref{algo:clique SSSP}) to all other nodes, where $k=\sqrt{n}$. Then every node $v$ can internally update $w'(uv)$, for every node $u$, to $w'(uv)= \min\{w'(uv), \min_{z\in S^k(u)}\dist_{G, w}(u, z)+w(zv)\}$. 
Phase~1 takes $\tilde O(\sqrt{n})$ time since performing Phase~1 of \Cref{algo:clique SSSP} takes $\tilde O(\sqrt{n})$ time and broadcasting $S^k(u)$ and $\{\dist_{G, w}(u, z)\}_{z\in S^k_{G, w}(u)}$, which are sets of size  $\tilde O(\sqrt{n})$, also takes  $\tilde O(\sqrt{n})$ time.

\begin{algorithm}
\caption{$\tilde O(\sqrt{n})$-time $(2+o(1))$-Approximation Algorithm for \apsp}\label{algo:clique APSP}
{\bf Input:} A fully connected network $(G, w)$. Weight $w(uv)$ of each edge $uv$ is known to $u$ and $v$. 

{\bf Output:} Every node $u$ knows $(2+o(1))$-approximate value of $\dist_{G, w}(u, v)$ for every node $v$.

%\medskip{\bf Phase 1:} For $k=\sqrt{n}$, compute $S^k(u)$ and $\{\dist_{G, w}(u, z)\}_{z\in S^k_{G, w}(u)}$ and reduce the shortest-path diameter to $O(\sqrt{n})$. At the end of this phase, every node $v$ knows $S^k(u)$ and $\{\dist_{G, w}(u, z)\}_{z\in S^k_{G, w}(u)}$ for all $u$, and we get a network $(G, w')$ such that $\spdiam(G, w')<4\sqrt{n}$. 
%
%%Shortest path diameter reduction (see Phase~1 of \Cref{algo:clique SSSP} for details). This phase gives a new weight $w'$ such that $\spdiam(G, w')<4\sqrt{n}$. The weight $w'(uv)$ of an edge $uv$ is known to its end-nodes $u$ and $v$.
%\medskip\begin{algorithmic}[1]
%\State Let $k=\sqrt{n}$. 
%\State Each node $u$ sends $k$ edges of smallest weight, i.e. edges in $E^k(u)$ as in \Cref{def:E^k(u) set of k smallest weight }, to all other nodes. 
%\State Every node $v$ uses $\bigcup_{v\in V(G)} E^k(v)$ construct  $(G^k, w)$ and compute $k$-shortcut edges, i.e. compute  $S^k_{G, w}(u)$ and $\{\dist_{G, w}(u, v)\}_{v\in S^k_{G, w}(u)}$. {\footnotesize\em // This step can be done internally (without communication) due to \Cref{thm:shortcuts from smallest edges}}. 
%\State Augment $(G, w)$ with $k$-shortcut edges: for any edge $uv$, let $w'(uv)=\dist_{G, w}(u, v)$ if $u\in S^k_{G, w}(v)$ or $v\in S^k_{G, w}(u)$; otherwise, $w'(uv)=w(uv)$. {\footnotesize\em // This step can be done internally since both $u$ and $v$ know all information needed, i.e. $S^k_{G, w}(u)$,  $S^k_{G, w}(v)$,  $\{\dist_{G, w}(u, v)\}_{v\in S^k_{G, w}(u)}$, and $\{\dist_{G, w}(u, v)\}_{u\in S^k_{G, w}(v)}$.}
%\algstore{cliqueAPSP}
%\end{algorithmic}

\medskip{\bf Phase 1:} Let $k=\sqrt{n}$. Compute $S^{k}(u)$, for every node $u$, and weight assignment $w'$ such that  $\spdiam(G, w')<4\sqrt{n}$ and $w'(uv)\leq \min_{z\in S^k(u)}\dist_{G, w}(u, z)+w(zv)$.

\medskip\begin{algorithmic}[1]
\State Perform Phase~1 of \Cref{algo:clique SSSP}. This step makes every node $u$ knows $S^k(u)$ and $\{\dist_{G, w}(u, z)\}_{z\in S^k_{G, w}(u)}$. 

\State Every node $u$ sends $S^k(u)$ and $\{\dist_{G, w}(u, z)\}_{z\in S^k_{G, w}(u)}$ to all other nodes. 

\State Every node $v$ updates $w'(uv)$, for every node $u$, to $w'(uv)= \min\{w'(uv), \min_{z\in S^k(u)}\dist_{G, w}(u, z)+w(zv)\}$. 
{\footnotesize\em // This step can be done without communication since every node $v$ knows  $S^k(u)$ and $\{\dist_{G, w}(u, z)\}_{z\in S^k_{G, w}(u)}$  for all nodes $u$}
\algstore{cliqueAPSP}
\end{algorithmic}

\medskip{\bf Phase 2:} Compute $(4\sqrt{n})$-hop $(\sqrt{n}\log n)$-source shortest paths for $\sqrt{n}\log n$ random sources.

\medskip\begin{algorithmic}[1]
\algrestore{cliqueAPSP}

\State Let $R$ be a set of randomly selected $\sqrt{n}\log n$ nodes. 

\State Run the multi-souce bounded-hop shortest paths algorithm from \Cref{thm:k-source h-hop sp} (\Cref{algo:bounded-hop multi-source}) on $(G, w')$ for $h=4\sqrt{n}$ hops using nodes in $R$ as sources. 
{\footnotesize // \em At the end of this process, every node $v$ knows a $(1+o(1))$-approximate value of $\dist^h_{G, w'}(r, v)$, which equals to $\dist_{G, w'}(r, v)$ since  $\spdiam(G, w')<4\sqrt{n}$, for all $r\in R$. We denote this approximate value by $d'(u, v)$; thus, $\dist_{G, w'}(r, v)\leq d'(u, v)\leq (1+o(1))\dist_{G, w'}(r, v)$.}

\State Every node $v$ sends $d'(r, v)$, for all $r\in R$, to all nodes. 
\end{algorithmic}

\medskip{\bf Final Phase:} Every node $v$ uses the information it knows so far (see \Cref{def:G_u}) to compute $d''(u, v)$ for all nodes $u$, which is claimed to be a $(2+o(1))$ approximation of $\dist_{G, w}(u, v)$ (see \Cref{thm:cliqueAPSP}). 

\end{algorithm}

In the second phase, we pick $\Theta(\sqrt{n}\log n)$ nodes uniformly at random. Let $R$ be the set of these random nodes. We run the light-weight $h$-hop $t$-source \sssp algorithm (\Cref{algo:bounded-hop multi-source}) from these random nodes using $h=\sqrt{n}$ and $t=|R|$. By \Cref{thm:k-source h-hop sp}, we will finish in $\tilde O(|R|+h) = \tilde O(\sqrt{n})$ rounds with high probability. Moreover, since the shortest-path distance is reduced to $\sqrt{n}$, every node will know an $(1+o(1))$-approximate distance to {\em all} random nodes in $R$. Every node broadcasts these distances to nodes in $R$ to all other nodes. This takes $O(|R|)$ time since the network is fully connected.
In the final phase, every node $u$ uses these broadcasted distances and the distances it computes in the previous step (by simulating Dijkstra's algorithm) to compute the approximate distance between itself and other nodes. 
In particular, for any node $u$, consider the following graph $(G_u, w_u)$.  

\begin{definition}[Graph $(G_u, w_u)$]\label{def:G_u} Graph $(G_u, w_u)$ consists of the following edges. 
\begin{enumerate}
\item edges from $u$ to all other nodes $v$ of weight $w'(u, v)$, 
\item edges from $x\neq u$ to nodes $y\in S^k(x)$ of weight $w'(x, y)=\dist_{G, w}(x, y)$, and 
\item edges from every random node $r\in R$ to all other nodes $v$ of weight $d''(r, v)$. (Recall that $\dist_{G, w}(r, v)\leq d''(r, v)\leq (1+o(1))\dist_{G, w}(r, v)$.) 
\end{enumerate}
\end{definition}
Node $u$ will use $\dist_{G_u, w_u}(u, v)$ as an approximate distance of $\dist_{G_u, w}(u, v)$. We now show that this gives a $(2+o(1))$-approximate distance. 

\begin{lemma}[Approximation guarantee of \Cref{algo:clique APSP}]\label{thm:cliqueAPSP}
For every pair of nodes $u$ and $v$, $\dist_{G, w}(u, v)\leq \dist_{G_u, w_u}(u, v)\leq (2+o(1))\dist_{G, w}(u, v).$
\end{lemma}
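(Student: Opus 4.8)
The plan is to establish the two inequalities $\dist_{G,w}(u,v)\le\dist_{G_u,w_u}(u,v)$ and $\dist_{G_u,w_u}(u,v)\le(2+o(1))\dist_{G,w}(u,v)$ separately; I may assume $u\neq v$. For the lower bound I would first check that every edge $xy$ of $(G_u,w_u)$ has $w_u(xy)\ge\dist_{G,w}(x,y)$: for the edges of the second kind in \Cref{def:G_u} this holds with equality by construction; for the edges of the third kind it is the stated guarantee on $d''$; and for an edge of the first kind, $w'(u,y)$ is a minimum of quantities of the form $w(uy)$, $\dist_{G,w}(u,y)$, and $\dist_{G,w}(u,z)+w(zy)$ with $z\in S^k(u)$, each of which is $\ge\dist_{G,w}(u,y)$ by the triangle inequality. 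Hence any $u$--$v$ walk $x_0,\dots,x_t$ in $G_u$ has $\sum_i w_u(x_ix_{i+1})\ge\sum_i\dist_{G,w}(x_i,x_{i+1})\ge\dist_{G,w}(u,v)$, which gives the lower bound.

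For the upper bound, fix $u,v$, write $L=\dist_{G,w}(u,v)$, and let $P=\langle u=p_0,\dots,p_\ell=v\rangle$ be a shortest $u$--$v$ path in $(G,w)$. I would condition on one high-probability event: by the hitting-set/Chernoff argument already used for the random set $R$ of size $\tilde\Theta(\sqrt n)$, w.h.p.\ $R\cap S^k(x)\neq\emptyset$ for \emph{every} node $x$. The real engine of the argument is the observation that $u$ knows the \emph{exact} distance $w'(u,p_j)=\dist_{G,w}(u,p_j)$ not only for every $p_j\in P\cap S^k(u)$ (these are shortcut neighbours of $u$) but also for the first vertex $p_j$ of $P$ lying outside $S^k(u)$: since distances along a shortest path increase, membership in $S^k(u)$ is a prefix property, so the predecessor $p_{j-1}$ of that $p_j$ is $u$ or a member of $S^k(u)$, and the extra Phase~1 relaxation $w'(uv)\le\min_{z\in S^k(u)}\dist_{G,w}(u,z)+w(zv)$ (together with $w'(up_j)\le w(up_j)$ when $p_{j-1}=u$) forces $w'(u,p_j)\le\dist_{G,w}(u,p_{j-1})+w(p_{j-1}p_j)=\dist_{G,w}(u,p_j)$, while $\ge$ always holds by the lower-bound step.

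Now the case analysis. If $v\in S^k(u)$ or $u\in S^k(v)$, the shortcut construction already gives $w'(uv)=\dist_{G,w}(u,v)$, so the first-kind edge $uv$ settles it exactly. Otherwise let $\tau_x:=\max_{z\in S^k(x)}\dist_{G,w}(x,z)$ be the radius of $S^k(x)$ and split on how $\tau_u,\tau_v$ compare with $L/2$. If $\tau_u\le L/2$, pick $r\in R\cap S^k(u)$; then $\dist_{G,w}(u,r)\le L/2$, and the route $u\to r\to v$ — a first-kind edge of weight $\dist_{G,w}(u,r)$ followed by a third-kind edge of weight $d''(r,v)\le(1+o(1))\dist_{G,w}(r,v)\le(1+o(1))(L/2+L)$ — yields $\dist_{G_u,w_u}(u,v)\le L/2+(1+o(1))\tfrac{3L}{2}=(2+o(1))L$. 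If $\tau_v\le L/2$, the symmetric route $u\to r\to v$ through $r\in R\cap S^k(v)$ (a third-kind edge of weight $d''(r,u)$, then a second-kind edge of weight $\dist_{G,w}(r,v)$) again gives $(2+o(1))L$. Finally, if $\tau_u>L/2$ and $\tau_v>L/2$, let $p_i$ be the last vertex of $P$ with $\dist_{G,w}(u,p_i)\le L/2$ and $p_{i+1}$ its successor (both exist, as $\dist_{G,w}(u,u)=0$ and $\dist_{G,w}(u,v)=L>L/2$). Then $p_i\in S^k(u)\cup\{u\}$, so by the engine observation $u$ knows $w'(u,p_{i+1})=\dist_{G,w}(u,p_{i+1})$; moreover $\dist_{G,w}(v,p_{i+1})=L-\dist_{G,w}(u,p_{i+1})<L/2<\tau_v$, so $p_{i+1}\in S^k(v)\cup\{v\}$ and $p_{i+1}v$ is a second-kind edge of weight $\dist_{G,w}(p_{i+1},v)$. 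Hence the route $u\to p_{i+1}\to v$ has total weight $\dist_{G,w}(u,p_{i+1})+\dist_{G,w}(p_{i+1},v)=L$, i.e.\ exact; the boundary subcases $p_i=u$ (use $w'(up_1)\le w(up_1)=\dist_{G,w}(u,p_1)$) and $p_{i+1}=v$ (use the first-kind edge $uv$, now of weight $L$) are handled the same way.

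The main obstacle is Case~2c: arranging that the handover vertex $p_{i+1}$, which is pinned essentially at the midpoint of $P$, is simultaneously reachable from $u$ along a first-kind edge of \emph{exactly} known weight and a member of $S^k(v)$. The dichotomy $\tau_u,\tau_v$ versus $L/2$ is precisely what makes this go through: if either radius is at most $L/2$ a nearby random pivot gives the $(2+o(1))$ estimate, and if both exceed $L/2$ the mid-segment of $P$ is forced inside both $S^k(u)$ and $S^k(v)$, so the pivot-free route is exact. I expect the write-up effort to be concentrated in verifying the engine observation and these boundary cases; the only probabilistic content is the single event that $R$ hits every $S^k(x)$.
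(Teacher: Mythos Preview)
Your proposal is correct and follows essentially the same approach as the paper: both hinge on the ``engine observation'' that the extra Phase~1 relaxation makes $w'(u,p_j)=\dist_{G,w}(u,p_j)$ one step past the last path vertex in $S^k(u)$, and both fall back on a random pivot $r\in R\cap S^k(\cdot)$ for the $(2+o(1))$ estimate. The only difference is how the cases are carved up. The paper lets $x_i$ be the last path vertex in $S^k(u)$ and $x_j$ the first in $S^k(v)$, and splits on whether $j\le i+1$ (exact handover at $x_{i+1}$) or $j>i+1$ (then $\dist_{G,w}(u,x_{i+1})+\dist_{G,w}(v,x_{j-1})\le L$, so one of them is $\le L/2$ and a pivot in the corresponding $S^k$ works). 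You instead compare the radii $\tau_u,\tau_v$ directly to $L/2$: if either is small a pivot works, and if both exceed $L/2$ the midpoint of $P$ lands inside both balls, giving the exact handover. The two partitions are not identical but the underlying two-move argument (exact handover versus random pivot) is the same; your dichotomy is arguably a bit more symmetric to state.
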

\begin{proof}
It is clear that $\dist_{G, w}(u, v)\leq \dist_{G_u, w_u}(u, v)$ since $w_u(xy)\geq \dist_{G, w}(x, y)$ for every pair of nodes $x$ and $y$. It is left to prove that $\dist_{G_u, w_u}(u, v)\leq (2+o(1))\dist_{G, w}(u, v).$
Let 
$$P=\langle u=x_1, x_2, \ldots, x_k=v\rangle$$  
be a shortest $u$-$v$ path. Note that we can assume that $k\leq \sqrt{n}$ since the shortest-path diameter is $\sqrt{n}$. 
Let $x_i$ be the furthest node from $u$ that is in $S^k(u)\cap P$ and, similarly,  let $x_j$ be the furthest node from $v$ that is in $S^k(v)\cap P$; i.e. 
$$i=\arg\max_{i'} (x_{i'}\in S^k(u)\cap P) ~~~\mbox{and}~~~ j=\arg\min_{j'} (x_{j'}\in S^k(v)\cap P) \,.$$
Note that $x_1, \ldots, x_i$ are all in $S^k(u)$ since all nodes $x_1, \ldots, x_{i-1}$ are nearer to $u$ than $x_i$. Similarly, $x_j, \ldots, x_k$ are all in $S^k(v)$. Note further that $w'(u, x_{i+1})=\dist_{G, w}(u, x_{i+1})$ since Phase~1 guarantees \Cref{eq:cliqueAPSP one} which implies that
\begin{align*}
w'(u, x_{i+1}) &\leq \min_{z\in S^k(u)} \dist_{G, w}(u,z)+w(zx_{i+1})\\
&\leq \dist_{G, w}(u, x_i)+w(x_ix_{i+1})  &\mbox{(since $x_i\in S^k(u)$)}\\
&= \dist_{G, w}(u, x_{i+1})\,.
\end{align*}
\begin{figure}
\centering
\begin{subfigure}[b]{0.45\textwidth}
\centering
\includegraphics[page=1, clip=true, trim=0.95cm 10.5cm 12.3cm 2.2cm, width=\textwidth]{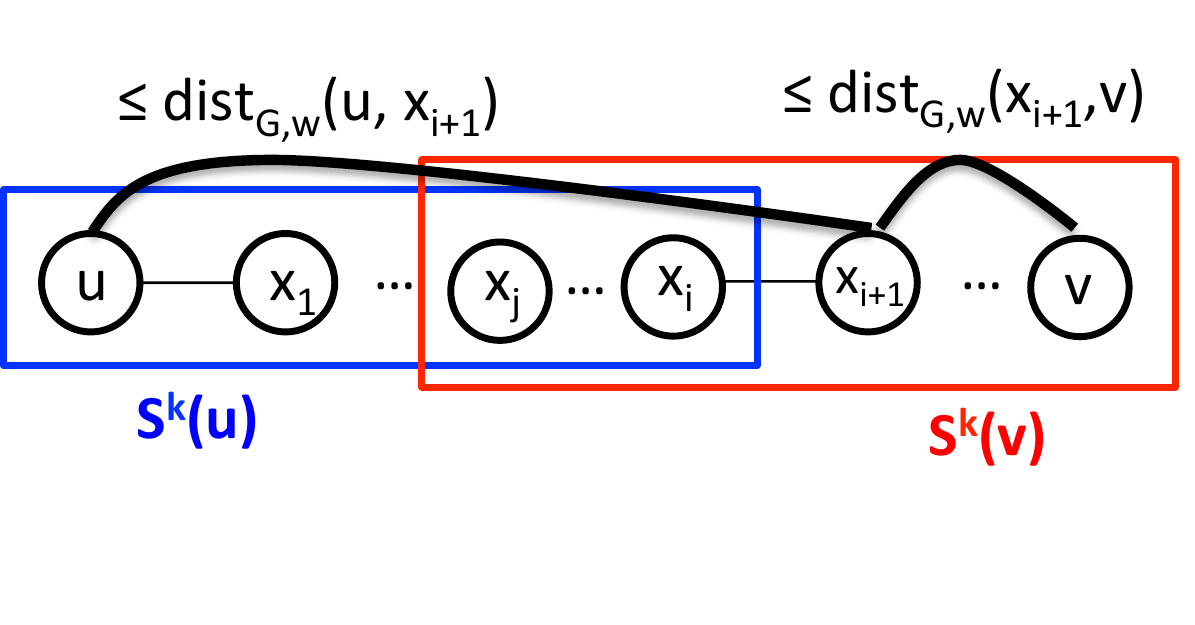}
\caption{Case 1: $j\leq i+1$}
\label{fig:case1}
\end{subfigure}
\begin{subfigure}[b]{0.45\textwidth}
\centering
\includegraphics[page=1, clip=true, trim=0.95cm 4cm 12.25cm 8cm, width=\textwidth]{pictures.pdf}
\caption{Case 2: $j> i+1$.}
\label{fig:case2}
\end{subfigure}
\caption{Outline of the proof of \Cref{thm:cliqueAPSP}}\label{fig:cliqueAPSP}
\end{figure}
We now consider two cases (see \Cref{fig:cliqueAPSP} for an outline). First, if $j\leq i+1$, then we have $x_{i+1}\in S^k(u)\cap S^k(v)$. This means that $w_u(x_{i+1}, v)=w'(x_{i+1}, v)=\dist_{G, w}(x_{i+1}, v)$. 
\begin{align*}
\dist_{G_u, w_u}(u, v) & \leq w_u(u, x_{i+1})+w_u(x_{i+1}, v)\\ 
& = w'(u, x_{i+1}) + w'(x_{i+1}, v)\\
& = \dist_{G, w}(u, x_{i+1}) + \dist_{G, w}(x_{i+1}, v)\\
& = \dist_{G, w}(u, v).
\end{align*}
%where the third inequality is because $x_i\in S^k(u)$ (which makes $w'(u, x_{i+1})= \dist_{G, w}(u, x_{i+1})$ as noted earlier ) and $x_{i+1}\in S^k(v)$. 
%
Now we consider the second case where $j>i+1$. %We will consider two subcases: when  $\dist_{G, w}(u, x_{i+1})\leq \dist_{G, w}(x_{i+1}, v)$ and  $\dist_{G, w}(u, x_{i+1})< \dist_{G, w}(x_{i+1}, v)$. The analyses of both cases are essentially the same, so we will only show the analysis for the first subcase. In this case, we have
%
%We will assume without loss of generality that $\dist_{G, w}(u, x_{i+1})\leq \dist_{G, w}(x_{i+1}, v)$; thus, 
%
In this case, we have that either
$$\dist_{G, w}(u, x_{i+1})\leq \dist_{G, w}(u, v)/2 ~~~\mbox{or}~~~ \dist_{G, w}(v, x_{j-1})\leq \dist_{G, w}(u, v)/2\,.$$  
Since the analyses for both cases are essentially the same, we will only show the analysis when $\dist_{G, w}(u, x_{i+1})\leq \dist_{G, w}(u, v)/2$.
Observe that, with high probability, there is a random node $r\in R$ that is in $S^k(u)$ since $|S^k(u)|\geq \sqrt{n}$ and $R$ consists of $\tilde \Theta(\sqrt{n})$ random nodes (see \Cref{thm:UllmanY}). Recall that, for every node $z$, 
$$w_u(r, z)\leq d''(r, z) \leq (1+o(1))\dist_{G, w}(r, z).$$ 
It follows that
\begin{align}
\dist_{G_u, w_u}(u, v) & \leq w_u(u, r)+w_u(r, v)\label{eq:cliqueAPSP1}\\ 
& \leq (1+o(1))(\dist_{G, w}(u, r) + \dist_{G, w}(r, v))\label{eq:cliqueAPSP2}\\
& \leq (1+o(1))(\dist_{G, w}(u, r) + (\dist_{G, w}(u, r)+\dist_{G, w}(u, v)))\label{eq:cliqueAPSP3}\\
& \leq (1+o(1))(2\dist_{G, w}(u, x_{i+1}) + \dist_{G, w}(u, v))\label{eq:cliqueAPSP4}\\
& \leq (2+o(1)) \dist_{G, w}(u, v).\label{eq:cliqueAPSP5}
\end{align}
\Cref{eq:cliqueAPSP3} is by triangle inequality. \Cref{eq:cliqueAPSP4} is because $r\in S_u$ and $x_{i+1}\notin S_u$. \Cref{eq:cliqueAPSP5} is because of the assumption that $\dist_{G, w}(u, x_{i+1})\leq \dist_{G, w}(u, v)/2$. 
\end{proof}

\section{Lower Bound for Approximating APSP (Proof of \Cref{observe:lower bound})}\label{sec:lower bound}

%\begin{theorem}
%Any approximation APSP algorithm requires $\Omega(n)$ time.
%\end{theorem}
%\begin{proof}(SKETCH)
%If we want to solve set disjointness then we can set edges on each star to be $0/1$ in such a way that if both $x[i]$ and $y[i]$ are one then $\dist(u_i, v_i)$ will be $0$. This means that {\em any} approximation algorithm will still get $\dist(u_i, v_i)=0$. Thus, we can check disjointness by checking if there is a pair of distance $0$.
%\end{proof}

\lowerbound*

%{\renewcommand{\thetheorem}{\ref{observe:lower bound}}
%\begin{observation}[Restated]
%Any $\poly(n)$-approximation algorithm for \apsp on an $n$-node weighted network $G$ requires $\Omega(\frac{n}{\log n})$ time. This lower bound holds even when the underlying network $G$ has diameter $\diam(G)=2$.
%%
%Moreover, for any $\alpha(n)=O(n)$, any $\alpha(n)$-approximation algorithm on an {\em unweighted} network requires $\Omega(\frac{n}{\alpha(n)\log n})$ time. 
%\end{observation}
%\addtocounter{theorem}{-1} % So that the \begin{theorem} above doesn't increase the counter
%} % \thetheorem is returned to its rightful definition outside of this group
\begin{proof}
Our proof simply formalizes the fact that a node needs to receive at least $n$ bits of information in order to know its distance to all other nodes. We start from the following {\em messag sending problem}: Alice receive a $\beta$-bit binary vector, denoted by $\langle x_1, \ldots, x_\beta\rangle$, where we set $\beta=n-2$. She wants to send this vector to Bob. Intuitively, to be sure that Bob gets the value of the vector correctly with a good probability, i.e. with probability at least $1-\epsilon$ for some small $\epsilon>0$, Alice has to send $\Omega(\beta)$ bits to Bob, regardless of what Bob sends to her. This fact can be formally proved in many ways (e.g., by using communication complexity lower bounds)
%; for example, if Alice can do this using $o(\beta)$ bits, then Alice and Bob can solve communication complexity problems like {\em set disjointness} using $o(\beta)$
%
and is true even in the quantum setting (see, e.g., Holevo's theorem \cite{Holevo73}). 

Now, let $\cA$ be an $\alpha(n)$-approximation $T$-time algorithm for weighted \apsp. We show that Alice can use $\cA$ to send her message to Bob using $O(T\log n)$ bits, as follows. Construct a graph $G$ consisting of $n=\beta+2$ nodes, denoted by $a_1, \ldots, a_\beta$, $a^*$ and $b$. There are edges between all nodes to $a^*$.  The weight of edge $a^*b$ is always $w(a^*b)=1$. Weight of every edge $a_ia^*$ is set by Alice: if $x_i=1$ then she sets weight of $a_ia^*$ to $w(a_ia^*)=1$; otherwise she sets it to $w(a_ia^*)=2\alpha(n)$. Then, Alice simulates $\cA$ on $a_1, \ldots, a_\beta$ and $a^*$, and Bob simulates $\cA$ on $b$. If $\cA$ wants to send any message from $a^*$ to $b$, Alice will send this message to Bob so that Bob can continue simulating $b$. Similarly, if  $\cA$ wants to send any message from $b$ to $a^*$, Bob will send this message to Alice so that Alice can continue simulating $a^*$.
If $\cA$ finishes in $T$ rounds, then Alice will send at most $O(T\log n)$ bits to Bob in total. 

Observe that, for any $i$, if $x_i=1$ then $\dist_{G, w}(a_i, b)=2$; otherwise, $\dist_{G, w}(a_i, b)=2\alpha(n)+1$. Since $\cA$ is $\alpha(n)$ approximation, $\cA$ must answer $\tilde \dist_{G, w}(a_i, b)\leq 2\alpha(n)$ if $x_i=1$; otherwise, $\cA$ must answer $\tilde \dist_{G, w}(a_i, b)\geq 2\alpha(n)+1$. Since $\tilde \dist_{G, w}(a_i, b)$ is known to $b$, Bob can get the value of $\tilde \dist_{G, w}(a_i, b)$ by reading it from $b$ (which he is simulating). Then he can reconstructs $x_i$. Thus, Bob can reconstruct all bits $x_1, \ldots, x_\beta$ after getting $O(T\log n)$ bits from Alice.
The lower bound of the message sending problem thus implies that $T=\Omega(\beta/\log n)=\Omega(n/\log n)$. 

Note that since the highest weight we can put on an edge is $\poly(n)$, we require that $\alpha(n)\leq \poly(n)$. 
We use the same argument for the unweighted case, but this time we use $\beta=n/\alpha(n)$ and replace an edge of weight $2\alpha(n)$ by a path of length $2\alpha(n)$. 
\end{proof}

Note that in the proof above we show a lower bound for computing distances between all pairs of nodes. Since the lower bound graph is a star, the routing problem is trivial (since there is always one option to send a message). We can easily modify the above graph to give the same lower bound for the routing problem on weighted graphs: First, instead of using weight $2\alpha(n)$ in the graph above, use weight $2\alpha^2(n)+\alpha(n)$ instead. Second, add a new node $c$ and edges of weight $2\alpha(n)$ between $c$ and all nodes $a_1$ and an edge of weight $1$ between $c$ and $b$. Observe that if $x_i=1$, then we have to route a message through $a^*$, giving a distance of $2$ (while routing through $c$ gives a distance of $2\alpha(n)+1$. If $x_i=0$, we should route through $c$ which gives a distance of $2\alpha(n)+1$ since routing through $a^*$ will cost $2\alpha^2(n)+\alpha(n)+1$.

\danupon{To do: Mention that proving lower bound on clique is quite impossible (by Oshman's talk)}

\section{Open Problems}

%\subsection{Single-Source Shortest Paths}

The main question left by our \sssp algorithm is the following. 
\begin{problem}
Close the gap between the upper bound of $\tilde O(n^{1/2}\diam^{1/4})$ presented in this paper and the lower bound of $\tilde \Omega(n^{1/2})$ presented in \cite{DasSarmaHKKNPPW11} for $(1+\epsilon)$-approximating the single-source shortest paths problem on general networks. 
\end{problem}

Improving the current upper bound is important since there are many problems that can be potentially solved by using the same technique. Moreover, giving a lower bound in the form  $\tilde \Omega(n^{1/2}\diam^{\delta})$ for some $\delta>0$ will be quite surprising since such lower bound has not been observed before. 
%
%We believe that answering to the following questions will help improve our understanding on \sssp: (1) 
%
It should also be fairly interesting to refine our upper bound to achieve a $\tilde O(n^{1/2}\diam^{\epsilon})$-time $O(1/\epsilon)$-approximation algorithm for any $\delta>0$. 
Another question that should be very interesting is understanding the exact case: 
\begin{problem}
Can we solve \sssp {\em exactly} in sublinear-time? 
\end{problem}
It is also interesting to solve \apsp {\em exactly} in linear-time (recall that sublinear-time is not possible). In some settings, an exact algorithm for computing shortest paths is crucial; e.g. some Internet protocols such as OSPF and IS-IS use edge weights to control the traffic and using an approximate shortest paths with this protocol is unacceptable\footnote{We thank Mikkel Thorup for pointing out this fact}. 
%
%are widely used Internet protocols where the traffic follows shortest pahts, typically with even splitting in connection with ties. The weights are parameters controlled by the network operator. In connection with ties, it is not uncommon for the network operator to increase a weight by, say, 1, just to off-load a heavily used link. Approximate shortest paths would be totally unacceptable.
%
The next question is a generalization of our \sssp: 
\begin{problem}[Asymmetric \sssp]
How fast can we solve \sssp on networks whose edge weights could be {\em asymmetric}, i.e. if we think of each edge $uv$ as two directed edges $\overrightarrow{uv}$ and $\overrightarrow{vu}$, it is possible that $w(\overrightarrow{uv})\neq w(\overrightarrow{vu})$. 
\end{problem}
Note that we are particularly interested in the case where weights do not affect communication; in other words, if $u$ can send a message to $v$, then $v$ can also send a message to $u$. Also note that our light-weight \sssp algorithm can be used to solve this problem (but not the shortest-path diameter reduction technique). By adjusting parameters appropriately, we can $(1+\epsilon)$-approximate this problem in $\tilde O(\min(n^{2/3}, n^{1/2}D^{1/2}))$ time. 
In fact, improving this running time for the following very special case seems challenging already:
%\begin{quote}
\begin{problem}[$s$-$t$ Reachability Problem]
%Assuming that $w(\overrightarrow{uv})\in \{0, 1\}$ for every directed edge $w(\overrightarrow{uv})$, we want to know whether $\dist_G(s, t)$ is zero or not, for some nodes $s$ and $t$. Can we solve this faster than $\tilde O(\min(n^{2/3}, n^{1/2}D^{1/2}))$ time? 
Given a directed graph $G$ and two special nodes $s$ and $t$, we want to know whether there is a directed path from $s$ to $t$. The communication network is the underlying undirected graph; i.e. the communication can be done independent of edge directions and the diameter $\diam$ is defined to be the diameter of the underlying undirected graph. Can we answer this question in $\tilde O(\sqrt{n}+\diam)$ time?
\end{problem}
%\end{quote} 
%
This problem shows limitations of the techniques presented in this paper, and we believe that solving it will give a new insight into solving all above open problems. Our last set of questions: %is breaking the $\tilde O(n^{1/2})$-time 
\begin{problem}
Can we improve the $\tilde O(n^{1/2})$-time upper bound for \sssp on fully-connected networks while keeping the approximation ratio small (say, at most two)? Is it possible to prove a nontrivial $\omega(1)$ lower bound?
\end{problem}
Note that the last question was asked earlier by Elkin \cite{Elkin04}. 

%\subsection{Other Problems}

\section{Acknowledgement}
The sublinear-time \sssp algorithm was  inspired by the discussions with Jittat Fakcharoenphol and Jakarin Chawachat, who refused to co-author this paper. Several techniques borrowed from dynamic graph algorithms benefit from many intensive discussions with Sebastian Krinninger and Monika Henzinger.
I also would like to thank  Radhika Arava and Peter Robinson for explaining the algorithm of Baswana and Sen \cite{BaswanaS07}\danupon{Correct?} to him and Gopal Pandurangan, Chunming Li, Anisur Rahaman, David Peleg, Atish Das Sarma, Parinya Chalermsook, Bundit Laekhanukit, Boaz Patt-Shamir, Shay Kutten, Christoph Lenzen, Stephan Holzer, Mohsen Ghaffari, Nancy Lynch, and Mikkel Thorup, for discussions, comments, and pointers to related results. I also thank all reviewers of STOC 2014 for many thoughtful comments.

\bibliographystyle{alpha}
\bibliography{references-APSP}  % sigproc.bib is the name of the Bibliography in this case

%%%%%%%%%%%%%%%%%%%%%%%%%%%%%%%%%%%%%%%%%%%%%%%%%%%%%%%%%%%%%%%%%%
% APPENDIX
%%%%%%%%%%%%%%%%%%%%%%%%%%%%%%%%%%%%%%%%%%%%%%%%%%%%%%%%%%%%%%%%%%

%\newpage
%\appendix
%
%\section*{Appendix}
%
%\input{approx_algo}
%
%%\section{Proof of Holzer and Wattenhofer's S-AP algorithm (Theorem XXX)}
%%
%%Recall what Algorithm 2 in HW does. Then say what modification we need (i. e., we will stop when we reach node of distance $t$, for some threshold $t$). Then use two facts mentioned in the proof to proof that this modification is fine.
%%
%%To do later: Use the idea of random rank algorithm (delay path) to give an alternative proof.
%
%\input{omitted_proofs}

\end{document}